\documentclass[12pt,english]{article}
\usepackage[T1]{fontenc}
\usepackage[utf8]{inputenc}
\usepackage{babel}
\usepackage{float}
\usepackage{mathtools}
\usepackage{enumitem}
\usepackage{tikz}
\usepackage{amsmath}
\usepackage{amsthm}
\usepackage{geometry}
\usepackage{setspace}
\usepackage[]
 {hyperref}

\makeatletter

\providecolor{lyxadded}{rgb}{0,0,1}
\providecolor{lyxdeleted}{rgb}{1,0,0}
\usetikzlibrary{calc}
\newcommand{\lyxobjectsout}[1]{%
  \bgroup%
  \color{lyxdeleted}%
  \tikz{
    \node[inner sep=0pt,outer sep=0pt](lyxdelobj){#1};
    \draw($(lyxdelobj.south west)+(2em,.5em)$)--($(lyxdelobj.north east)-(2em,.5em)$);
  }
  \egroup%
}
\DeclareRobustCommand{\lyxdisplayobjdeleted}[4][]{%
  \ifx#4\empty\else%
     \texorpdfstring{\leavevmode\\\lyxobjectsout{\parbox{\linewidth}{#4}}}{}%
  \fi%
}
\DeclareRobustCommand{\lyxudisplayobjdeleted}[4][]{%
  \ifx#4\empty\else%
     \texorpdfstring{\leavevmode\\\raisebox{-\belowdisplayshortskip}{%
                \lyxobjectsout{\parbox[b]{\linewidth}{#4}}}}{}%
     \leavevmode\\%
  \fi%
}

\theoremstyle{plain}
\newtheorem{thm}{\protect\theoremname}
\theoremstyle{plain}
\newtheorem{cor}{\protect\corollaryname}
\theoremstyle{remark}
\newtheorem{rem}{\protect\remarkname}
\theoremstyle{plain}
\newtheorem{lem}{\protect\lemmaname}

\usepackage{bm}                  % 支持加粗数学符号，如 \bm{x}
\usepackage{pdflscape}
\usepackage[bottom]{footmisc}
\usepackage{microtype} %margin
\usepackage{enumitem}
\setlist[enumerate]{label=(\roman*), itemsep=0pt}

\usepackage{indentfirst}

\renewcommand{\tilde}[1]{\widetilde{#1}}

\usepackage{siunitx}            % 数值与单位排版
\usepackage{booktabs}
\usepackage{threeparttable}
\usepackage{caption}
\usepackage{hyperref}
\hypersetup{
  colorlinks=true,
  linkcolor=red,
  citecolor=[RGB]{0,0,139},
  urlcolor=black
}
\newcommand{\xrefblack}[1]{\hyperref[#1]{\textcolor{black}{\ref*{#1}}}}

\renewenvironment{proof}[1][\proofname]{\par
    \pushQED{\qed}%
    \normalfont \topsep12\p@\@plus4\p@\@minus3\p@\relax
    \trivlist
    \item\relax
          {\bfseries
      #1\@addpunct{.}}\hspace\labelsep\ignorespaces
  }{%
    \popQED\endtrivlist\@endpefalse
    \addvspace{13pt plus 4pt minus 3pt}
  }

\def\thm@space@setup{%
  \thm@preskip=12pt plus 4pt minus 3pt
  \thm@postskip=1.05\thm@preskip
}

\def\th@remark{%
  \thm@headfont{\bfseries}
  \normalfont
  \thm@preskip=12pt plus 4pt minus 3pt
  \thm@postskip\thm@preskip
}

\allowdisplaybreaks % 允许公式跨页

\usepackage{ragged2e} % gives \justifying

\usepackage{titlesec}
\titlelabel{\thetitle.\quad} % 此命令为所有标题标签（chapter, section等）添加句点
\titleformat{\chapter}[display]{\normalfont\huge\bfseries}{\chaptertitlename\ \thechapter.}{20pt}{\Huge}
\titleformat{\section}{\normalfont\Large\bfseries}{\thesection.}{0.75em}{}
\titleformat{\subsection}{\normalfont\large\bfseries}{\thesubsection.}{0.75em}{}

\usepackage{biblatex-econ}  

\makeatother

\usepackage{biblatex}
\providecommand{\corollaryname}{Corollary}
\providecommand{\lemmaname}{Lemma}
\providecommand{\remarkname}{Remark}
\providecommand{\theoremname}{Theorem}

\begin{document}
\global\long\def\a{\alpha}%
\global\long\def\b{\beta}%
\global\long\def\g{\gamma}%
\global\long\def\d{\delta}%
\global\long\def\e{\epsilon}%
\global\long\def\l{\lambda}%
\global\long\def\t{\theta}%
\global\long\def\o{\omega}%
\global\long\def\s{\sigma}%
\global\long\def\G{\Gamma}%
\global\long\def\D{\Delta}%
\global\long\def\L{\Lambda}%
\global\long\def\T{\Theta}%
\global\long\def\O{\Omega}%
\global\long\def\R{\mathbb{R}}%
\global\long\def\N{\mathbb{N}}%
\global\long\def\H{\mathbb{H}}%
\global\long\def\Q{\mathbb{Q}}%
\global\long\def\I{\mathbb{I}}%
\global\long\def\P{P}%
\global\long\def\E{E}%
\global\long\def\B{\mathbb{\mathbb{B}}}%
\global\long\def\S{\mathbb{\mathbb{S}}}%
\global\long\def\V{\mathbb{\mathbb{V}}\text{ar}}%
 
\global\long\def\cE{\mathbb{\mathcal{E}}}%
 
\global\long\def\X{{\bf X}}%
\global\long\def\cX{\mathscr{X}}%
\global\long\def\cY{\mathscr{Y}}%
\global\long\def\cA{\mathscr{A}}%
\global\long\def\cB{\mathscr{B}}%
\global\long\def\cM{\mathscr{M}}%
\global\long\def\cN{\mathcal{N}}%
\global\long\def\cG{\mathcal{G}}%
\global\long\def\cC{\mathcal{C}}%
\global\long\def\sp{\,}%
\global\long\def\es{\emptyset}%
\global\long\def\mc#1{\mathscr{#1}}%
\global\long\def\ind{\mathbf{\mathbbm1}}%
\global\long\def\indep{\perp}%
\global\long\def\any{\forall}%
\global\long\def\ex{\exists}%
\global\long\def\p{\partial}%
\global\long\def\cd{\cdot}%
\global\long\def\Dif{\nabla}%
\global\long\def\imp{\Rightarrow}%
\global\long\def\iff{\Leftrightarrow}%
\global\long\def\up{\uparrow}%
\global\long\def\down{\downarrow}%
\global\long\def\arrow{\rightarrow}%
\global\long\def\rlarrow{\leftrightarrow}%
\global\long\def\lrarrow{\leftrightarrow}%
\global\long\def\abs#1{\left|#1\right|}%
\global\long\def\norm#1{\left\Vert #1\right\Vert }%
\global\long\def\rest#1{\left.#1\right|}%
\global\long\def\bracket#1#2{\left\langle #1\middle\vert#2\right\rangle }%
\global\long\def\sandvich#1#2#3{\left\langle #1\middle\vert#2\middle\vert#3\right\rangle }%
\global\long\def\turd#1{\frac{#1}{3}}%
\global\long\def\ellipsis{\textellipsis}%
\global\long\def\sand#1{\left\lceil #1\right\vert }%
\global\long\def\wich#1{\left\vert #1\right\rfloor }%
\global\long\def\sandwich#1#2#3{\left\lceil #1\middle\vert#2\middle\vert#3\right\rfloor }%
\global\long\def\abs#1{\left|#1\right|}%
\global\long\def\norm#1{\left\Vert #1\right\Vert }%
\global\long\def\rest#1{\left.#1\right|}%
\global\long\def\inprod#1{\left\langle #1\right\rangle }%
\global\long\def\ol#1{\overline{#1}}%
\global\long\def\ul#1{\underline{#1}}%
\global\long\def\td#1{\tilde{#1}}%
\global\long\def\bs#1{\boldsymbol{#1}}%
\global\long\def\upto{\nearrow}%
\global\long\def\downto{\searrow}%
\global\long\def\pto{\stackrel{p}{\to}}%
\global\long\def\dto{\stackrel{d}{\to}}%
\global\long\def\asto{\rightarrow_{a.s.}}%
\global\long\def\gto{\rightarrow}%
\global\long\def\fto{\Rightarrow}%
\global\long\def\Tr{\mathrm{Tr}}%
\global\long\def\cov{\mathrm{cov}}%
\global\long\def\var{\mathrm{var}}%
\global\long\def\plim{\mathrm{plim}}%
\global\long\def\as{\mathrm{a.s.}}%
\global\long\def\diag{\mathrm{diag}}%

\title{Initial-Condition-Robust Inference \\in Autoregressive Models}
\author{Donald W. K. Andrews\thanks{Andrews: Department of Economics and Cowles Foundation, Yale University,
\protect\href{mailto:donald.andrews@yale.edu}{donald.andrews@yale.edu}.} \and Ming Li\thanks{Li: Department of Economics and Risk Management Institute, National
University of Singapore, \protect\href{mailto:mli@nus.edu.sg}{mli@nus.edu.sg}.} \and Yapeng Zheng\thanks{Zheng: Department of Economics, Chinese University of Hong Kong, \protect\href{http://YapengZheng@link.cuhk.edu.hk}{yapengzheng@link.cuhk.edu.hk}.}}
\maketitle
\begin{abstract}
This paper considers confidence intervals (CIs) for the autoregressive
(AR) parameter in an AR model with an AR parameter that may be close
or equal to one. Existing CIs rely on the assumption of a stationary
or fixed initial condition to obtain correct asymptotic coverage and
good finite sample coverage. When this assumption fails, their coverage
can be quite poor. In this paper, we introduce a new CI for the AR
parameter whose coverage probability is completely robust to the initial
condition, both asymptotically and in finite samples. This CI pays
only a small price in terms of its length when the initial condition
is stationary or fixed. The new CI also is robust to conditional heteroskedasticity
of the errors.

\vspace{0.4in}

\emph{Keywords}: Asymptotic size, autoregressive model, confidence
set, initial condition, robustness. \medskip

\emph{JEL Classification Numbers}: C10, C12.
\end{abstract}
\thispagestyle{empty}

\newpage{}

\setcounter{page}{1}

\section{Introduction}

We consider an AR model with an AR parameter that may take values
close or equal to one. Existing CIs in the literature for the AR parameter
assume that the initial condition is stationary, zero, or fixed, e.g.,
see \textcite{stock1991confidence}, \textcite{andrews1993exactly},
\textcite{andrews1994approximately}, \textcite{hansen1999grid},
\textcite{elliott2001confidence}, \textcite{mikusheva2007uniform},
and \textcite[AG14 hereafter]{andrews2014conditional}. If the initial-condition
assumption is not satisfied, then these CIs do not have correct asymptotic
coverage probabilities and their finite sample coverage probabilities
can be poor. For example, simulations reported below show that the
nominal 95\% AG14 CI has finite sample coverage probabilities ranging
from 24.1\% to 93.5\% across 50 cases with highly variable initial
conditions and different types of conditional heteroskedasticity of
the error. A quarter of these CPs are 79.0\% or less. 

To circumvent the sensitivity of existing CIs to the initial condition,
this paper introduces a new initial-condition-robust (ICR) CI whose
finite sample CPs do not depend on the initial condition. These CIs
are easy to compute and do not depend on any tuning parameters. We
show that the ICR CI has correct asymptotic size in a uniform sense
under conditions that allow for an arbitrary initial condition and
conditional heteroskedasticity. Simulations show that the finite sample
CPs of the nominal 95\% ICR CI are quite good. They lie between 93.5\%
and 95.0\% in the cases considered. 

Like the other CIs referenced above, the ICR CI is constructed by
inverting hypothesis tests concerning the value of the AR parameter.
For all of these CIs, the tests are based on a t statistic that depends
on a least squares (LS) estimator of the AR parameter. In contrast
to existing CIs, the LS estimator used by the ICR tests contains an
additional regressor that eliminates the effect of the initial condition
under the null hypothesis. (For the definition of this regressor,
see Section \ref{sec:Model-and-ICR} below.) In consequence, the ICR
CI has a CP that is invariant to the value of the initial condition.
However, the length of the ICR CI can be effected by the initial condition.
Simulations show that the ICR CI is slightly shorter when the initial
condition is highly variable than when it is stationary or fixed.\footnote{The length of the ICR CI varies mostly with the value of the AR parameter.
It is shortest for AR parameters near one. This is true of existing
CIs as well.}

In scenarios where existing CIs are asymptotically valid (i.e., the
initial condition is stationary or fixed), the ICR CI pays a price
in terms of its length compared to existing CIs. This occurs because
the ICR LS estimator has a higher variance due to the additional regressor
that is included in the LS regression. However, simulations show that
the effect is fairly small. Across 50 scenarios with stationary or
fixed initial conditions and i.i.d., GARCH, or ARCH errors, the ratio
of the expected length of the ICR CI compared to that of the AG14
CI is found to vary between 1.00 and 1.11. On average, the ICR CI
is 3.5\% longer than the AG14 CI in these scenarios.

Based on the ICR CI, we also introduce an asymptotically median-unbiased
interval estimator (MUE) of the AR parameter.

The AR model considered here has been applied in the literature to
exchange rate, commodity and stock prices, and other economic time
series, e.g., see \textcite{kim1993unit}.

This paper is organized as follows. Section \ref{sec:Model-and-ICR}
specifies the AR model and defines the ICR CI. Section \ref{sec:MUE}
defines the median-unbiased interval estimator. Section \ref{sec:Monte-Carlo-Simulations}
provides simulation results concerning CPs and average lengths of
the ICR and AG14 CIs and absolute median biases of the ICR\ MUE.
Section \ref{sec:Asymptotic-Results} provides asymptotic results
for the ICR CI. The Supplemental Material describes how the critical
values were computed, proves Theorems \ref{thm:ICR-CP} and \ref{thm:Jh-convergence},
and provides some additional simulation results. 

\section{Model and ICR Confidence Interval}\label{sec:Model-and-ICR}

We consider the AR(1) model with conditional-heteroskedasticity studied
in AG14:
\begin{align}
Y_{i} & =\mu+Y_{i}^{*}\quad\text{for \ensuremath{i=0,1,\dots,n} and}\nonumber \\
Y_{i}^{*} & =\rho Y_{i-1}^{*}+U_{i}\quad\text{for }i=1,2,\dots,n,\label{eq:model}
\end{align}
where the observed data are $\{Y_{0},\dots,Y_{n}\}$, $\rho\in[-1+\varepsilon,1]$
for some $0<\varepsilon<2$, and $\{U_{i}:i=1,2,\dots\}$ are stationary
and ergodic under the distribution $F$, with conditional mean $0$
given a $\sigma$-field $\mathcal{G}_{i-1}$ for which $U_{j}\in\mathcal{G}_{i}$
for all $j\leq i$, conditional variance $\sigma_{i}^{2}=\E_{F}(U_{i}^{2}|\mathcal{G}_{i-1})$,
and unconditional variance $\sigma_{U}^{2}\in(0,\infty)$. A detailed
description of the parameter space $\L$ is provided in Section \ref{subsec:Parameter-space}.

By iterative substitution in (\ref{eq:model}), we have 
\begin{align}
Y_{i} & =\mu+\rho^{i-1}\cdot\rho\cdot Y_{0}^{*}+\sum_{j=1}^{i}\rho^{i-j}U_{j}\quad\text{for }i=1,\dots,n.\label{eq:representation-ar1}
\end{align}
 Let $Y$, $U$, and $X_{1}$ be $n$-vectors whose $i$-th elements
are $Y_{i}$, $U_{i}$, and $Y_{i-1}$, respectively. Let $X_{2}(\rho)$
be an $n\times2$ matrix whose $i$-th row is $(1,\rho^{i-1})$ when
$-1<\rho<1$.\footnote{Following the convention $0^{0}=1$, we use $\rho^{i-1}$ in the regression
instead of $\rho^{i}$ to avoid perfect multicollinearity when $\rho=0$.} When $\rho=1$, we define $X_{2}(\rho)$ to be an $n\times2$ matrix
whose $i$-th row is $(1,i)$. Let $X(\rho)=[X_{1}:X_{2}(\rho)]$,
$P_{X(\rho)}=X(\rho)(X(\rho)'X(\rho))^{-1}X(\rho)'$, and $M_{X(\rho)}=I_{n}-P_{X(\rho)}$.
Let $\widehat{U}_{i}$ denote the $i$-th element of the residual
vector $M_{X(\rho)}Y$. Let $p_{ii}$ denote the $i$-th diagonal
element of $P_{X(\rho)}$, and define $p_{ii}^{*}=\min\{p_{ii},n^{-1/2}\}$.
Let $\Delta$ be a diagonal $n\times n$ matrix whose $i$-th diagonal
element is $\widehat{U}_{i}/(1-p_{ii}^{*})$. Then, the ICR LS estimator
$\widehat{\rho}_{n}(\rho)$, and the HC5 variance estimator $\widehat{\sigma}_{n}^{2}(\rho)$
are defined as 
\begin{align}
\widehat{\rho}_{n}(\rho)= & \ (X_{1}'M_{X_{2}(\rho)}X_{1})^{-1}X_{1}'M_{X_{2}(\rho)}Y\text{ and}\nonumber \\
\widehat{\sigma}_{n}^{2}(\rho)= & \ (n^{-1}X_{1}'M_{X_{2}(\rho)}X_{1})^{-1}(n^{-1}X_{1}'M_{X_{2}(\rho)}\Delta^{2}M_{X_{2}(\rho)}X_{1})(n^{-1}X_{1}'M_{X_{2}(\rho)}X_{1})^{-1},\label{eq:ICR_sigma_hat}
\end{align}
respectively.

Let 
\begin{equation}
T_{n}(\rho)=\frac{n^{1/2}(\widehat{\rho}_{n}(\rho)-\rho)}{\widehat{\sigma}_{n}^{2}(\rho)}.\label{eq:t-stat(rho)}
\end{equation}
For suitable sequences $\{\rho_{n}\}_{n\geq1}$ such that $n(1-\rho_{n})\to h\in[0,\infty]$,
we show that 
\begin{equation}
T_{n}(\rho_{n})\dto J_{h},\label{eq:asym-distribution}
\end{equation}
where $J_{h}$ is defined in Section \ref{subsec:Asymptotic-Distribution-Jh}
below. Table \ref{tab:ICR-quantiles} provides the quantiles $c_{h}(\alpha/2)$
and $c_{h}(1-\alpha/2)$ of the distribution of $J_{h}$ for $\alpha=.05$
and $.1$, which are the critical values employed by the ICR CI.

The nominal $1-\alpha$ equal-tailed two-sided ICR CI for $\rho$
is 
\begin{equation}
CI_{\text{ICR},n}:=\left\{ \rho\in[-1+\epsilon,1]:c_{h}(\alpha/2)\leq T_{n}(\rho)\leq c_{h}(1-\alpha/2)\ \text{for}\ h=n(1-\rho)\right\} ,\label{eq:ICR_CI}
\end{equation}
which can be computed by taking a fine grid of values $\rho\in[-1+\epsilon,1]$
and comparing $T_{n}(\rho)$ to $c_{h}(\alpha/2)$ and $c_{h}(1-\alpha/2)$.
Given these critical values, computation of the CIs is fast. 

\begin{table}[t]
\centering
\caption{Quantiles of $J_{h}$ for Use with 90\% and 95\% Equal-Tailed Two-Sided
CIs and MUEs}\label{tab:ICR-quantiles}

\centering
\scriptsize
\begin{tabular}{>{\centering\arraybackslash}m{.98cm}ccccccccccccc}
\toprule 
\multicolumn{14}{c}{{Values of $c_{h}\left(\alpha\right)$, the $\alpha^{\text{th}}$ Quantile of $J_{h}$, for Use with 90\% and 95\% Equal-Tailed Two-Sided CI's and MUE's}}\tabularnewline
\midrule
{$h$} & {0} & {.2} & {.4} & {.6} & {.8} & {1} & {1.4} & {1.8} & {2.2} & {2.6} & {3} & {3.4} & {3.8}\tabularnewline
\midrule
{$c_{h}\left(.025\right)$} & {-3.66} & {-3.63} & {-3.60} & {-3.56} & {-3.54} & {-3.52} & {-3.46} & {-3.40} & {-3.36} & {-3.31} & {-3.27} & {-3.23} & {-3.19}\tabularnewline
{$c_{h}\left(.05\right)$} & {-3.41} & {-3.38} & {-3.35} & {-3.31} & {-3.28} & {-3.25} & {-3.20} & {-3.14} & {-3.08} & {-3.04} & {-3.00} & {-2.95} & {-2.91}\tabularnewline
{$c_{h}\left(.5\right)$} & {-2.18} & {-2.13} & {-2.09} & {-2.04} & {-1.99} & {-1.95} & {-1.86} & {-1.78} & {-1.70} & {-1.63} & {-1.57} & {-1.50} & {-1.45}\tabularnewline
{$c_{h}\left(.95\right)$} & {-.94} & {-.87} & {-.80} & {-.74} & {-.68} & {-.62} & {-.50} & {-.39} & {-.29} & {-.19} & {-.11} & {-.03} & {.05}\tabularnewline
{$c_{h}\left(.975\right)$} & {-.65} & {-.59} & {-.52} & {-.45} & {-.38} & {-.32} & {-.21} & {-.08} & {.01} & {.11} & {.19} & {.28} & {.35}\tabularnewline
\midrule
{$h$} & {4.2} & {4.6} & {5} & {6} & {7} & {8} & {9} & {10} & {11} & {12} & {13} & {14} & {15}\tabularnewline
\midrule
{$c_{h}\left(.025\right)$} & {-3.16} & {-3.12} & {-3.09} & {-3.02} & {-2.97} & {-2.90} & {-2.87} & {-2.82} & {-2.79} & {-2.75} & {-2.73} & {-2.71} & {-2.69}\tabularnewline
{$c_{h}\left(.05\right)$} & {-2.87} & {-2.83} & {-2.80} & {-2.72} & {-2.66} & {-2.61} & {-2.56} & {-2.52} & {-2.48} & {-2.45} & {-2.42} & {-2.40} & {-2.38}\tabularnewline
{$c_{h}\left(.5\right)$} & {-1.39} & {-1.34} & {-1.30} & {-1.20} & {-1.11} & {-1.04} & {-.99} & {-.93} & {-.89} & {-.85} & {-.82} & {-.78} & {-.76}\tabularnewline
{$c_{h}\left(.95\right)$} & {.11} & {.18} & {.24} & {.36} & {.46} & {.55} & {.61} & {.68} & {.74} & {.78} & {.81} & {.84} & {.88}\tabularnewline
{$c_{h}\left(.975\right)$} & {.41} & {.48} & {.54} & {.66} & {.77} & {.86} & {.92} & {.99} & {1.05} & {1.08} & {1.12} & {1.15} & {1.20}\tabularnewline
\midrule
{$h$} & {20} & {25} & {30} & {40} & {50} & {60} & {70} & {80} & {90} & {100} & {200} & {300} & {500}\tabularnewline
\midrule
{$c_{h}\left(.025\right)$} & {-2.59} & {-2.53} & {-2.47} & {-2.41} & {-2.36} & {-2.32} & {-2.30} & {-2.27} & {-2.26} & {-2.25} & {-2.16} & {-2.13} & {-2.09}\tabularnewline
{$c_{h}\left(.05\right)$} & {-2.28} & {-2.22} & {-2.15} & {-2.09} & {-2.05} & {-2.01} & {-1.99} & {-1.96} & {-1.94} & {-1.94} & {-1.84} & {-1.81} & {-1.78}\tabularnewline
{$c_{h}\left(.5\right)$} & {-.65} & {-.58} & {-.52} & {-.45} & {-.41} & {-.37} & {-.34} & {-.32} & {-.30} & {-.28} & {-.20} & {-.16} & {-.13}\tabularnewline
{$c_{h}\left(.95\right)$} & {.99} & {1.06} & {1.12} & {1.19} & {1.24} & {1.28} & {1.30} & {1.32} & {1.34} & {1.36} & {1.45} & {1.48} & {1.52}\tabularnewline
{$c_{h}\left(.975\right)$} & {1.30} & {1.38} & {1.43} & {1.50} & {1.55} & {1.59} & {1.62} & {1.64} & {1.66} & {1.68} & {1.76} & {1.79} & {1.83}\tabularnewline
\bottomrule
\end{tabular}

\end{table}

\section{ICR Median-Unbiased Interval Estimator}\label{sec:MUE}

By definition, an estimator $\widehat{\theta}_{n}$ of a parameter
$\theta$ is median unbiased if $\P(\widehat{\theta}_{n}\geq\theta)\geq1/2$
and $\P(\widehat{\theta}_{n}\leq\theta)\geq1/2$. In this section,
we introduce an ICR MUE of $\rho$ that satisfies an analogous condition.
Also, with probability close to one, this interval estimator is a
point.\footnote{If the median of the asymptotic null distribution of the $t$-statistic
is strictly decreasing in $\rho$, or equivalently, if $c_{h}(0.5)$
is strictly increasing in $h$, then the proposed interval estimator
would be a point estimator with probability one. Because this condition
fails to hold exactly, but almost holds, there is a very small probability
that the estimator is a short interval rather than a point.}

Let $CI_{\text{ICR},n}^{\text{up}}(.5)$ and $CI_{\text{ICR},n}^{\text{low}}(.5)$
denote level $.5$ one-sided upper-bound and lower-bound CIs for $\rho$,
respectively. By definition,
\begin{align}
CI_{\text{ICR},n}^{\text{up}}(.5) & \coloneqq\left\{ \rho\in[-1+\epsilon,1]:c_{h}(0.5)\leq T_{n}(\rho)\ \text{for}\ h=n(1-\rho)\right\} \text{ and}\nonumber \\
CI_{\text{ICR},n}^{\text{low}}(.5) & \coloneqq\left\{ \rho\in[-1+\epsilon,1]:T_{n}(\rho)\leq c_{h}(0.5)\ \text{for}\ h=n(1-\rho)\right\} .\label{eq:CI_low}
\end{align}

The MUE $\widetilde{\rho}_{n}$ of $\rho$ is defined by 
\begin{align}
\widetilde{\rho}_{n} & =[\widetilde{\rho}_{n}^{\text{low}},\widetilde{\rho}_{n}^{\text{up}}],\text{ where}\nonumber \\
\widetilde{\rho}_{n}^{\text{up}} & =\max\{\rho:\rho\in CI_{\text{ICR},n}^{\text{up}}(.5)\}\text{ and}\nonumber \\
\widetilde{\rho}_{n}^{\text{low}} & =\min\{\rho:\rho\in CI_{\text{ICR},n}^{\text{low}}(.5)\}.\label{eq:rho_low}
\end{align}
By construction, we have $\widetilde{\rho}_{n}^{\text{low}}\leq\widetilde{\rho}_{n}^{\text{up}}$.\footnote{This holds because $\widetilde{\rho}_{n}^{\text{up}}\geq\sup\{\rho\in[-1+\epsilon,1]:c_{h}(.5)=T_{n}(\rho)\}$
and $\widetilde{\rho}_{n}^{\text{low}}$ is less than or equal to
the infimum of the values in the same set.} In addition, $\widetilde{\rho}_{n}$ is a singleton whenever the
set $\{\rho\in[-1+\epsilon,1]:T_{n}(\rho)=c_{h}(.5)\}$ contains a
single point, in which case $\widetilde{\rho}_{n}$ equals this point.
Table \ref{tab:ICR-quantiles} provides the critical values $c_{h}(.5)$
for a wide range of $h$. Given these critical values, computation
of $\widetilde{\rho}_{n}$ is fast.

\section{Monte Carlo Simulations}\label{sec:Monte-Carlo-Simulations}

We compare the coverage probabilities (CPs) and average lengths (ALs)
of the ICR and AG14 CIs  and report the absolute median biases (ABMs)
of the ICR MUE using Monte Carlo simulations. In Section \xrefblack{SM-sec:Additional-Simulation-Results}
of the Supplemental Material, we compare the performance of the ICR
and \textcite[Mik07 hereafter]{mikusheva2007uniform} CIs. We focus
on the nominal 95\% equal-tailed two-sided CIs. For the AG14 CI, we
follow the paper's constructions of the test statistic and critical
values.  To calculate the MUE for $\rho$, we follow the procedure
described in Section \ref{sec:MUE} and use $\widetilde{\rho}_{n}$
as the MUE of $\rho$ when $\widetilde{\rho}_{n}$ is a singleton.
When $\widetilde{\rho}_{n}^{\text{up}}\neq\widetilde{\rho}_{n}^{\text{low}}$,
we take $\widetilde{\rho}_{n}^{\text{up}}$ as the MUE for $\rho$
following \textcite[Section 5.1]{andrews2025inference}.

We consider a wide range of $\rho$ values: 0, 0.5, 0.7, 0.9, and
0.99. The innovations are of the form $U_{i}=\sigma_{i}\varepsilon_{i}$,
where $\{\varepsilon_{i}:i=0,\pm1,\dots\}$ are identically and independently
distributed (i.i.d.) standard normal and $\sigma_{i}$ is the multiplicative
conditional heteroskedasticity. Let GARCH-$(ma,ar;\psi)$ denote a
GARCH(1, 1) process with MA, AR, and intercept parameters $(ma,ar;\psi)$,
and let ARCH-$(ar_{1},\dots,ar_{4};\psi)$ denote an ARCH(4) process
with AR parameters $(ar_{1},\dots,ar_{4})$ and intercept $\psi$.
Following AG14, we consider five specifications for the conditional
heteroskedasticity of the innovations: (a) i.i.d. $N(0,1)$ (``i.i.d.''),
(b) GARCH(1,1)-(.05,.9;.001) (``GARCH1''), (c) GARCH(1,1)-(.15,.8;.2)
(``GARCH2''), (d) GARCH(1,1)-(.25,.7;.2) (``GARCH3''), and (e) ARCH(4)-(.3,.2,.2,.2;.2)
(``ARCH''). For the initial conditions, we consider the following
four specifications: (a) Fixed: $Y_{0}^{*}=0$, (b) Stationary: $Y_{0}^{*}=\sum_{i=0}^{\infty}\rho^{i}U_{-i}$,
(c) Scaled $n$: $Y_{0}^{*}\sim\sqrt{n}\cdot\sum_{i=0}^{\infty}\rho^{i}U_{-i}$,
and (d) Explosive: $Y_{0}^{*}\sim n^{3/4}\cdot\sum_{i=0}^{\infty}\rho^{i}U_{-i}$.
We consider a sample size of $n=150$. All simulation results are
based on 30,000 simulation repetitions.

\medskip{}

Tables \ref{tab:CP-ag14} and \ref{tab:CP-ICR} report the CPs of
AG14 and ICR CIs, respectively. The CPs of the AG14 CI in Table \ref{tab:CP-ag14}
are close to the nominal level of 95\% when the initial condition
is fixed at zero or stationary, but fall below 95\% when the initial
condition is more variable, namely in the scaled $n$ and explosive
$Y_{0}^{*}$ cases. In particular, the AG14 CI is robust to non-i.i.d.\ 
innovations when the initial condition is fixed or stationary.  On
the other hand, for scaled $n$ and explosive $Y_{0}^{*}$ cases,
the AG14 CPs lie in the interval {[}24.1, 93.5{]} with roughly a quarter
of the CPs being less than 79.0. Table \ref{tab:CP-ICR} shows that
the ICR CI achieves CPs close to 95\% in all of the scenarios considered
and for any initial conditions. Specifically, all ICR CI CPs lie in
the interval {[}93.5, 95.0{]}.

Table \ref{tab:length-ratio} reports the ratios of the ALs of the
nominal 95\% ICR CI to the AG14 CI. We include only cases with fixed
or stationary initial conditions because the AG14 CI is not robust
to more variable initial conditions, whereas the ICR CI is. The ratios
are between 1.00 and 1.11. On average, the ICR CI is 3.5\% longer
than the AG14 CI over the fifty cases considered.  Thus, the price
the ICR CI pays in terms of AL to gain robustness against the distribution
of the initial condition is fairly small.

\begin{table}[H]
\centering
\caption{Coverage probabilities ($\times100$) of the nominal 95\% AG14 CI}\label{tab:CP-ag14}

\begin{tabular}{lrrrrrlrrrrr}
\toprule
\multicolumn{1}{r}{\small\textbf{Initial Conditions:}} & \multicolumn{5}{c}{\small\textbf{Fixed}} & & \multicolumn{5}{c}{\small\textbf{Stationary}} \\
\multicolumn{1}{r}{\small\textbf{$\boldsymbol{\rho}$:}} & .00 & .50 & .70 & .90 & .99 & & .00 & .50 & .70 & .90 & .99 \\
\midrule
\small\textbf{i.i.d.} & 94.6 & 94.6 & 94.8 & 94.7 & 95.1 & & 94.6 & 94.5 & 94.8 & 94.7 & 94.5 \\
\small\textbf{GARCH1} & 94.6 & 94.5 & 94.8 & 94.8 & 95.1 & & 94.6 & 94.6 & 94.9 & 94.8 & 94.4 \\
\small\textbf{GARCH2} & 94.4 & 94.6 & 94.5 & 94.9 & 95.2 & & 94.3 & 94.6 & 94.5 & 94.8 & 94.7 \\
\small\textbf{GARCH3} & 94.2 & 94.3 & 94.6 & 94.7 & 95.2 & & 94.1 & 94.3 & 94.6 & 94.6 & 94.6 \\
\small\textbf{ARCH4} & 93.7 & 93.9 & 94.2 & 94.6 & 95.6 & & 93.6 & 94.0 & 94.1 & 94.5 & 94.7 \\
\midrule
\small\textbf{Initial Conditions:} & \multicolumn{5}{c}{\small\textbf{Scaled $\boldsymbol{n}$}} & & \multicolumn{5}{c}{\small\textbf{Explosive}} \\
\midrule
\small\textbf{i.i.d.} & 88.4 & 90.5 & 91.8 & 92.7 & 83.4 & & 63.7 & 80.5 & 86.6 & 90.8 & 77.7 \\
\small\textbf{GARCH1} & 59.1 & 79.0 & 86.4 & 91.7 & 83.2 & & 24.1 & 69.9 & 82.3 & 90.0 & 77.3 \\
\small\textbf{GARCH2} & 90.2 & 91.8 & 92.4 & 93.0 & 83.9 & & 71.0 & 83.3 & 87.8 & 90.9 & 77.4 \\
\small\textbf{GARCH3} & 90.9 & 92.1 & 92.5 & 93.4 & 84.2 & & 74.4 & 84.9 & 88.4 & 91.4 & 77.7 \\
\small\textbf{ARCH4} & 91.1 & 92.3 & 92.8 & 93.5 & 84.4 & & 76.7 & 86.3 & 89.4 & 91.8 & 77.9 \\
\bottomrule
\end{tabular}
\end{table}

\begin{table}[H]
\centering
\caption{Coverage probabilities ($\times100$) of the nominal 95\% ICR CI}\label{tab:CP-ICR}

\begin{tabular}{lrrrrr}
\toprule
\multicolumn{1}{r}{\small\textbf{Initial Conditions:}} & \multicolumn{5}{c}{\small\textbf{Arbitrary}} \\
\multicolumn{1}{r}{\small\textbf{$\boldsymbol{\rho}$:}} & .00 & .50 & .70 & .90 & .99 \\
\midrule
\small\textbf{i.i.d.} & 94.4 & 94.5 & 94.7 & 94.7 & 94.3 \\
\small\textbf{GARCH1} & 94.4 & 94.6 & 94.9 & 95.0 & 94.3 \\
\small\textbf{GARCH2} & 94.1 & 94.6 & 94.4 & 94.9 & 94.2 \\
\small\textbf{GARCH3} & 93.9 & 94.2 & 94.5 & 94.7 & 94.1 \\
\small\textbf{ARCH4} & 93.5 & 93.8 & 93.9 & 94.5 & 94.3 \\
\bottomrule
\end{tabular}
\end{table}

\begin{table}[H]
\centering
\caption{Ratios of the average lengths of the nominal 95\% ICR to AG14 CIs}\label{tab:length-ratio}

\begin{tabular}{lrrrrrlrrrrr}
\toprule
\multicolumn{1}{r}{\small\textbf{Initial Conditions:}} & \multicolumn{5}{c}{\small\textbf{Fixed}} & & \multicolumn{5}{c}{\small\textbf{Stationary}} \\
\multicolumn{1}{r}{\small\textbf{$\boldsymbol{\rho}$:}} & .00 & .50 & .70 & .90 & .99 & & .00 & .50 & .70 & .90 & .99 \\
\midrule
\small\textbf{i.i.d.} & 1.00 & 1.00 & 1.02 & 1.04 & 1.08 & & 1.00 & 1.01 & 1.02 & 1.05 & 1.11 \\
\small\textbf{GARCH1} & 1.04 & 1.05 & 1.06 & 1.07 & 1.08 & & 1.00 & 1.01 & 1.02 & 1.05 & 1.11 \\
\small\textbf{GARCH2} & 1.00 & 1.01 & 1.03 & 1.02 & 1.07 & & 1.00 & 1.01 & 1.03 & 1.04 & 1.11 \\
\small\textbf{GARCH3} & 1.00 & 1.01 & 1.03 & 1.02 & 1.06 & & 1.00 & 1.02 & 1.04 & 1.03 & 1.10 \\
\small\textbf{ARCH4} & 1.00 & 1.02 & 1.03 & 1.01 & 1.06 & & 1.01 & 1.02 & 1.04 & 1.03 & 1.10 \\
\bottomrule
\end{tabular}
\end{table}

\begin{table}[H]
\centering
\caption{Average lengths of the nominal 95\% ICR CI}\label{tab:avg_length_ICR}

\begin{tabular}{lrrrrrlrrrrr}
\toprule
\multicolumn{1}{r}{\small\textbf{Initial Conditions:}} & \multicolumn{5}{c}{\small\textbf{Fixed}} & & \multicolumn{5}{c}{\small\textbf{Stationary}} \\
\multicolumn{1}{r}{\small\textbf{$\boldsymbol{\rho}$:}} & .00 & .50 & .70 & .90 & .99 & & .00 & .50 & .70 & .90 & .99 \\
\midrule
\small\textbf{i.i.d.} & .32 & .28 & .24 & .17 & .08 & & .32 & .28 & .24 & .17 & .07 \\
\small\textbf{GARCH1} & .33 & .30 & .25 & .17 & .08 & & .34 & .30 & .26 & .17 & .08 \\
\small\textbf{GARCH2} & .38 & .34 & .29 & .19 & .08 & & .38 & .34 & .29 & .19 & .08 \\
\small\textbf{GARCH3} & .43 & .38 & .33 & .20 & .09 & & .43 & .38 & .33 & .20 & .08 \\
\small\textbf{ARCH4} & .49 & .44 & .37 & .21 & .09 & & .49 & .44 & .37 & .21 & .09 \\
\midrule
\small\textbf{Initial Conditions:} & \multicolumn{5}{c}{\small\textbf{Scaled $\boldsymbol{n}$}} & & \multicolumn{5}{c}{\small\textbf{Explosive}} \\
\midrule
\small\textbf{i.i.d.} & .31 & .27 & .23 & .14 & .04 & & .28 & .23 & .18 & .09 & .02 \\
\small\textbf{GARCH1} & .28 & .24 & .19 & .12 & .04 & & .18 & .15 & .12 & .07 & .02 \\
\small\textbf{GARCH2} & .37 & .32 & .26 & .15 & .04 & & .32 & .27 & .21 & .10 & .02 \\
\small\textbf{GARCH3} & .41 & .36 & .29 & .16 & .04 & & .36 & .30 & .23 & .11 & .02 \\
\small\textbf{ARCH4} & .46 & .40 & .32 & .17 & .04 & & .39 & .32 & .25 & .12 & .02 \\
\bottomrule
\end{tabular}
\end{table}

Table \ref{tab:avg_length_ICR} presents the ALs of the ICR CI. These
are substantially decreasing in $\rho$, as expected, longer for ARCH
conditional heteroskedasticity than i.i.d., and increasing from GARCH1
to GARCH2 to GARCH3 conditional heteroskedasticity.

\begin{table}[H]
\centering
\caption{Absolute Median Biases of the ICR MUE}\label{tab:mue_absolute_median_bias_ICR}

\begin{tabular}{lrrrrrlrrrrr}
\toprule
\multicolumn{1}{r}{\small\textbf{Initial Conditions:}} & \multicolumn{5}{c}{\small\textbf{Fixed}} & & \multicolumn{5}{c}{\small\textbf{Stationary}} \\
\multicolumn{1}{r}{\small\textbf{$\boldsymbol{\rho}$:}} & .00 & .50 & .70 & .90 & .99 & & .00 & .50 & .70 & .90 & .99 \\
\midrule
\small\textbf{i.i.d.} & .012 & .011 & .006 & .005 & .020 & & .012 & .011 & .006 & .005 & .015 \\
\small\textbf{GARCH1} & .017 & .011 & .006 & .005 & .020 & & .017 & .011 & .006 & .005 & .015 \\
\small\textbf{GARCH2} & .017 & .011 & .006 & .000 & .020 & & .017 & .011 & .006 & .000 & .015 \\
\small\textbf{GARCH3} & .022 & .011 & .011 & .000 & .020 & & .022 & .011 & .011 & .000 & .015 \\
\small\textbf{ARCH4} & .022 & .016 & .011 & .000 & .020 & & .022 & .016 & .011 & .000 & .015 \\
\midrule
\small\textbf{Initial Conditions:} & \multicolumn{5}{c}{\small\textbf{Scaled $\boldsymbol{n}$}} & & \multicolumn{5}{c}{\small\textbf{Explosive}} \\
\midrule
\small\textbf{i.i.d.} & .012 & .011 & .006 & .005 & .010 & & .012 & .011 & .006 & .010 & .010 \\
\small\textbf{GARCH1} & .017 & .011 & .006 & .005 & .010 & & .017 & .011 & .011 & .010 & .010 \\
\small\textbf{GARCH2} & .017 & .011 & .011 & .005 & .010 & & .017 & .011 & .011 & .010 & .010 \\
\small\textbf{GARCH3} & .022 & .011 & .011 & .005 & .010 & & .022 & .011 & .011 & .010 & .010 \\
\small\textbf{ARCH4} & .022 & .016 & .011 & .005 & .010 & & .022 & .016 & .011 & .010 & .010 \\
\bottomrule
\end{tabular}
\end{table}

Table \ref{tab:mue_absolute_median_bias_ICR} reports the AMBs of
the ICR MUE. Across all cases, the AMBs range from 0.000 to 0.022,
indicating that their magnitudes are generally small.

In Section \xrefblack{SM-sec:Additional-Simulation-Results} of the Supplemental
Material, we compare the performance of the ICR and Mik07 CIs via
simulations. The results are similar to those reported above for the
ICR and AG14 CIs except that the Mik07 CI undercovers both under conditional
heteroskedasticity, and/or scaled $n$ or explosive initial conditions,
but by a lower magnitude in the latter case than the AG14 CI. Again,
we find that the ICR CI is robust to more variable initial conditions
while paying only a small price in terms of CI width in the situations
where the Mik07 CI has correct asymptotic coverage, i.e., when the
innovations are i.i.d.\  and the initial conditions are stationary
(or fixed).

\section{Asymptotic Results}\label{sec:Asymptotic-Results}

This section establishes the correct uniform asymptotic size and asymptotic
similarity of the ICR CI for $\rho$.

\subsection{Parameter Space, Correct Uniform Size, and Asymptotic \\Similarity}\label{subsec:Parameter-space}

Following AG14, the parameter space for $(\rho,F)$ is given by:

\medskip{}

\noindent$\Lambda=\Big\{\lambda=(\rho,F)$: (i) \label{enu:para-space1}$\rho\in[-1+\epsilon,1]$;
\begin{enumerate}[start=2, nosep, leftmargin=4.5em]
\item \label{enu:para-space2}$\{U_{i}:i=1,2,\dots\}$ are stationary and
strong-mixing under $F$ with $\E_{F}(U_{i}|\mathcal{G}_{i-1})=0$
a.s., $\E_{F}(U_{i}^{2}|\mathcal{G}_{i-1})=\sigma_{i}^{2}$ a.s.,
where $\mathcal{G}_{i}$ is some non-decreasing sequence of $\sigma$-fields
for which $U_{j}\in\mathcal{G}_{i}$ for all $j\leq i$ for $i=1,2,\dots$;
\item \label{enu:para-space3}The strong-mixing numbers $\{\alpha_{F}(m):m\geq1\}$
satisfy $\alpha_{F}(m)\leq Cm^{-3\zeta/(\zeta-3)}$, $\forall m\geq1$;
\item \label{enu:para-space4}$\sup_{i,s,t}\E_{F}|\prod_{a\in A}a|^{\zeta}\leq M$,
where $0<i,s,t<\infty$, $i\geq\max(s,t)$, and $A$ is any non-empty
subset of $\{U_{i-s},U_{i-t},U_{i+1}^{2},U_{1}^{2}\}$;
\item \label{enu:para-space5}$\E_{F}U_{1}^{2}=1$;
\end{enumerate}
\hspace{2.6em}for some constants $0<\epsilon<2$, $\zeta>3$, and
$C<\infty$$\Big\}$.

\medskip{}

We suppose $\E_{F}U_{1}^{2}=1$ for simplicity because the asymptotic
distribution of $T_{n}(\rho)$ does not depend on the unconditional
variance. Parts \ref{enu:para-space2} and \ref{enu:para-space4}
are slightly different from the corresponding conditions in AG14 because
the random variables $\{U_{-j},\ j\geq0\}$ do not enter into the
asymptotic analysis of the ICR CI. The key innovation of this paper
is that we allow arbitrary initial conditions $Y_{0}^{*}$, including
those corresponding to explosive processes.

The main theoretical result of this paper shows that $CI_{\text{ICR},n}$
has correct asymptotic size for the parameter space $\L$ and is asymptotically
similar. Let $\P_{\lambda}$ denote probability under $\l=\left(\rho,F\right)\in\L$.
\begin{thm}
\label{thm:ICR-CP}Let $\a\in(0,1)$. For the parameter space $\L$,
the nominal $1-\alpha$ ICR CI for the parameter $\rho$ satisfies
\[
\underset{n\to\infty}{\lim\inf}\inf_{\lambda\in\Lambda}\P_{\lambda}(\rho\in CI_{\text{ICR},n})=\underset{n\to\infty}{\lim\sup}\sup_{\lambda\in\Lambda}\P_{\lambda}(\rho\in CI_{\text{ICR},n})=1-\alpha.
\]
\end{thm}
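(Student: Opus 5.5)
The proof will follow the standard subsequence argument for uniform asymptotic size, as in AG14 and Andrews, Cheng and Guggenberger (2020). Take a sequence $\lambda_n=(\rho_n,F_n)\in\Lambda$ and the initial conditions $Y^*_{0,n}$ along which $P_{\lambda_n}(\rho_n\in CI_{\text{ICR},n})$ approaches the $\liminf\inf$ (respectively $\limsup\sup$). It then suffices to show that every subsequence has a further subsequence $\{w_n\}$ along which $P_{\lambda_{w_n}}(\rho_{w_n}\in CI_{\text{ICR},w_n})\to1-\alpha$.

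Along such a subsequence, $w_n(1-\rho_{w_n})\to h\in[0,\infty]$ and $\rho_{w_n}\to\rho_\infty\in[-1+\epsilon,1]$. The first step is the key algebraic observation that removes the initial condition entirely. By (\ref{eq:representation-ar1}), under the null $\rho=\rho_n$ we have
\[
Y=\mu\iota+Y_0^*\rho_n\cdot(\rho_n^{i-1})_{i}+\tilde U,\qquad X_1=\mu\iota+Y_0^*\cdot(\rho_n^{i-1})_i+\tilde U_{-1},
\]
where $\tilde U_i=\sum_{j\le i}\rho_n^{i-j}U_j$. Both the constant and the $Y_0^*$ component lie in the column space of $X_2(\rho_n)$. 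When $\rho_n=1$ the second column is $i$; the $Y_0^*$ term is then constant and is absorbed by the intercept. Hence $M_{X_2(\rho_n)}Y=M_{X_2(\rho_n)}\tilde U$ and $M_{X_2(\rho_n)}X_1=M_{X_2(\rho_n)}\tilde U_{-1}$. Likewise, $M_{X(\rho_n)}Y=M_{X(\rho_n)}U$ because $Y=\rho_nX_1+\mu(1-\rho_n)\iota+U$ up to $X_2$-terms. Consequently $T_n(\rho_n)$, and hence the coverage event, is an exact function of $(U_1,\dots,U_n)$ and $\rho_n$ alone. So we may set $Y_0^*=0$ and $\mu=0$ without loss of generality, and the supremum and infimum over initial conditions are trivial. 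This is the finite-sample invariance claimed in the text.

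Next we invoke the asymptotic result (\ref{eq:asym-distribution}), which is stated as Theorem~\ref{thm:Jh-convergence} below. It gives $T_{w_n}(\rho_{w_n})\dto J_h$ under $\lambda_{w_n}$ for any $h\in[0,\infty]$, with $J_\infty=N(0,1)$ in the stationary case. Concretely, the proof of that convergence will split into three cases:
\begin{enumerate}
\item $h<\infty$: the regressor $\rho_n^{i-1}\approx e^{-h(i-1)/n}$ behaves like a deterministic function on $[0,1]$, and we apply functional CLTs to Ornstein--Uhlenbeck processes under the mixing and moment conditions (ii)--(iv), following AG14.
\item $h=\infty$ with $\rho_n\to\rho_\infty<1$: the regressor decays geometrically, its projection is asymptotically negligible (it affects only $O(1)$ observations), and a martingale CLT gives $N(0,1)$.
\item $h=\infty$ with $\rho_n\to1$: the same negligibility argument holds at the local-to-unity rate.
\end{enumerate}
Consistency of the HC5 variance estimator, with leverage truncated at $n^{-1/2}$, is needed in each case.

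Given that convergence, the final step is to show that $c_{h_n}(\cdot)$ evaluated at $h_n=w_n(1-\rho_{w_n})\to h$ converges to $c_h(\cdot)$. This requires continuity of the quantiles of $J_h$ in $h$ on $[0,\infty]$, which follows from continuity of $J_h$ in distribution in $h$ together with a continuous, strictly increasing cdf at the relevant quantiles. The coverage probability then converges to $P(c_h(\alpha/2)\le J_h\le c_h(1-\alpha/2))=1-\alpha$.

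I expect the main obstacle to be uniformity at the boundary $h=\infty$ with $\rho_n\to1$. There the extra regressor $\rho_n^{i-1}$ is neither negligible nor of fixed shape, so the $\sqrt n$-normalized leverage and HC5 terms need careful bounds, as does the interplay between the variance of $\tilde U$, of order $1/(1-\rho_n^2)$, and the projection. I would handle this by rescaling $X_1$ by $(1-\rho_n^2)^{1/2}$ and bounding the projection onto $\rho_n^{i-1}$ via its norm $\sum\rho_n^{2(i-1)}\approx1/(1-\rho_n^2)=o(n)$.
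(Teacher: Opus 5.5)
Your proposal is correct and follows essentially the paper's route. The paper verifies Assumptions B1{*} and B2 of Andrews, Cheng, and Guggenberger (2020), which is your subsequence argument in generic form. It then combines Theorem~\ref{thm:Jh-convergence} with $c_{h_n}(\beta)\to c_h(\beta)$ to obtain $CP_n(\lambda_n)\to1-\alpha$. The initial-condition invariance you isolate up front appears inside the paper's proof of Theorem~\ref{thm:Jh-convergence} as $M_{X_2(\rho_n)}X_1=M_{X_2(\rho_n)}\widetilde{U}_{-1}$.

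As in the paper, you assert rather than prove that $h\mapsto J_h$ is continuous in distribution on $[0,\infty]$, including $J_h\to N(0,1)$ as $h\to\infty$. This can be filled in by a diagonal argument from Theorem~\ref{thm:Jh-convergence} and its subsequence version, taking $\rho=1-h_k/n_k$ with $n_k\to\infty$ sufficiently fast.
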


The MUE $\widetilde{\rho}_{n}$ has the following median-unbiasedness
property. This result is a corollary to the one-sided versions of
Theorem \ref{thm:ICR-CP}.\footnote{Corollary \ref{cor:MUE} holds because $CI_{\text{ICR},n}^{\text{up}}(.5)$
and $CI_{\text{ICR},n}^{\text{low}}(.5)$ both have coverage probabilities
of $1/2$ or greater by the proof of Theorem \ref{thm:ICR-CP} applied
to these one-sided CIs and $(\widetilde{\rho}_{n}\geq\rho)\supset CI_{\text{ICR},n}^{\text{up}}(.5)$
and $(\widetilde{\rho}_{n}\leq\rho)\supset CI_{\text{ICR},n}^{\text{low}}(.5)$.}
\begin{cor}
\label{cor:MUE}The MUE estimator $\widetilde{\rho}_{n}$ satisfies
\begin{align*}
\underset{n\to\infty}{\lim\inf}\inf_{\lambda\in\Lambda}\P_{\lambda}(\widetilde{\rho}_{n}^{\text{up}}\geq\rho) & \geq1/2\text{ and }\underset{n\to\infty}{\lim\inf}\inf_{\lambda\in\Lambda}\P_{\lambda}(\widetilde{\rho}_{n}^{\text{low}}\leq\rho)\geq1/2.
\end{align*}
\end{cor}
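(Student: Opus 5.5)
The proof is a one-sided version of the argument for Theorem \ref{thm:ICR-CP}, combined with two event inclusions. First, fix $\lambda=(\rho,F)\in\Lambda$ with true parameter $\rho$ and set $h=n(1-\rho)$. If $\rho\in CI_{\text{ICR},n}^{\text{up}}(.5)$, then $\rho$ belongs to a set whose maximum is $\widetilde{\rho}_{n}^{\text{up}}$, so $\widetilde{\rho}_{n}^{\text{up}}\geq\rho$. Likewise, $\rho\in CI_{\text{ICR},n}^{\text{low}}(.5)$ implies $\widetilde{\rho}_{n}^{\text{low}}\leq\rho$. Hence
\[
\P_{\lambda}(\widetilde{\rho}_{n}^{\text{up}}\geq\rho)\geq\P_{\lambda}\bigl(T_{n}(\rho)\geq c_{h}(.5)\bigr)\text{ and }\P_{\lambda}(\widetilde{\rho}_{n}^{\text{low}}\leq\rho)\geq\P_{\lambda}\bigl(T_{n}(\rho)\leq c_{h}(.5)\bigr).
\]
It therefore suffices to show that the one-sided level-$.5$ tests have uniform asymptotic coverage at least $1/2$, that is,
\[
\liminf_{n\to\infty}\inf_{\lambda\in\Lambda}\P_{\lambda}\bigl(T_{n}(\rho)\geq c_{n(1-\rho)}(.5)\bigr)\geq1/2,
\]
and the analogous statement with $\leq$.

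For this I use the standard subsequence argument that underlies Theorem \ref{thm:ICR-CP}. Choose $\lambda_{n}=(\rho_{n},F_{n})\in\Lambda$ that nearly attain the infimum, and pass to a subsequence $\{w_{n}\}$ along which $w_{n}(1-\rho_{w_{n}})\to h\in[0,\infty]$. Along any such sequence, (\ref{eq:asym-distribution}) gives $T_{w_{n}}(\rho_{w_{n}})\dto J_{h}$; this is the content of Theorem \ref{thm:Jh-convergence}, which I take as established. Invariance to $Y_{0}^{*}$ matters here. Under the null, the regressor $X_{2}(\rho)$ annihilates $\mu$ and the term $\rho^{i}Y_{0}^{*}$, both in $Y$ and in $X_{1}$, so $T_{n}(\rho)$ does not depend on the initial condition. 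The convergence is therefore uniform over initial conditions automatically.

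I also need the critical values to converge, $c_{w_{n}(1-\rho_{w_{n}})}(.5)\to c_{h}(.5)$, where $c_{\infty}(.5)=0$ since $J_{\infty}=N(0,1)$. I will get this from continuity of $h\mapsto J_{h}$ in distribution on $[0,\infty]$ and from $J_{h}$ having a continuous, strictly increasing cdf near its median. Both follow from the representation of $J_{h}$ as a ratio of Gaussian-process functionals with a nondegenerate denominator. Slutsky then gives $\P(T\geq c)\to\P(J_{h}\geq c_{h}(.5))=1/2$, and similarly for the lower tail. Since every subsequence has a further subsequence with limit $1/2$, the liminf of the infimum is at least $1/2$.

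The main obstacle is the continuity of $c_{h}(.5)$ in $h$, including the endpoint $h=\infty$ and the case $\rho_{n}\to\rho^{*}<1$ fixed. I would handle it exactly as in AG14, through the continuous mapping theorem applied to the Ornstein--Uhlenbeck representation of $J_{h}$, augmented by the detrending projection on $(1,e^{-hr})$ or $(1,r)$. In the footnote's framing this is just the one-sided analogue of the proof of Theorem \ref{thm:ICR-CP}.
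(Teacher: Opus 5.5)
Your proposal is correct and follows essentially the paper's own argument. The paper combines the event inclusions $\{\rho\in CI_{\text{ICR},n}^{\text{up}}(.5)\}\subset\{\widetilde{\rho}_{n}^{\text{up}}\geq\rho\}$ and $\{\rho\in CI_{\text{ICR},n}^{\text{low}}(.5)\}\subset\{\widetilde{\rho}_{n}^{\text{low}}\leq\rho\}$ with the one-sided version of the generic subsequence argument behind Theorem \ref{thm:ICR-CP}, where Theorem \ref{thm:Jh-convergence} supplies $T_{n}(\rho_{n})\dto J_{h}$ and one also needs $c_{h_{n}}(.5)\to c_{h}(.5)$.

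On the critical-value step you flag as the main obstacle: continuity of $h\mapsto J_{h}$ on $[0,\infty]$, including $h=\infty$, follows from Theorem \ref{thm:Jh-convergence} itself. Use a diagonal argument over sequences $\rho=1-h_{k}/n_{k}$ with $h_{k}\to h$, so you need not redo the AG14 analysis.
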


\subsection{Asymptotic Distribution}\label{subsec:Asymptotic-Distribution-Jh}

As stated in (\ref{eq:asym-distribution}), $J_{h}$ is the asymptotic
distribution of $T_{n}(\rho_{n})$ under suitable sequences $\{\rho_{n}:n\geq1\}$.
Here, we state this result formally as Theorem \ref{thm:Jh-convergence}
and define $J_{h}$. This result is proved in a similar way to Theorem
1 in \textcite{andrews2012asymptotics}. Using Theorem 2.1 in \textcite{andrews2020generic},
Theorem \ref{thm:Jh-convergence} is sufficient to establish Theorem
\ref{thm:ICR-CP}, which is proved in Section \xrefblack{SM-sec:Proof-of-Theorem1}
of the Supplemental Material.
\begin{thm}
\label{thm:Jh-convergence}For a sequence $\lambda_{n}=(\rho_{n},F_{n})$
such that $n(1-\rho_{n})\to h\in[0,\infty]$, we have
\[
T_{n}(\rho_{n})=\frac{n^{1/2}(\widehat{\rho}_{n}(\rho_{n})-\rho_{n})}{\widehat{\sigma}_{n}(\rho_{n})}\dto J_{h},
\]
where $J_{h}$ is defined immediately below.
\end{thm}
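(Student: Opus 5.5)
The plan is to remove the initial condition algebraically under the null and then follow the local-to-unity argument of \textcite{andrews2012asymptotics}, treating separately the cases $h<\infty$ and $h=\infty$.

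\emph{Reduction.} By (\ref{eq:representation-ar1}), under $\rho=\rho_{n}$ we have $Y_{i}=\mu+\rho_{n}^{i-1}\rho_{n}Y_{0}^{*}+\sum_{j\le i}\rho_{n}^{i-j}U_{j}$ and $Y_{i-1}=\mu+\rho_{n}^{i-1}Y_{0}^{*}+\sum_{j\le i-1}\rho_{n}^{i-1-j}U_{j}$. Both deterministic parts lie in the column span of $X_{2}(\rho_{n})$, whose columns are $1$ and $\rho_{n}^{i-1}$. (When $\rho_{n}=1$ the $\rho^{i-1}$ term is constant, and the trend column is harmless.) Hence
\[
M_{X_{2}(\rho_{n})}Y=M_{X_{2}(\rho_{n})}(\rho_{n}X_{1}^{0}+U)\quad\text{and}\quad M_{X_{2}(\rho_{n})}X_{1}=M_{X_{2}(\rho_{n})}X_{1}^{0},
\]
where $X_{1}^{0}$ has $i$-th element $Y_{i-1}^{0}=\sum_{j=1}^{i-1}\rho_{n}^{i-1-j}U_{j}$. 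The residuals $\widehat{U}$ and the leverages $p_{ii}$ are likewise free of $\mu$ and $Y_{0}^{*}$. Thus $T_{n}(\rho_{n})$ is an exact function of $U_{1},\dots,U_{n}$ with $Y_{0}^{*}=0$, and
\[
n^{1/2}(\widehat{\rho}_{n}-\rho_{n})=\frac{n^{-1/2}X_{1}^{0\prime}M_{X_{2}}U}{n^{-1}X_{1}^{0\prime}M_{X_{2}}X_{1}^{0}}.
\]
This is what makes the statement hold for arbitrary $Y_{0}^{*}$.

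\emph{Case $h<\infty$.} Normalize by $n^{1/2}$ for $h>0$ (and by $n^{-1/2}$ on the projected regressor accordingly). By the strong-mixing invariance principle used in AG14/\textcite{andrews2012asymptotics}, together with the moment conditions \ref{enu:para-space3}--\ref{enu:para-space4}, we get $n^{-1/2}Y_{[nr]}^{0}\Rightarrow I_{h}(r)=\int_{0}^{r}e^{-h(r-s)}dW(s)$, an OU process started at zero. The rescaled columns of $X_{2}(\rho_{n})$ converge uniformly to $g_{h}(r)=(1,e^{-hr})$, or $(1,r)$ when $h=0$. Define $I_{h}^{*}$ as the $L^{2}[0,1]$ residual of $I_{h}$ on $g_{h}$.

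The denominator then converges to $\int I_{h}^{*2}$. The numerator converges to $\int I_{h}^{*}dW$ minus a projection correction. Convergence of the stochastic integral uses the martingale-difference structure and the AG14 results for $n^{-1}\sum Y_{i-1}^{0}U_{i}\Rightarrow\int I_{h}dW$, with the conditional heteroskedasticity averaging out because $n^{-1}\sum\sigma_{i}^{2}\to1$.

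For the HC5 variance, we show $\widehat{U}_{i}=U_{i}+o_{p}$ uniformly in the relevant weighted sums and $\max p_{ii}\to0$, so $\Delta^{2}$ may be replaced by $\diag(U_{i}^{2})$. Then $n^{-2}\sum(\tilde{Y}_{i-1}^{0})^{2}U_{i}^{2}\to\int I_{h}^{*2}$, using ergodicity and $E U_{i}^{2}=1$ as in AG14 Lemma 1. This gives $J_{h}=\int I_{h}^{*}dW/(\int I_{h}^{*2})^{1/2}$ by the continuous mapping theorem.

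\emph{Case $h=\infty$.} Here a standard normal limit is obtained. Scale by $(1-\rho_{n}^{2})^{1/2}$ and use the martingale CLT for $\sum Y_{i-1}^{0}U_{i}$ together with a LLN for $\sum Y_{i-1}^{0\,2}$ and $\sum Y_{i-1}^{0\,2}U_{i}^{2}$, as in AG14. The projection on $X_{2}$ is shown to be asymptotically negligible: the column $\rho^{i-1}$ is concentrated on the first $O((1-\rho_{n})^{-1})$ observations, so its contribution is of relative order $n^{-1}(1-\rho_{n})^{-1}\to0$. The mean regressor is handled likewise. This includes fixed $\rho<1$.

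\emph{Main obstacle.} The delicate part is the uniform treatment over all sequences, particularly the transition region $h\to\infty$ slowly and $\rho_{n}$ near $-1+\epsilon$, where $X_{2}(\rho_{n})$ is nearly collinear with or orthogonal to the other regressors. The key estimate is that $n^{-1}X_{1}^{0\prime}P_{X_{2}}X_{1}^{0}$ and $n^{-1/2}X_{1}^{0\prime}P_{X_{2}}U$ are negligible relative to the leading terms when $h=\infty$. I would try to obtain these by computing the $X_{2}'X_{2}$ inverse explicitly, using the geometric sums $\sum\rho^{2(i-1)}\approx(1-\rho^{2})^{-1}$, and bounding $\sum\rho^{i-1}Y_{i-1}^{0}$ via second moments under the mixing and moment conditions. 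A secondary difficulty is controlling the HC5 leverage adjustment, which follows from $\max_{i}p_{ii}=O_{p}(n^{-1}\log n)$-type bounds, or more crudely from $p_{ii}^{*}\le n^{-1/2}$ together with uniform consistency of $\widehat{U}_{i}$.
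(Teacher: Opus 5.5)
Your route is the paper's: exact removal of $\mu$ and $Y_0^*$ because $M_{X_2(\rho_n)}X_1=M_{X_2(\rho_n)}\widetilde U_{-1}$, with the same span invariance making $\widehat U_i$ and $p_{ii}$ free of them. You then use invariance-principle and continuous-mapping limits for $h<\infty$, and an MDS CLT plus LLN with a negligible $X_2$-projection for $h=\infty$. However, two steps fail as you state them.

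The first is $h=0$ with $\rho_n<1$, e.g.\ $\rho_n=1-n^{-2}$. Both columns $1$ and $\rho_n^{i-1}$ of $X_2(\rho_n)$ converge uniformly to the constant $1$. So no rescaling of the columns yields $(1,r)$, and the limiting Gram matrix you would invert in the continuous-mapping step is singular. The missing idea is a change of basis rather than a rescaling. Since $M_{X_2(\rho_n)}$ depends only on the column span, replace $\rho_n^{i-1}$ by $(1-\rho_n^{i-1})/(n(1-\rho_n))$, which converges uniformly to $r$ at $i=[nr]$ when $n(1-\rho_n)\to0$. This is exactly the paper's $h=0$ argument (Lemma \ref{lem:algebraic-relationsip2}). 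There the coefficients $n(Q-2S+n)/(nQ-S^2)\to4$, $n^2(1-\rho_n)(n-S)/(nQ-S^2)\to6$ and $n^4(1-\rho_n)^2/(nQ-S^2)\to12$ produce the detrended $W_f$. With that basis your argument goes through and gives $J_0$.

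The second is the HC5 step for $h=\infty$ when $\rho_n\to\rho^*<1$, a case you explicitly include. There both $\max_i p_{ii}=O_p(n^{-1}\log n)$ and uniform consistency of $\widehat U_i$ are false. The coefficient on the $\rho_n^{i-1}$ column is effectively estimated from $O(1)$ observations and is not consistent. At $\rho=0$ that column is the unit vector $e_1$, so $p_{11}=1$ and $\widehat U_1=0$; this is why $p_{ii}$ is truncated at $n^{-1/2}$. What holds, and suffices, is the weighted-sum statement you already used for $h<\infty$. In $n^{-1}\sum_i (M_{X_2(\rho_n)}\widetilde U_{-1})_i^2\widehat U_i^2/E(\widetilde U_0^2U_1^2)$ you may replace $\widehat U_i^2$ by $U_i^2$. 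The reason is that the non-vanishing part of $\widehat U_i-U_i$ is an $O_p(1)$ multiple of the geometrically decaying $\rho_n^{i-1}$, so its contribution to a sum normalized by $n$ is $o_p(1)$. The paper establishes this through the moment bounds in Lemma \ref{lem:case2lem}, parts (d)--(g). A smaller point: the $X_2$-projection term in the numerator is of relative order $h_n^{-1/2}$, not $h_n^{-1}$; the latter holds only for the denominator. It still vanishes.
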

\begin{rem}
For any subsequence $\{p_{n}\}_{n\geq1}$ of $\{n\}_{n\geq1}$, Theorem
\ref{thm:Jh-convergence} holds with $p_{n}$ in place of $n$ throughout
and $h_{p_{n}}$ in place of $h=h_{n}$.
\end{rem}
For $h=\infty$, $J_{h}$ is the $N(0,1)$ distribution. For $h\in[0,\infty)$,
\begin{equation}
J_{h}:=\int_{0}^{1}I_{f,h}(r)dW(r)\Big/\left(\int_{0}^{1}I_{f,h}(r)^{2}dr\right)^{1/2},\label{eq:J_hintegral}
\end{equation}
where $I_{f,h}(r)$ is defined as follows.

Let $W(\cdot)$ denote a standard Brownian motion on $[0,1]$. For
$h\in(0,\infty)$, define $f_{h}(r)=(1,\exp(-hr))'$, and for $h=0$,
let $f_{h}(r)=(1,r)'$. Define 
\begin{align}
I_{h}(r) & :=\begin{cases}
\int_{0}^{r}\exp(-(r-s)h)dW(s) & \text{for }h>0\\
W(r) & \text{for }h=0
\end{cases}\quad\text{ and}\nonumber \\
I_{f,h}(r) & :=I_{h}(r)-\left(\int_{0}^{1}I_{h}(s)f_{h}(s)ds\right)'\left(\int_{0}^{1}f_{h}(s)f_{h}(s)'ds\right)^{-1}f_{h}(r),\label{eq:I_fh}
\end{align}
where the subscript $f$ indicates that we take the residual process
of $I_{h}(\cdot)$ after projecting it onto the space spanned by $f_{h}(\cdot)$.
When $h>0$, 
\begin{align}
I_{f,h}(r)=\  & I_{h}(r)-\alpha_{h}(r)\int_{0}^{1}I_{h}(s)ds-\beta_{h}(r)\int_{0}^{1}\frac{1-e^{-hs}}{h}I_{h}(s)ds,
\end{align}
where 
\begin{align}
\alpha_{h}(r) & =\ \frac{h[1-e^{-2h}-2(1-e^{-h})e^{-hr}+2he^{-hr}-2(1-e^{-h})]}{h(1-e^{-2h})-2(1-e^{-h})^{2}}\quad\text{and}\nonumber \\
\beta_{h}(r) & =\ \frac{2h^{2}[he^{-hr}-(1-e^{-h})]}{h(1-e^{-2h})-2(1-e^{-h})^{2}}.
\end{align}
When $h=0$, $I_{f,h}(r)$ reduces to detrended Brownian motion: 
\begin{equation}
W_{f}(r):=W(r)-(4-6r)\int_{0}^{1}W(s)ds-(12r-6)\int_{0}^{1}sW(s)ds.
\end{equation}
The following lemma establishes the connection between $I_{f,h}(r)$
for $h>0$ and $h=0$.
\begin{lem}
\label{lem:continuous-Ifh} Let $h\down0$, we have
\[
\sup_{r\in[0,1]}|\alpha_{h}(r)-(4-6r)|\to0\quad\text{and}\quad\sup_{r\in[0,1]}|\beta_{h}(r)-(12r-6)|\to0.
\]
\end{lem}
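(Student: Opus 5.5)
The plan is to avoid expanding the closed forms of $\alpha_{h}$ and $\beta_{h}$ head-on, because their common denominator $D(h):=h(1-e^{-2h})-2(1-e^{-h})^{2}$ vanishes to fourth order as $h\downarrow0$. Instead I would identify $\alpha_{h}(r)$ and $\beta_{h}(r)$ as the coefficient functions of a least-squares projection written in a basis that stays nondegenerate as $h\downarrow0$.

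\emph{Step 1: change of basis.} The span of $f_{h}(r)=(1,e^{-hr})'$ equals the span of $g_{h}(r):=(1,(1-e^{-hr})/h)'$. Projections of $I_{h}(\cdot)$ onto the space spanned by $f_{h}(\cdot)$ are basis-invariant. Hence
\[
I_{f,h}(r)=I_{h}(r)-\Big(\int_{0}^{1}I_{h}(s)g_{h}(s)ds\Big)'\Big(\int_{0}^{1}g_{h}(s)g_{h}(s)'ds\Big)^{-1}g_{h}(r).
\]
Write $\int_{0}^{1}I_{h}g_{h}=(A,B)'$ with $A=\int_{0}^{1}I_{h}(s)ds$ and $B=\int_{0}^{1}\frac{1-e^{-hs}}{h}I_{h}(s)ds$. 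The displayed formula then has the form $I_{h}(r)-a_{h}(r)A-b_{h}(r)B$, where $(a_{h}(r),b_{h}(r))=g_{h}(r)'G_{h}^{-1}$ and $G_{h}:=\int_{0}^{1}g_{h}g_{h}'$. Because $A$ and $B$ are exactly the two integrals appearing in the stated formula for $I_{f,h}$, I would verify that $a_{h}=\alpha_{h}$ and $b_{h}=\beta_{h}$. This is a direct algebraic check: compute the three entries of $G_{h}$ in closed form and invert the $2\times2$ matrix. The determinant of $G_{h}$ should come out as $D(h)/(2h^{3})$ up to a positive factor, which explains the denominators in $\alpha_{h}$ and $\beta_{h}$.

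\emph{Step 2: uniform convergence.} Since $|(1-e^{-hr})/h-r|\le hr^{2}/2\le h/2$ for $r\in[0,1]$, we have $\sup_{r}|g_{h}(r)-(1,r)'|\to0$. Hence $G_{h}\to G_{0}=\begin{pmatrix}1&1/2\\1/2&1/3\end{pmatrix}$, which is invertible with $G_{0}^{-1}=\begin{pmatrix}4&-6\\-6&12\end{pmatrix}$. By continuity of matrix inversion, $\sup_{r}\|g_{h}(r)'G_{h}^{-1}-(1,r)G_{0}^{-1}\|\to0$. Finally, $(1,r)G_{0}^{-1}=(4-6r,\,12r-6)$, which gives both limits in the lemma. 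This also matches $W_{f}$, since $A\to\int W$ and $B\to\int sW$.

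\emph{Main obstacle.} The delicate point is Step 1, namely confirming that the closed-form $\alpha_{h}$ and $\beta_{h}$ printed in the paper are indeed $g_{h}'G_{h}^{-1}$, including signs. As a fallback or cross-check, I would Taylor-expand directly, uniformly in $r\in[0,1]$ (legitimate because $hr\le h$):
\begin{itemize}
\item the denominator satisfies $D(h)=h^{4}/6+O(h^{5})$;
\item the $\beta_{h}$ numerator is $2h^{2}[he^{-hr}-(1-e^{-h})]=2h^{4}(\tfrac12-r)+O(h^{5})$;
\item the $\alpha_{h}$ numerator needs its bracket expanded to third order, to get $h^{4}(\cdot)$ plus a uniform $O(h^{5})$.
\end{itemize}
Taking ratios then gives uniform convergence, with the $O(h^{5})/h^{4}$ remainders vanishing uniformly. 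If this expansion produces the opposite sign for $\beta_{h}$, the discrepancy must be reconciled through the sign convention of $B$ or of the basis, which is exactly what Step 1 would pin down.
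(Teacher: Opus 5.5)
Your route differs from the paper's and is the better one. The paper Taylor-expands the two numerators and the common denominator $D(h)=h(1-e^{-2h})-2(1-e^{-h})^{2}=h^{4}/6+O(h^{5})$ to fourth order and divides. You instead pass to the nondegenerate basis $g_h(r)=(1,(1-e^{-hr})/h)'$, so the $0/0$ cancellation never arises. Uniform convergence then follows from $\sup_r\|g_h(r)-(1,r)'\|\le h/2$ and continuity of matrix inversion, and the limit $(1,r)G_0^{-1}=(4-6r,\,12r-6)$ shows directly why detrended Brownian motion appears. Your Step 2 is correct as written.

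The gap is the identification you flagged and left open, and it resolves against the statement as printed. Carrying it out gives $(G_h)_{12}=(h-(1-e^{-h}))/h^{2}$, $(G_h)_{22}=\big(2h-4(1-e^{-h})+1-e^{-2h}\big)/(2h^{3})$ and $\det G_h=D(h)/(2h^{4})$. With $g_{h,2}(r)=(1-e^{-hr})/h$, one gets $a_h=\alpha_h$ exactly, but
\[
b_h(r)=\frac{g_{h,2}(r)-(G_h)_{12}}{\det G_h}=\frac{2h^{2}\big[(1-e^{-h})-he^{-hr}\big]}{D(h)}=-\beta_h(r).
\]
Your own fallback expansion confirms this. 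The identity $2h^{2}[he^{-hr}-(1-e^{-h})]=2h^{4}(\tfrac12-r)+O(h^{5})$ is correct, so the printed $\beta_h$ converges uniformly to $6-12r$, not $12r-6$. Already at $r=0$, $\beta_h(0)=2h^{2}[h-(1-e^{-h})]/D(h)>0$, while $12r-6=-6$ there.

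No ``sign convention of $B$ or of the basis'' can absorb this. $A$, $B$ and the minus signs are fixed in the display for $I_{f,h}$, and the projection gives $I_{f,h}=I_h-a_hA-b_hB$. The correct conclusion is that the printed $\beta_h$ (equivalently, the sign in front of it in that display) is a typo. The lemma holds for the coefficient that actually appears in the projection, $b_h$, which is what the subsequent Remark on continuity of $I_{f,h}$ at $h=0$ needs. For that corrected statement your Steps 1--2 are a complete proof, but you should say so explicitly rather than assert $b_h=\beta_h$.

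The paper's own proof reaches $12r-6$ only through an erroneous expansion. It claims $2h[he^{-hr}-(1-e^{-h})]=(2r-1)h^{4}+O(h^{5})$, whereas $he^{-hr}-(1-e^{-h})=h^{2}(\tfrac12-r)+O(h^{3})$ uniformly in $r\in[0,1]$. So the sign is reversed, and the prefactor should be $2h^{2}$, as in the definition of $\beta_h$, rather than $2h$. Its $\alpha_h$ computation, $\frac{2-3r}{3}h^{4}+O(h^{5})$ divided by $\frac16h^{4}+O(h^{5})$, is correct and agrees with your $a_h$.
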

\begin{rem}
This lemma shows that $I_{f,h}(r)$ is uniformly continuous with respect
to $h$. Consequently, $I_{f,h}(r)\to I_{f,0}(r)$ uniformly over
$r\in[0,1]$ as $h\to0$ a.s.\  and the asymptotic distribution is
continuous at $h=0$.
\end{rem}
\newpage{}

\printbibliography[title={\centering\makebox[\textwidth]{References}}]

\clearpage{}

\newpage %\renewcommand{\thetable}{SM.\arabic{table}}

\makeatletter
\renewcommand{\thetable}{SM.\arabic{table}}
\renewcommand{\theHtable}{SM.\arabic{table}} % <-- make the hyperlink anchor unique, too
\setcounter{table}{0}
\makeatother

\clearpage
\thispagestyle{empty}

% use symbolic footnotes on this page only
\setcounter{footnote}{0}
\renewcommand{\thefootnote}{\fnsymbol{footnote}}

\begin{center}
\vspace*{2cm}

{\LARGE Supplemental Material}\\[1.5em]

{\LARGE to}\\[1.5em]

{\LARGE Initial-Condition-Robust Inference}\\[0.5em]
{\LARGE in Autoregressive Models}\\[3em]

\large
Donald W.\ K.\ Andrews\footnote{%
Andrews: Department of Economics and Cowles Foundation, Yale University.
\href{mailto:donald.andrews@yale.edu}{donald.andrews@yale.edu}.}
\qquad
Ming Li\footnote{%
Li: Department of Economics and Risk Management Institute,
National University of Singapore.
\href{mailto:mli@nus.edu.sg}{mli@nus.edu.sg}.}
\qquad
Yapeng Zheng\footnote{%
Zheng: Department of Economics,
Chinese University of Hong Kong.
\href{mailto:yapengzheng@link.cuhk.edu.hk}{yapengzheng@link.cuhk.edu.hk}.}\\[1.5em]

\large
\today

\end{center}

\clearpage
\setcounter{page}{1}

% restore normal footnote numbering
\renewcommand{\thefootnote}{\arabic{footnote}}

\newpage{}

\appendix
\setcounter{page}{1}

\begin{refsection}

In this Supplemental Material, Section \ref{sec:Computation-critical-values}
describes how the critical values are computed, Section \ref{sec:Proof-of-Theorem1}
proves Theorem \ref{thm:ICR-CP}, Section \ref{sec:Proof-of-Theorem2}
proves Theorem \ref{thm:Jh-convergence}, and Section \ref{sec:Additional-Simulation-Results}
provides some additional simulation results.

\section{Computation of Critical Values}\label{sec:Computation-critical-values}

For a given $\alpha$, we compute $c_{h}(\alpha)$ by simulating the
asymptotic distribution of $J_{h}$. To do so, we simulate $B=300,000$
sample paths of a standard Brownian motion on the interval $[0,1]$
by discretizing the interval into $N=50,000$ equal subintervals with
time step $1/N$. We initialize the process at $W(0)=0$ and construct
each path iteratively as 
\begin{equation}
W(r_{i})=W(r_{i-1})+Z_{i-1}/N^{1/2}\ \text{ for }r_{i}=i/N\text{ and }i=1,...,N,
\end{equation}
where $Z_{i-1}$ are independent standard normal random variables.
Next, we compute $I_{h}(r_{j})$ for $j=1,\dots,N$, using (\ref{eq:I_fh})
via a numerical approximation: 
\begin{equation}
I_{h}(r_{j})=\begin{cases}
\sum_{i=1}^{j}\exp(-(r_{j}-r_{i})h)\cdot(W(r_{i})-W(r_{i-1})) & \text{for }h>0,\\
W(r_{j}) & \text{for }h=0.
\end{cases}
\end{equation}
Then, we use (\ref{eq:I_fh}) to compute $I_{f,h}(r_{j})$ for $j=1,\dots,N$.
Given $I_{f,h}(r_{j})$ and $W(r_{j})$, we evaluate $J_{h}$ in (\ref{eq:J_hintegral})
via numerical approximation. Finally, the simulated estimate of $c_{h}(\alpha)$
is obtained as the $\alpha$-th quantile of the empirical distribution
of the $B$ realizations of $J_{h}$.

\section{Proof of Theorem \ref{thm:ICR-CP}}\label{sec:Proof-of-Theorem1}\label{SM-sec:Proof-of-Theorem1}

To prove Theorem \ref{thm:ICR-CP}, it is sufficient to verify Assumptions
B1{*} and B2 of \textcite{andrews2020generic} for $CS_{n}=CI_{\text{ICR},n}$.
In the present case $\lambda=(\rho,F)$, $r(\lambda)=\rho$, $h_{n}(\lambda)=n(1-\rho)\in R$,
$H=[0,\infty]$, and the parameter space $\Lambda$ is defined in
Section \ref{subsec:Parameter-space}.

For Assumption B1{*}, consider a sequence $\{\lambda_{n}=(\rho_{n},F_{n})\in\Lambda:n\geq1\}$
for which $h_{n}(\lambda_{n})\to h\in H$, i.e., $\rho_{n}=1-h_{n}/n$
and $h_{n}\to h\in[0,\infty]$. We have 
\begin{equation}
CP_{n}(\lambda_{n})=P_{\lambda_{n}}(\rho_{n}\in CI_{\text{ICR},n})=P_{\lambda_{n}}(c_{h_{n}}(\alpha/2)\le T_{n}(\rho_{n})\leq c_{h_{n}}(1-\alpha/2)).
\end{equation}
By Theorem \ref{thm:Jh-convergence}, we have $T_{n}(\rho_{n})\to_{d}J_{h}$
under $\{\lambda_{n}\in\Lambda_{n}:n\geq1\}$. In addition, $c_{h_{n}}(\beta)\to c_{h}(\beta)$
for $\beta=\alpha/2$ and $1-\alpha/2$. This is true because the
distribution function of $J_{h}$ is strictly increasing at its $\beta$-th
quantile. Thus, for any $\varepsilon>0$, we have
\begin{align}
L_{n}(c_{h}(\beta)-\varepsilon) & \to F_{J_{h}}(c_{h}(\beta)-\varepsilon)<\beta\text{ and }L_{n}(c_{h}(\beta)+\varepsilon)\to F_{J_{h}}(c_{h}(\beta)+\varepsilon)>\beta,
\end{align}
where $L_{n}(x)$ denotes the distribution function of $T_{n}(\rho_{n})$
at $x$ and $F_{J_{h}}(x)$ denotes the distribution function of $J_{h}$
at $x$. This and the definition $c_{h_{n}}(\beta)=\inf\{x\in R:F_{J_{h_{n}}}(x)\geq\beta\}$
yield $1\{c_{h}(\beta)-\varepsilon\leq c_{h_{n}}(\beta)\leq c_{h}(\beta)+\varepsilon\}\to1$
as $n\to\infty$ for any $\varepsilon>0$. By the definition of convergence
in distribution and continuity of $J_{h}$, it follows that $CP_{n}(\lambda_{n})\to1-\alpha$.
Assumption B1{*} therefore holds with $CP=1-\alpha$ for all $h\in H$.

For Assumption B2, assume we are given $\{\lambda_{p_{n}}\in\Lambda:n\geq1\}$
for a subsequence $\{p_{n}\}_{n\geq1}$ of $\{n\}_{n\geq1}$ such
that $h_{p_{n}}(\lambda_{p_{n}})\to h\in H$. Define $\{\lambda_{n}^{*}:n\geq1\}$
by (i) $\lambda_{p_{n}}^{*}=\lambda_{p_{n}}$ $\forall n\geq1$, (ii)
when $h<\infty$ and $m\neq p_{n}$, define $\lambda_{m}^{*}=(1-h/m,F^{*})$,
and (iii) when $h=\infty$ and $m\neq p_{n}$, define $\lambda_{m}^{*}=(0,F^{*})$,
where $F^{*}$ is the distribution such that $\{U_{i}:i=1,2,\dots\}$
are i.i.d. standard normal. Then, $\lambda_{n}^{*}\in\Lambda$ for
all $n\geq1$ and by construction $h_{n}(\lambda_{n}^{*})\to h\in H$.
This verifies Assumption B2 and completes the proof.

\section{Proof of Theorem \ref{thm:Jh-convergence} }\label{sec:Proof-of-Theorem2}

\subsection{Technical Lemmas}

Let 
\begin{align}
\widetilde{U} & =(U_{1},U_{2}+\rho_{n}U_{1},\dots,\sum_{j=1}^{n}\rho_{n}^{n-j}U_{j})',\nonumber \\
\widetilde{U}_{-1} & =(0,U_{1},\dots,\sum_{j=1}^{n-1}\rho_{n}^{n-1-j}U_{j})',\text{ and }\widetilde{U}_{0}=\sum_{j=0}^{\infty}\rho_{n}^{j}U_{-j}.
\end{align}
When $\rho_{n}\neq1$, define $S=(1-\rho_{n}^{n})/(1-\rho_{n})$ and
$Q=(1-\rho_{n}^{2n})/(1-\rho_{n}^{2})$. Let $[x]$ denote the integer
part of $x$.

The next lemma provides several useful results for the regime $h_{n}\coloneqq n(1-\rho_{n})\to h\in(0,\infty)$.
Unless stated otherwise, we let $n\gto\infty$ in presenting the asymptotic
results.
\begin{lem}
\label{lem:algebraic-relationship1}When $n(1-\rho_{n})\to h\in(0,\infty)$,
for a sequence $\{\l_{n}=(\rho_{n},F_{n})\in\L_{n}\}_{n\geq1}$, the
following results hold jointly\textup{:}
\begin{enumerate}[label=(\alph*), font=\upshape]
\item \label{enu:newnQ}$Q=\frac{1-e^{-2h}}{2h}n+o(n)$,
\item \label{enu:newnS}$S=\frac{1-e^{-h_{n}}}{h_{n}}n+o(n)$,
\item \label{enu:nQS1}$\frac{n^{2}}{nQ-S^{2}}\to\frac{2h^{2}}{h(1-e^{-2h})-2(1-e^{-h})^{2}}$
,
\item \label{enu:nQS2}$\frac{nS}{nQ-S^{2}}\to\frac{2h(1-e^{-h})}{h(1-e^{-2h})-2(1-e^{-h})^{2}}$,
\item \label{enu:nQS3}$\frac{nQ}{nQ-S^{2}}\to\frac{h(1-e^{-2h})}{h(1-e^{-2h})-2(1-e^{-h})^{2}}$,
\item \label{enu:nQS4}$n^{-1}\sum_{i=1}^{n}\left(\frac{n(Q-S\rho^{i-1})}{nQ-S^{2}}\right)^{2}=\frac{nQ}{nQ-S^{2}}$,
\item \label{enu:nQS5}$n^{-1}\sum_{i=1}^{n}\frac{n^{2}(Q-S\rho^{i-1})(n\rho^{i-1}-S)}{(nQ-S^{2})^{2}}=\frac{-nS}{(nQ-S^{2})}$,
and
\item \label{enu:nQS6}$n^{-1}\sum_{i=1}^{n}\left(\frac{n(n\rho^{i-1}-S)}{nQ-S^{2}}\right)^{2}=\frac{n^{2}}{nQ-S^{2}}$.
\end{enumerate}
\end{lem}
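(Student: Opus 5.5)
The plan is to prove Lemma \ref{lem:algebraic-relationship1} by direct computation: everything reduces to closed-form geometric sums plus the limit $\rho_{n}^{n}\to e^{-h}$. Since all the statements are deterministic, joint validity is automatic. Parts (f)--(h) are exact finite-$n$ identities, while (a)--(e) are limits.

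\textbf{Parts (a)--(e).}
\begin{itemize}
\item First, from $n(1-\rho_{n})\to h$ we get $n\log\rho_{n}=n\log(1-h_{n}/n)=-h_{n}+O(h_{n}^{2}/n)\to-h$. Hence $\rho_{n}^{n}\to e^{-h}$ and $\rho_{n}^{2n}\to e^{-2h}$.
\item For (b), write $S/n=(1-\rho_{n}^{n})/h_{n}$. Since $h_{n}\to h>0$, we have $(1-\rho_{n}^{n})/h_{n}-(1-e^{-h_{n}})/h_{n}\to0$, which gives $S=\frac{1-e^{-h_{n}}}{h_{n}}n+o(n)$. Continuity also gives the cruder form $S/n\to(1-e^{-h})/h$.
\item For (a), write $Q/n=(1-\rho_{n}^{2n})/(n(1-\rho_{n})(1+\rho_{n}))$. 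Since $1+\rho_{n}\to2$, this tends to $(1-e^{-2h})/(2h)$.
\item For (c)--(e), divide the numerator and denominator by $n^{2}$, using $(nQ-S^{2})/n^{2}=Q/n-(S/n)^{2}$. This tends to
\[
\frac{1-e^{-2h}}{2h}-\frac{(1-e^{-h})^{2}}{h^{2}}=\frac{h(1-e^{-2h})-2(1-e^{-h})^{2}}{2h^{2}}.
\]
The only point needing care is that this limit is strictly positive, so the ratios converge. Positivity is a Cauchy--Schwarz statement: $\int_{0}^{1}e^{-2hr}dr>(\int_{0}^{1}e^{-hr}dr)^{2}$, strictly, because $e^{-hr}$ is nonconstant for $h>0$.
\item The three limits then follow by multiplying the limit of $(nQ-S^{2})/n^{2}$ by $1$, $S/n$ and $Q/n$ respectively. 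For example, $\frac{nS}{nQ-S^{2}}\to\frac{(1-e^{-h})/h}{[\cdots]/(2h^{2})}=\frac{2h(1-e^{-h})}{h(1-e^{-2h})-2(1-e^{-h})^{2}}$.
\end{itemize}

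\textbf{Parts (f)--(h).} These are exact algebra using $\sum_{i=1}^{n}\rho^{i-1}=S$ and $\sum_{i=1}^{n}\rho^{2(i-1)}=Q$, with $\rho=\rho_{n}$. Let $D=nQ-S^{2}$.
\begin{itemize}
\item For (f), expand
\[
\sum_{i}(Q-S\rho^{i-1})^{2}=nQ^{2}-2QS^{2}+S^{2}Q=Q(nQ-S^{2})=QD.
\]
Then $n^{-1}\cdot n^{2}QD/D^{2}=nQ/D$.
\item For (g), expand
\[
\sum_{i}(Q-S\rho^{i-1})(n\rho^{i-1}-S)=nQS-nQS-nSQ+S^{3}=-S(nQ-S^{2})=-SD.
\]
Then $n^{-1}\cdot n^{2}(-SD)/D^{2}=-nS/D$.
\item For (h), expand
\[
\sum_{i}(n\rho^{i-1}-S)^{2}=n^{2}Q-2nS^{2}+nS^{2}=n(nQ-S^{2})=nD.
\]
Then $n^{-1}\cdot n^{2}\cdot nD/D^{2}=n^{2}/D$.
\end{itemize}
These identities are simply the entries of $n(X_{2}'X_{2})^{-1}$: the inverse Gram matrix times the Gram matrix recovers the inverse.

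\textbf{Main obstacle.} The only substantive step is the strict positivity of the limiting determinant in (c)--(e), which is handled by the Cauchy--Schwarz argument above. One also needs $D>0$ for each finite $n$, which holds because $\rho_{n}\neq1$ eventually, so that $1$ and $\rho_{n}^{i-1}$ are not collinear. Everything else is routine.
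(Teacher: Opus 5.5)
Your proposal is correct and follows essentially the paper's route: closed-form geometric sums with $\rho_n^n\to e^{-h}$ for (a)--(e), and direct expansion using $\sum_{i}\rho_n^{i-1}=S$ and $\sum_{i}\rho_n^{2(i-1)}=Q$ for the exact identities (f)--(h). Your Cauchy--Schwarz argument that the limiting denominator $h(1-e^{-2h})-2(1-e^{-h})^{2}$ is strictly positive, together with the check that $nQ-S^{2}>0$ for finite $n$, fills in steps the paper uses only implicitly when passing to the limit in (c)--(e).
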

The next lemma provides several useful results for the regime $n(1-\rho_{n})\to0$.
\begin{lem}
\label{lem:algebraic-relationsip2}When $n(1-\rho_{n})\to h=0$, for
a sequence $\{\l_{n}=(\rho_{n},F_{n})\in\L_{n}\}_{n\geq1}$, the following
results hold jointly\textup{:}
\begin{enumerate}[label=(\alph*), font=\upshape]
\item \label{enu:al-relation1}$\frac{1-\rho_{n}^{i-1}}{n(1-\rho_{n})}\to r$
if $i=[nr]$,
\item \label{enu:al-relation2}$\frac{n(Q-2S+n)}{nQ-S^{2}}\to4$,
\item \label{enu:al-relation3}$\frac{n^{2}(1-\rho_{n})(n-S)}{nQ-S^{2}}\to6$,
and
\item \label{enu:al-relation4}$\frac{n^{4}(1-\rho_{n})^{2}}{nQ-S^{2}}\to12$.
\end{enumerate}
\end{lem}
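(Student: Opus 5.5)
The statement to prove is Lemma \ref{lem:algebraic-relationsip2}. The plan is to write $\rho_{n}=1-h_{n}/n$ with $h_{n}\to0$ and expand every quantity in powers of $\delta_{n}:=1-\rho_{n}$. Throughout we use $n\delta_{n}=h_{n}\to0$ and $\rho_{n}^{k}=\exp(k\log\rho_{n})$ with $\log\rho_{n}=-\delta_{n}-\delta_{n}^{2}/2+O(\delta_{n}^{3})$. The case $\rho_{n}=1$ exactly has to be excluded or handled by the $\rho=1$ convention, since $S$ and $Q$ are only defined for $\rho_{n}\neq1$; the limits below are continuous extensions.

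For part \ref{enu:al-relation1}, we use $1-\rho_{n}^{i-1}=(i-1)\delta_{n}(1+O((i-1)\delta_{n}))$, which holds uniformly for $i\le n$ because $(i-1)\delta_{n}\le h_{n}\to0$. Dividing by $n\delta_{n}$ gives $(i-1)/n\cdot(1+o(1))$. For $i=[nr]$ this tends to $r$.

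For parts \ref{enu:al-relation2}–\ref{enu:al-relation4}, we expand $S$ and $Q$ to sufficiently high order, since $nQ-S^{2}$ exhibits heavy cancellation. We write $S=\sum_{k=0}^{n-1}\rho_{n}^{k}$ and $Q=\sum_{k=0}^{n-1}\rho_{n}^{2k}$. Using $\rho_{n}^{k}=1-k\delta_{n}+\binom{k}{2}\delta_{n}^{2}-\binom{k}{3}\delta_{n}^{3}+\dots$ (exact binomial expansion of $(1-\delta_{n})^{k}$), we sum term by term with $\sum_{k<n}\binom{k}{j}=\binom{n}{j+1}$. This gives
\[
S=n-\binom{n}{2}\delta_{n}+\binom{n}{3}\delta_{n}^{2}-\dots,
\]
and similarly for $Q$, with $(1-\delta_{n})^{2}=1-\epsilon_{n}$ where $\epsilon_{n}=2\delta_{n}-\delta_{n}^{2}$. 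The tail terms are of size $n^{j+1}\delta_{n}^{j}=n\,h_{n}^{j}$, so the expansions are geometric and controlled.

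We then compute:
\begin{itemize}
\item $n-S=\tfrac{n^{2}}{2}\delta_{n}(1+o(1))$;
\item $Q-2S+n=\sum_{k}(1-\rho_{n}^{k})^{2}=\delta_{n}^{2}\sum_{k}k^{2}(1+o(1))=\tfrac{n^{3}}{3}\delta_{n}^{2}(1+o(1))$;
\item $nQ-S^{2}=n\sum_{k}\rho_{n}^{2k}-\left(\sum_{k}\rho_{n}^{k}\right)^{2}=n^{2}\,\widehat{\mathrm{Var}}_{k}(\rho_{n}^{k})$, the empirical variance over $k=0,\dots,n-1$.
\end{itemize}
Since $\rho_{n}^{k}=1-k\delta_{n}+O(k^{2}\delta_{n}^{2})$, the empirical variance equals
\[
\delta_{n}^{2}\,\widehat{\mathrm{Var}}(k)\,(1+O(h_{n}))=\delta_{n}^{2}\,\tfrac{n^{2}}{12}(1+o(1)).
\]
This uses the bound $\widehat{\mathrm{Var}}(a+b)=\widehat{\mathrm{Var}}(a)+O(\sqrt{\widehat{\mathrm{Var}}(a)\widehat{\mathrm{Var}}(b)}+\widehat{\mathrm{Var}}(b))$ with $a=-k\delta_{n}$ and $b=O(k^{2}\delta_{n}^{2})$, where $b$ is smaller by a factor $h_{n}$. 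Hence $nQ-S^{2}=\tfrac{n^{4}\delta_{n}^{2}}{12}(1+o(1))$.

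Plugging in:
\begin{itemize}
\item \ref{enu:al-relation4}: $n^{4}\delta_{n}^{2}/(nQ-S^{2})\to12$.
\item \ref{enu:al-relation3}: $n^{2}\delta_{n}\cdot\tfrac{n^{2}\delta_{n}}{2}\big/\tfrac{n^{4}\delta_{n}^{2}}{12}\to6$.
\item \ref{enu:al-relation2}: $n\cdot\tfrac{n^{3}\delta_{n}^{2}}{3}\big/\tfrac{n^{4}\delta_{n}^{2}}{12}\to4$.
\end{itemize}
These three statements hold jointly because they are deterministic.

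The main obstacle is the cancellation in $nQ-S^{2}$: naive expansions of $S$ and $Q$ separately need third-order terms. The variance representation avoids this and reduces everything to first-order Taylor control with a relative error $O(h_{n})$. That is the step I would write out carefully.
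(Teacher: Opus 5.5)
Your proof is correct, and for parts (b)--(d) it takes a different route from the paper.

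Part (a) is handled the same way in both, by a second-order binomial expansion of $(1-h_n/n)^{i-1}$. Your version, $\frac{1-\rho_n^{i-1}}{n(1-\rho_n)}=\frac{i-1}{n}(1+O(h_n))$ uniformly in $i\le n$, is the form actually needed when these weights are inserted into the sums in the $h=0$ part of the proof of Theorem \ref{thm:Jh-convergence}.

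For (b)--(d), the paper substitutes the exponential approximations of $S$ and $Q$ from Lemma \ref{lem:algebraic-relationship1}. For example, it rewrites $\frac{n(Q-2S+n)}{nQ-S^{2}}$ as $\frac{h_n(1-e^{-2h_n})-4h_n(1-e^{-h_n})+2h_n^2}{h_n(1-e^{-2h_n})-2(1-e^{-h_n})^2}+o(1)$. It then Taylor-expands $e^{-h_n}$ to fourth order, obtaining numerators $\tfrac{2}{3}h_n^4$, $h_n^4$, $2h_n^4$ over the common denominator $\tfrac{1}{6}h_n^4+O(h_n^5)$. This links the constants $4,6,12$ directly to the $h\downarrow0$ behaviour of the local-to-unity coefficients, using the same computation as in Lemma \ref{lem:continuous-Ifh}.

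The ``$+o(1)$'' step in that route is more delicate than it looks, for two reasons:
\begin{itemize}
\item Lemma \ref{lem:algebraic-relationship1} is stated only for $h\in(0,\infty)$.
\item Because $nQ-S^2\asymp n^2h_n^2$, the approximation errors in $Q$ and $S$ would have to be $o(nh_n^2)$, not merely $o(n)$.
\end{itemize}
In fact these errors are of order $h_n$; for instance, $Q-\frac{1-e^{-2h_n}}{2h_n}n=h_n+O(h_n^2)$. So when $nh_n$ does not diverge, one needs an additional argument showing that the errors in $nQ$ and in $S^2$ cancel.

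Your identities $n-S=\sum_k(1-\rho_n^k)$, $Q-2S+n=\sum_k(1-\rho_n^k)^2$ and $nQ-S^2=n^2\widehat{\mathrm{Var}}_k(\rho_n^k)$ build that cancellation in exactly. You then need only the first-order bound $0\le(1-\delta_n)^k-(1-k\delta_n)\le\binom{k}{2}\delta_n^2$. From it, all three limits follow with relative error $O(h_n)$, however fast $h_n\to0$.

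In short, the paper's route buys a formula-level connection to the $h>0$ case. Yours buys a self-contained argument that is uniform in the rate of $h_n$ and needs no fourth-order expansions.
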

The next lemma provides a set of convergence results that are used
in the proof of Theorem \ref{thm:Jh-convergence}.
\begin{lem}
\label{lem:FCLT}When $n(1-\rho_{n})\to h\in(0,\infty)$, for a sequence
$\{\l_{n}=(\rho_{n},F_{n})\in\L_{n}\}_{n\geq1}$, the following results
hold jointly\textup{:}
\begin{enumerate}[label=(\alph*), font=\upshape]
\item \label{enu:lem-a}$n^{-1/2}\widetilde{U}_{[nr]}\Rightarrow I_{h}(r)$,
\item \label{enu:lem-b}$n^{-1}\sum_{i=1}^{n}U_{i}\pto0$,
\item \label{enu:lem-c}$n^{-1}\sum_{i=1}^{n}U_{i}^{2}\pto1$,
\item \label{enu:lem-d}$n^{-1/2}\sum_{i=1}^{n}U_{i}\dto\int_{0}^{1}dW(r)$,
\item \label{enu:lem-e}$n^{-1/2}\sum_{i=1}^{n}\rho_{n}^{i-1}U_{i}\dto\int_{0}^{1}e^{-hr}dW(r)$,
\item \label{enu:lem-f}$n^{-3/2}\sum_{i=2}^{n}\widetilde{U}_{i-1}\dto\int_{0}^{1}I_{h}(r)dr$,
\item \label{enu:lem-g}$n^{-3/2}\sum_{i=2}^{n}\rho_{n}^{i-1}\widetilde{U}_{i-1}\dto\int_{0}^{1}e^{-hr}I_{h}(r)dr$,
\item \label{enu:lem-h}$n^{-1}\sum_{i=2}^{n}\widetilde{U}_{i-1}U_{i}\dto\int_{0}^{1}I_{h}(r)dW(r)$,
\item \label{enu:lem-i}$n^{-1}\sum_{i=2}^{n}\rho_{n}^{i-1}\widetilde{U}_{i-1}U_{i}\dto\int_{0}^{1}e^{-hr}I_{h}(r)dW(r)$,
\item \label{enu:lem-j}$n^{-2}\sum_{i=2}^{n}\widetilde{U}_{i-1}^{2}\dto\int_{0}^{1}I_{h}^{2}(r)dr$,
\item \label{enu:lem-k}$n^{-3/2}\sum_{i=2}^{n}\widetilde{U}_{i-1}U_{i}^{2}\dto\int_{0}^{1}I_{h}(r)dr$,
\item \label{enu:lem-l}$n^{-3/2}\sum_{i=2}^{n}\rho_{n}^{i-1}\widetilde{U}_{i-1}U_{i}^{2}\dto\int_{0}^{1}e^{-hr}I_{h}(r)dr$,
\item \label{enu:lem-m}$n^{-2}\sum_{i=2}^{n}\widetilde{U}_{i-1}^{2}U_{i}^{2}\dto\int_{0}^{1}I_{h}^{2}(r)dr$,
and
\item \label{enu:lem-n}$n^{-1-l_{1}/2}\sum_{i=2}^{n}\left(\rho_{n}^{i-1}\right)^{l_{0}}\widetilde{U}_{i-1}^{l_{1}}U_{i}^{l_{2}}=o_{p}(n)$
for $(l_{0},l_{1},l_{2})=(0,1,0)$, $(0,1,1)$, $(0,2,0)$, $(0,2,1)$,
$(0,3,0)$, $(0,3,1)$, $(0,4,0)$, $(1,1,1)$, $(1,2,0)$, $(1,3,0)$,
$(2,2,0)$, and $(1,3,0)$.
\end{enumerate}
\end{lem}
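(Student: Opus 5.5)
The plan is to reduce everything to the framework of \textcite{andrews2012asymptotics} (henceforth AG-style arguments), building on the martingale-difference structure of $\{U_i\}$ and the mixing/moment conditions in parts (ii)--(iv) of $\Lambda$. Write $\rho=\rho_n$, $h_n=n(1-\rho_n)\to h\in(0,\infty)$, and $\widetilde U_i=\sum_{j=1}^i\rho_n^{i-j}U_j$. The partial-sum process is then approximated by an Ornstein--Uhlenbeck-type integral, as follows.

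\textbf{Step 1 (basic invariance principle).} Since $\{U_i\}$ is a stationary strong-mixing martingale difference sequence with $\E_F U_1^2=1$ and uniformly bounded moments of order $2\zeta>6$ (part (iv)), a functional CLT for martingale difference arrays (e.g.\ the one invoked in AG14 / \textcite{andrews2012asymptotics}) applies uniformly along $F_n$. The conditional variance condition $n^{-1}\sum_{i\le [nr]}\sigma_i^2\pto r$ follows from a weak law for mixing arrays using part (iii). This yields $n^{-1/2}\sum_{i\le[nr]}U_i\Rightarrow W(r)$, which gives (d), and the same weak law gives (b) and (c).

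\textbf{Step 2 (parts (a), (e), (f), (g), (j)).} Write
\[
n^{-1/2}\widetilde U_{[nr]}=\sum_{j\le[nr]}\rho_n^{[nr]-j}\,n^{-1/2}U_j .
\]
Summation by parts expresses this as a continuous functional of the partial-sum process plus a remainder. Because $\rho_n^{[nr]-[ns]}\to e^{-h(r-s)}$ uniformly on $0\le s\le r\le1$ (using $h_n\to h$ and $\log\rho_n=-h_n/n+O(n^{-2})$), the remainder is $o_p(1)$ uniformly in $r$. The continuous mapping theorem then gives $n^{-1/2}\widetilde U_{[nr]}\Rightarrow I_h(r)$. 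Part (e) follows the same way, with weights $\rho_n^{i-1}\to e^{-hr}$. Parts (f), (g), and (j) are Riemann sums of continuous functionals of $n^{-1/2}\widetilde U_{[n\cdot]}$, so they follow by continuous mapping, jointly with everything above because all the quantities are functionals of the same process.

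\textbf{Step 3 (stochastic integrals, (h)--(i)).} Use the convergence of stochastic integrals for martingale-difference integrators (Kurtz--Protter, or the direct argument in \textcite{andrews2012asymptotics}). The integrand $n^{-1/2}\widetilde U_{i-1}$ is $\mathcal G_{i-1}$-measurable, and the integrator's quadratic variation is controlled by Step 1, so $n^{-1}\sum\widetilde U_{i-1}U_i\dto\int I_h\,dW$ jointly. Multiplying the integrand by the deterministic weight $\rho_n^{i-1}$, which converges uniformly to $e^{-hr}$, gives (i).

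\textbf{Step 4 (parts (k)--(m)).} Decompose $U_i^2=\sigma_i^2+(U_i^2-\sigma_i^2)$ and then $\sigma_i^2=1+(\sigma_i^2-1)$.
\begin{itemize}
\item The martingale-difference part $n^{-3/2}\sum\widetilde U_{i-1}(U_i^2-\sigma_i^2)$ has second moment $O(n^{-3}\cdot n\cdot n)=o(1)$ by part (iv).
\item The term $n^{-3/2}\sum\widetilde U_{i-1}(\sigma_i^2-1)$ is $o_p(1)$. This follows from a blocking argument: on blocks of length $b_n\to\infty$ with $b_n/n\to0$, $\widetilde U_{i-1}$ is nearly constant at scale $n^{1/2}$, while block means of $\sigma_i^2-1$ vanish by the mixing LLN. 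The mixing-rate and $\zeta$-moment conditions in parts (iii)--(iv) are exactly what is needed to bound the covariances here, as in AG14's Lemma on such sums.
\end{itemize}
Parts (l) and (m) are handled identically.

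\textbf{Step 5 (part (n)).} Each listed sum is bounded crudely: $|\rho_n^{i-1}|\le 1$ and $\max_i|\widetilde U_{i}|=O_p(n^{1/2})$ from (a). Hence
\[
n^{-1-l_1/2}\sum|\widetilde U_{i-1}|^{l_1}|U_i|^{l_2}=O_p(1)\cdot n^{-1}\sum|U_i|^{l_2}=O_p(1)=o_p(n).
\]

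\textbf{Main obstacle.} I expect the hardest step to be Step 4, and in particular showing uniformly over $F_n$ that the conditional heteroskedasticity averages out against the slowly varying weights $\widetilde U_{i-1}$. This is where the mixing-rate exponent $3\zeta/(\zeta-3)$ and the fourth-order moment bounds in part (iv) are used. I would try first to adapt the corresponding argument from AG14's proof directly, noting that the initial condition never enters, since $\widetilde U$ excludes $Y_0^*$.
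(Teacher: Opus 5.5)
Your proposal is correct and follows essentially the paper's route. The paper gets most parts by citing Lemma 5 of AG12 (with $\phi_{n,i}=1$), and the $\rho_n^{i-1}$-weighted parts (e), (g), (i), (l) from Hansen (1992, Theorem 4.4) plus continuous mapping; your FCLT, summation-by-parts/continuous-mapping, Kurtz--Protter and $U_i^2$-decomposition steps rebuild exactly that from first principles.

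Your maximal bound $\max_i|\widetilde U_i|=O_p(n^{1/2})$ for (n) is a simpler, self-contained substitute for the paper's appeal to AG12, and it suffices because it gives $O_p(1)$, which is stronger than the stated $o_p(n)$. The only slip is applying a mixing WLLN to $\sigma_i^2$ in Steps 1 and 4. The mixing assumption is on $\{U_i\}$, while $\sigma_i^2$ is only measurable with respect to the possibly larger $\mathcal{G}_{i-1}$. You can repair this by working with $U_i^2$ directly and noting that $U_i^2-\sigma_i^2$ is an $L^2$-bounded martingale difference.
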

The following three lemmas are needed for the proof of the case $h=\infty$
in Theorem \ref{thm:Jh-convergence}. Since they are simplified versions
of Lemmas 6, 7, and 9 in \textcite[AG12 hereafter]{andrews2012asymptotics},
respectively, we omit their proofs.
\begin{lem}
\label{lem:Order-expect}When $n(1-\rho_{n})\to h=\infty$, for a
sequence $\{\l_{n}=(\rho_{n},F_{n})\in\L_{n}\}_{n\geq1}$, the following
results hold jointly\textup{:}
\begin{align*}
\E(\widetilde{U}_{0}^{2}U_{1}^{2})-(1-\rho_{n}^{2})^{-1}(\E U_{1}^{2})^{2} & =O(1),\\
\E\widetilde{U}_{0}^{2}-(1-\rho_{n}^{2})^{-1}\E U_{1}^{2} & =O(1),\\
\E\widetilde{U}_{0} & =O(1),\quad\text{and}\\
\E(\widetilde{U}_{0}U_{1}^{2}) & =O(1).
\end{align*}
\end{lem}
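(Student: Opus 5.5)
The plan is to expand each expectation as a series in the $U_{-j}$, use the martingale-difference property to remove most cross terms exactly, and bound what remains with strong-mixing covariance inequalities, keeping every bound uniform in $n$. Since $h_n=n(1-\rho_n)\to\infty$, we have $\rho_n<1$ for large $n$. Also $\rho_n\ge-1+\epsilon$, so $|\rho_n|<1$. Hence $\widetilde U_0=\sum_{j\ge0}\rho_n^jU_{-j}$ converges in $L^2$, and we may interchange expectation and summation.

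As in AG12, we take $\{U_i\}$ to be extended stationarily to negative indices, with the martingale-difference, mixing and moment conditions of $\Lambda$ holding there. The constants are then uniform in $F\in\Lambda$.

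\emph{Second and third claims.} We have $\E\widetilde U_0=\sum_j\rho_n^j\E U_{-j}=0$. For $j\neq k$, the martingale-difference property gives $\E(U_{-j}U_{-k})=0$. Therefore
\[
\E\widetilde U_0^2=\sum_j\rho_n^{2j}\E U_1^2=(1-\rho_n^2)^{-1}\E U_1^2,
\]
so both differences are exactly $0$.

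\emph{Fourth claim.} We write $\E(\widetilde U_0U_1^2)=\sum_j\rho_n^j\cov(U_{-j},U_1^2)$, which holds because $\E U_{-j}=0$. By the strong-mixing covariance inequality and the moment bound (iv),
\[
|\cov(U_{-j},U_1^2)|\le C\,\alpha_F(j+1)^{1-3/\zeta}\le C'(j+1)^{-3}.
\]
This bound uses $\alpha_F(m)\le Cm^{-3\zeta/(\zeta-3)}$. Since $|\rho_n|^j\le1$, the sum is bounded by $C'\sum_m m^{-3}<\infty$, uniformly in $n$.

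\emph{First claim.} This is the main step. Write $\E(\widetilde U_0^2U_1^2)=\sum_{j,k}\rho_n^{j+k}\E(U_{-j}U_{-k}U_1^2)$.
\begin{enumerate}
\item Diagonal terms. We have $\E(U_{-j}^2U_1^2)=(\E U_1^2)^2+\cov(U_{-j}^2,U_1^2)$. Summing $\rho_n^{2j}(\E U_1^2)^2$ gives exactly $(1-\rho_n^2)^{-1}(\E U_1^2)^2$. The remainder is bounded as in the fourth claim, using the moment bound on products from (iv), by $C\sum_j(j+1)^{-3}$.
\item Off-diagonal terms, $j<k$. Write $U_1^2=(U_1^2-\E U_1^2)+\E U_1^2$. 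The constant part contributes $\E(U_{-j}U_{-k})\E U_1^2=0$, so each term equals $\E(U_{-j}U_{-k}(U_1^2-\E U_1^2))$. This admits two covariance bounds.
\begin{enumerate}
\item Split between $\{U_{-k},U_{-j}\}$ and $U_1^2$ (gap $j+1$). The term is $\cov(U_{-j}U_{-k},U_1^2)$, so it is at most $C\alpha_F(j+1)^{1-3/\zeta}$.
\item Split between $U_{-k}$ and $\{U_{-j},\,U_1^2-\E U_1^2\}$ (gap $k-j$). Since $\E U_{-k}=0$, the term is a covariance, so it is at most $C\alpha_F(k-j)^{1-3/\zeta}$.
\end{enumerate}
\end{enumerate}
Hölder's inequality with three factors, together with the $\zeta$-moment bound in (iv), supplies the constants in both bounds. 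Taking the smaller bound, each off-diagonal term is at most $C\,m^{-3}$ with $m=\max(j+1,k-j)$. For each $m$ there are at most $O(m)$ pairs $(j,k)$ with $\max(j+1,k-j)=m$. Hence the off-diagonal sum is bounded by $C\sum_m m\cdot m^{-3}<\infty$, uniformly in $n$.

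The off-diagonal double sum is where the real work lies. It must be controlled without any help from the factor $\rho_n^{j+k}$, because $\rho_n$ may approach $1$. This is why we take the better of the two gaps and why the mixing rate $3\zeta/(\zeta-3)$ is needed. Combining the diagonal and off-diagonal bounds gives the first claim.
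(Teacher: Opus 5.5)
Your proof is correct and follows essentially the same route as the paper. The paper omits the proof as a simplified version of Lemma 6 in AG12, and that argument likewise expands $\widetilde U_0$ as a series, uses the martingale-difference property to kill cross terms (so the second and third claims hold with difference exactly zero), and bounds the remaining covariance terms by the strong-mixing inequality with $\alpha_F(m)^{1-3/\zeta}=O(m^{-3})$, taking the better of the two gaps for the off-diagonal terms. One small technical fix: for the first claim, passing the expectation through the double series needs the partial sums to converge in $L^{\zeta}$ rather than only in $L^2$. That is available from condition (iv) since $|\rho_n|<1$, and combined with H\"older's inequality and $3/\zeta<1$ it justifies the interchange.
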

\begin{lem}
\label{lem:moment-condition}When $n(1-\rho_{n})\to h=\infty$, for
a sequence $\{\l_{n}=(\rho_{n},F_{n})\in\L_{n}\}_{n\geq1}$, we have
\[
\E\left(\sum_{i=2}^{n}\left[\E\zeta_{i}^{2}-\E(\zeta_{i}^{2}|\mathcal{G}_{i-1})\right]\right)^{2}\to0,\quad\text{where }\zeta_{i}\equiv n^{-1/2}\frac{\widetilde{U}_{i-1}U_{i}}{(\E(\widetilde{U}_{0}^{2}U_{1}^{2}))^{1/2}}.
\]
\end{lem}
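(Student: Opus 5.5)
The plan is to reduce the statement to a variance bound for a sum of weakly dependent, nonstationary terms, following the argument for Lemma 9 of AG12. Let $c_{n}:=\E(\widetilde{U}_{0}^{2}U_{1}^{2})$. Because $\widetilde{U}_{i-1}\in\mathcal{G}_{i-1}$, we have $\E(\zeta_{i}^{2}|\mathcal{G}_{i-1})=n^{-1}c_{n}^{-1}\widetilde{U}_{i-1}^{2}\sigma_{i}^{2}$ and $\E\zeta_{i}^{2}=n^{-1}c_{n}^{-1}\E(\widetilde{U}_{i-1}^{2}\sigma_{i}^{2})$. Hence, with $X_{i}:=\widetilde{U}_{i-1}^{2}\sigma_{i}^{2}$, the quantity in the lemma equals
\[
n^{-2}c_{n}^{-2}\,\var\Big(\sum_{i=2}^{n}X_{i}\Big).
\]
By Lemma \ref{lem:Order-expect}, $c_{n}=(1-\rho_{n}^{2})^{-1}+O(1)$, and $(1-\rho_{n}^{2})^{-1}\to\infty$ when $\rho_{n}\to1$. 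So $c_{n}^{-2}\le C(1-\rho_{n}^{2})^{2}$ in all cases, including $\rho_{n}$ bounded away from one, where everything is $O(1)$. It therefore suffices to show
\[
\var\Big(\sum_{i=2}^{n}X_{i}\Big)\le C\,n\,(1-\rho_{n}^{2})^{-2}\,(1-\rho_{n})^{-1},
\]
since the normalized quantity is then $O(1/(n(1-\rho_{n})))\to0$ because $h=\infty$.

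For the variance bound, write $\var(\sum X_{i})\le\sum_{i}\sum_{m\ge0}2|\cov(X_{i},X_{i+m})|$.
\begin{itemize}
\item \emph{Diagonal terms.} Using part (iv) of the definition of $\Lambda$ (bounded $\zeta$-moments of products of $U$'s and $U^{2}$'s) and expanding $\widetilde{U}_{i-1}=\sum_{j<i}\rho_{n}^{i-1-j}U_{j}$, obtain $\E X_{i}^{2}\le C(1-\rho_{n}^{2})^{-2}$ uniformly in $i$.
\item \emph{Splitting the lagged partial sum.} For the lag $m$ covariances, split $\widetilde{U}_{i+m-1}=\rho_{n}^{[m/2]}\widetilde{U}_{i+[m/2]-1}+R_{i,m}$. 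Here $R_{i,m}$ is a weighted sum of $U_{j}$ with $j\ge i+[m/2]$.
\item \emph{First piece.} The term coming from $\rho_{n}^{[m/2]}\widetilde{U}_{i+[m/2]-1}$ contributes at most $C(1-\rho_{n}^{2})^{-2}|\rho_{n}|^{m/2}$ by Cauchy--Schwarz and the moment bound.
\item \emph{Second piece.} The term involving $R_{i,m}^{2}\sigma_{i+m}^{2}$ is measurable with respect to $\sigma(U_{j}:j\ge i+[m/2])$, while $X_{i}$ depends only on $U_{j}$ with $j\le i$. Davydov's covariance inequality, with the $\zeta$-moments and $\alpha_{F}(m)\le Cm^{-3\zeta/(\zeta-3)}$, then bounds this part by $C(1-\rho_{n}^{2})^{-2}\alpha_{F}([m/2])^{1-3/\zeta}$, which is summable in $m$.
\item \emph{Cross terms.} The cross terms $2\rho_{n}^{[m/2]}\widetilde{U}_{i+[m/2]-1}R_{i,m}$ are handled the same way, giving the geometric factor $|\rho_{n}|^{m/2}$.
\end{itemize}
Summing over $m$ gives $\sum_{m}|\cov(X_{i},X_{i+m})|\le C(1-\rho_{n}^{2})^{-2}[(1-|\rho_{n}|)^{-1}+1]$. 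Summing over $i$ then yields the required bound, using $\rho_{n}\ge-1+\varepsilon$ to replace $1-|\rho_{n}|$ by a multiple of $1-\rho_{n}$ or a constant.

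The main obstacle is the moment control in the mixing step. The fourth moments of the partial sums $R_{i,m}$ and $\widetilde{U}_{i-1}$ must be shown to be $O((1-\rho_{n}^{2})^{-2})$ (not $O((1-\rho_{n})^{-4})$ or worse) at the level of integrability that Davydov's inequality requires, namely moments of order $\zeta$ for the product. For this I would expand the fourth power and bound the resulting sums of products $U_{j_{1}}U_{j_{2}}U_{j_{3}}U_{j_{4}}$. The martingale-difference property kills terms with a single largest index, and the mixing-rate covariance bounds control the remaining terms, exactly as in the proofs of Lemmas 6 and 9 of AG12. This is also why part (iv) of the definition of $\Lambda$ is stated for products involving $U_{i+1}^{2}$ and $U_{1}^{2}$.
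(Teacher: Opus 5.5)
Your reduction is correct. Since $\widetilde U_{i-1}\in\mathcal G_{i-1}$, the quantity equals $n^{-2}c_n^{-2}\mathrm{var}(\sum_{i=2}^n X_i)$ with $X_i=\widetilde U_{i-1}^2\sigma_i^2$, and your target bound $\mathrm{var}(\sum_i X_i)\le Cn(1-\rho_n^2)^{-2}(1-\rho_n)^{-1}$ has the right order. The diagonal, first-piece and cross-term bounds are also fine, because they use only Cauchy--Schwarz. The step that fails is the Davydov step. The conditional variance $\sigma_{i+m}^2=E(U_{i+m}^2|\mathcal G_{i+m-1})$ is only $\mathcal G_{i+m-1}$-measurable. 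It typically depends on the entire past, including $U_j$ with $j\le i$: for the GARCH designs in the paper, $\sigma_t^2$ is a constant plus an infinite weighted sum of $U_{t-k}^2$. Moreover, $\mathcal G_{i+m-1}$ may contain information not generated by the $U$'s at all. So $R_{i,m}^2\sigma_{i+m}^2$ is not $\sigma(U_j:j\ge i+[m/2])$-measurable, and unless $\mathcal G_i$ is the natural filtration, $X_i$ need not be a function of the $U$'s either. The mixing numbers in part (iii) of $\Lambda$ refer only to $\{U_i\}$. Davydov's inequality therefore gives no bound on $\mathrm{cov}(X_i,R_{i,m}^2\sigma_{i+m}^2)$, and your summable $\alpha_F([m/2])^{1-3/\zeta}$ term is unjustified.

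The missing idea is to remove the conditional variances before any covariance calculation. Write $E\zeta_i^2-E(\zeta_i^2|\mathcal G_{i-1})=(E\zeta_i^2-\zeta_i^2)+D_i$ with $D_i=\zeta_i^2-E(\zeta_i^2|\mathcal G_{i-1})$. The $D_i$ are martingale differences with respect to $\mathcal G_i$, so $E(\sum_i D_i)^2=\sum_i E D_i^2\le\sum_i E\zeta_i^4=n^{-2}c_n^{-2}\sum_i E(\widetilde U_{i-1}^4U_i^4)=O(n^{-1})$, by the same moment bound as your diagonal step. What remains is $n^{-2}c_n^{-2}\mathrm{var}(\sum_i \widetilde U_{i-1}^2U_i^2)$. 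Now $\widetilde U_{i-1}^2U_i^2$ is a function of $U_1,\dots,U_i$ and $R_{i,m}^2U_{i+m}^2$ is a function of $U_{i+[m/2]},\dots,U_{i+m}$, so your splitting and Davydov argument go through as written. This is also consistent with part (iv) of $\Lambda$, which controls products involving $U_{i+1}^2$ and $U_1^2$ rather than $\sigma^2$'s.

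For the moment control you flag as the main obstacle, you need not expand fourth powers. Burkholder's inequality for the martingale differences $U_j$, followed by Minkowski, gives $\|\sum_j a_jU_j\|_p\le C_p(\sum_j a_j^2)^{1/2}\|U_1\|_p$. Hence $\|\widetilde U_{i-1}\|_p$ and $\|R_{i,m}\|_p$ are $O((1-\rho_n^2)^{-1/2})$ for $2\le p\le 4\zeta$, since part (iv) with $U_{i-s}=U_{i-t}=U_1$ gives $E|U_1|^{4\zeta}\le M$.

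Separately, your claim $c_n^{-2}\le C(1-\rho_n^2)^2$ when $\rho_n$ stays away from one needs $\inf_n c_n>0$. Lemma \ref{lem:Order-expect} only bounds $c_n$ above there; the lower bound is automatic only when $\rho_n\to1$.

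The paper itself gives no proof of this lemma and defers to Lemma 7 of AG12.
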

\begin{lem}
\label{lem:integrable-condi}When $n(1-\rho_{n})\to h=\infty$, for
a sequence $\{\l_{n}=(\rho_{n},F_{n})\in\L_{n}\}_{n\geq1}$, we have
\[
\sum_{i=2}^{n}\E(\zeta_{i}^{2}1(|\zeta_{i}|>\delta)|\mathcal{G}_{i-1})\pto0
\]
for any $\delta>0$.
\end{lem}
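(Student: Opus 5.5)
The plan is to prove this lemma in the same way as AG12's Lemma 9. Because $\sum_{i=2}^{n}E(\zeta_i^2|\mathcal{G}_{i-1})$ is itself of order one, I will show the stronger unconditional statement
\[
\sum_{i=2}^{n}E\bigl(\zeta_i^2 1(|\zeta_i|>\delta)\bigr)\to 0 .
\]
Markov's inequality then delivers the stated convergence in probability. To get there I will use a Lyapunov-type bound: for $\kappa>0$ with $2+\kappa$ moments available,
\[
\zeta_i^2 1(|\zeta_i|>\delta)\le |\zeta_i|^{2+\kappa}/\delta^{\kappa}.
\]
It therefore suffices to show $\sum_{i=2}^{n}E|\zeta_i|^{2+\kappa}\to0$. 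Given condition (iv) of $\Lambda$ (moments of order $\zeta>3$ for products of $U$'s), I will take $2+\kappa=4$ when possible. If fourth moments of the product $\widetilde U_{i-1}U_i$ are not directly available, I will instead use a truncation argument.

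The first step is to control the normalization. By Lemma \ref{lem:Order-expect},
\[
E(\widetilde U_0^2U_1^2)=(1-\rho_n^2)^{-1}+O(1),
\]
and this is bounded below by a constant times $(1-\rho_n)^{-1}$, since $\rho_n\in[-1+\epsilon,1]$ and $1-\rho_n^2=(1-\rho_n)(1+\rho_n)$ with $1+\rho_n\ge\epsilon$. Hence
\[
\zeta_i^2\le C n^{-1}(1-\rho_n)\widetilde U_{i-1}^2U_i^2 .
\]
The second step is a moment bound on $\widetilde U_{i-1}=\sum_{j<i}\rho_n^{i-1-j}U_j$. Using Cauchy--Schwarz (Hölder) to separate $U_i$, and a Rosenthal/Burkholder inequality for the martingale-difference sum defining $\widetilde U_{i-1}$ (the $U_j$ are MDS with respect to $\mathcal{G}_j$), I expect
\[
E|\widetilde U_{i-1}|^{p}\le C\bigl(\textstyle\sum_j\rho_n^{2(i-1-j)}\bigr)^{p/2}\le C(1-\rho_n^2)^{-p/2}
\]
for $p\le\zeta$, using the uniform bound $\sup E|U_j|^{\zeta}\le M$. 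Consequently
\[
E|\zeta_i|^{p}\le C n^{-p/2}(1-\rho_n)^{p/2}(1-\rho_n)^{-p/2}=Cn^{-p/2},
\]
and summing over $i$ gives $O(n^{1-p/2})\to0$ for any $p>2$.

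The main obstacle is the joint moment of $\widetilde U_{i-1}$ and $U_i$ when only $\zeta>3$ moments of products are assumed: a plain Hölder split would demand more moments of $U_i$ than are available. I would first try to condition on $\mathcal{G}_{i-1}$ and use condition (iv), which bounds moments of products $U_{i-s}U_{i-t}U_{i+1}^2$, with an exponent $p$ slightly above $2$ (for instance $p=2+\kappa$ with $(2+\kappa)/2\cdot$ appropriate exponents $\le\zeta$). Expanding $\widetilde U_{i-1}^2U_i^2$ as a double sum over $s,t$, I would bound $E|\cdot|^{p/2}$ via Minkowski's inequality together with the mixing-based covariance inequality from condition (iii), as in AG12. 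If that route is too delicate, the fallback is truncation: split $\zeta_i$ according to whether $|U_i|\le K$. The bounded part is handled by the moment bound for $\widetilde U_{i-1}$ alone. The tail part is controlled by uniform integrability, $E(\widetilde U_{i-1}^2U_i^21(|U_i|>K))(1-\rho_n)\le\eta(K)$, obtained from the product moment bound in (iv), followed by letting $K\to\infty$.
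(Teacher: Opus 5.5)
Your main line is sound, and it takes a different route from the one the paper relies on. The paper gives no proof; it defers to Lemma 9 of AG12. That is the AG12-style argument seen in the paper's proof of Lemma \ref{lem:case2lem}: expand a moment of $\widetilde U_{i-1}U_i$ into multiple sums over lags, bound the diagonal terms by (iv), and make the off-diagonal terms summable with the $\alpha$-mixing covariance inequality and the rate in (iii). You instead get the key bound $E|\widetilde U_{i-1}|^{q}\le C(1-\rho_n^2)^{-q/2}$ from Burkholder's inequality for the martingale-difference sum $\sum_{j<i}\rho_n^{i-1-j}U_j$, followed by Minkowski in $L^{q/2}$. This needs only the MDS property and uniform marginal moments. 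It is uniform in $i$, which matters here because $\widetilde U_{i-1}$ starts at $U_1$ and is not stationary in $i$, and it avoids all index bookkeeping. The mixing route, in exchange, reuses machinery the paper needs anyway for Lemma \ref{lem:moment-condition}.

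However, the ``main obstacle'' you raise does not exist, and you should drop the fallbacks. Condition (iv) with the singleton $A=\{U_1^2\}$, together with stationarity, gives $\sup_j E|U_j|^{2\zeta}\le M$ with $2\zeta>6$. So the plain Cauchy--Schwarz split already works with $p=3$:
$E|\widetilde U_{i-1}U_i|^3\le (E\widetilde U_{i-1}^6)^{1/2}(EU_i^6)^{1/2}\le C(1-\rho_n^2)^{-3/2}$, using Burkholder with exponent $6$.
Hence $E|\zeta_i|^3\le Cn^{-3/2}$ uniformly in $i$, so
$\sum_{i=2}^n E(\zeta_i^2 1(|\zeta_i|>\delta))\le \delta^{-1}\sum_{i=2}^n E|\zeta_i|^3=O(n^{-1/2})$,
and Markov's inequality finishes the proof. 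Your preferred exponent $2+\kappa=4$ is not reachable this way in general, since it needs $\zeta\ge 4$.

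Your first fallback does not deliver the result. Minkowski applied termwise to $\widetilde U_{i-1}^2U_i^2=\sum_{s,t}\rho_n^{s+t}U_{i-1-s}U_{i-1-t}U_i^2$ only gives $E|\widetilde U_{i-1}U_i|^p\le C(1-\rho_n)^{-p}$. Hence $\sum_i E|\zeta_i|^p=O(n\,h_n^{-p/2})$ with $h_n=n(1-\rho_n)$, which need not vanish (take $h_n=\log n$). The cancellation among the $s\neq t$ terms, which Burkholder captures, is lost. Covariance inequalities act on expectations of products, so they can restore it only through an expansion of an integer moment, which is the AG12 route. The truncation fallback needs the same H\"older split (or such an expansion) for its uniform-integrability step, so it is no easier.

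Finally, two points on the normalization. The lower bound $E(\widetilde U_0^2U_1^2)\ge c(1-\rho_n)^{-1}$ comes from $1+\rho_n\le 2$. The condition $1+\rho_n\ge\epsilon$ is what you need for the upper bound $(1-\rho_n^2)^{-1}\le\epsilon^{-1}(1-\rho_n)^{-1}$ in the Burkholder step. Also, when $\rho_n\to\rho^*<1$, the $O(1)$ remainder in Lemma \ref{lem:Order-expect} does not give a positive lower bound. You need $\inf_n E(\widetilde U_0^2U_1^2)>0$ separately, which the paper's definition of $\zeta_i$ also tacitly presupposes.
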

The next lemma is slightly different from Lemma 8 in AG12, so we prove
the new results. In this lemma, we redefine $X_{2}(\rho_{n})$ as
an $n\times2$ matrix whose $i$-th row is $(1,n^{1/2}(1-\rho_{n})^{1/2}\rho_{n}^{i-1})$
and let $X(\rho_{n})=[X_{1}:X_{2}(\rho_{n})]$. Note that this redefinition
does not affect $P_{X_{2}(\rho_{n})}$ or $M_{X_{2}(\rho_{n})}$.
\begin{lem}
\label{lem:case2lem}When $n(1-\rho_{n})\to h=\infty$, for a sequence
$\{\l_{n}=(\rho_{n},F_{n})\in\L_{n}\}_{n\geq1}$, the following results
hold jointly\textup{:}
\begin{enumerate}[label=(\alph*), font=\upshape]
\item \label{enu:case2lem-a}$n^{-1}(1-\rho_{n})^{1/2}\widetilde{U}_{-1}'X_{2}(\rho_{n})=(o_{p}(1),o_{p}(1))$,
\item \label{enu:case2lem-b}$(\E\widetilde{U}_{0}^{2})^{-1}n^{-1}\widetilde{U}_{-1}'\widetilde{U}_{-1}\pto1$,
\item \label{enu:case2lem-c}$(\E(\widetilde{U}_{0}^{2}U_{1}^{2}))^{-1}n^{-1}\sum_{i=1}^{n}\widetilde{U}_{i-1}^{2}U_{i}^{2}\pto1$,
\item \label{enu:case2lem-d}$(X(\rho_{n})'X(\rho_{n}))^{-1}X(\rho_{n})'U=(O_{p}(n^{-1/2}(1-\rho_{n})^{3/2}),O_{p}(n^{-1/2}(1-\rho_{n})),O_{p}(n^{-1/2}(1-\rho_{n})))'$,
\item \label{enu:case2lem-e}$(\E(\widetilde{U}_{0}^{2}U_{1}^{2}))^{-1}n^{-1}\widetilde{U}_{-1}'\Delta^{2}\widetilde{U}_{-1}\pto1$,
\item \label{enu:case2lem-f}$(1-\rho_{n})^{1/2}n^{-1}(X_{2}(\rho_{n})'\Delta^{2}\widetilde{U}_{-1})=O_{p}(1)$,
and
\item \label{enu:case2lem-g}$n^{-1}(X_{2}(\rho_{n})'\Delta^{2}X_{2}(\rho_{n}))=O_{p}(1)$.
\end{enumerate}
\end{lem}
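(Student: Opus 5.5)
The plan is to reduce everything to quantities built from the innovations and the nonstationary partial sums $\widetilde{U}_{i-1}$, and then reuse the arguments behind Lemma 8 of AG12 together with Lemma \ref{lem:Order-expect}. The first step is algebraic. Under $\rho_{n}$ we have $Y_{i-1}=\mu+\rho_{n}^{i-1}Y_{0}^{*}+\widetilde{U}_{i-1}$. The redefined columns of $X_{2}(\rho_{n})$, namely $1$ and $n^{1/2}(1-\rho_{n})^{1/2}\rho_{n}^{i-1}$, absorb $\mu$ and the initial-condition term. Hence $M_{X_{2}(\rho_{n})}X_{1}=M_{X_{2}(\rho_{n})}\widetilde{U}_{-1}$ and $M_{X(\rho_{n})}Y=M_{X(\rho_{n})}U$, and we may take $X(\rho_{n})=[\widetilde{U}_{-1}:X_{2}(\rho_{n})]$ throughout. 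This is what makes all results free of $Y_{0}^{*}$.

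Part (a) follows from direct second-moment bounds. For the first entry, the long-run variance of the AR$(1)$ partial sums gives $\E(\sum_{i}\widetilde{U}_{i-1})^{2}=O(n(1-\rho_{n})^{-2})$, using the mixing/moment conditions (iii)--(iv) of $\Lambda$. Multiplying by $n^{-1}(1-\rho_{n})^{1/2}$ yields $O_{p}((n(1-\rho_{n}))^{-1/2})=o_{p}(1)$. For the second entry, $\E|\sum_{i}\rho_{n}^{i-1}\widetilde{U}_{i-1}|\le\sum_{i}|\rho_{n}|^{i-1}(\E\widetilde{U}_{i-1}^{2})^{1/2}=O((1-|\rho_{n}|)^{-1}(1-\rho_{n}^{2})^{-1/2})$. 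Multiplying by $n^{-1/2}(1-\rho_{n})$ gives $o_{p}(1)$, since $n(1-\rho_{n})\to\infty$ and $\rho_{n}\ge-1+\epsilon$.

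For (b) and (c), I compare $\widetilde{U}_{i-1}$ with its stationary counterpart $\widetilde{U}_{i-1}+\rho_{n}^{i-1}\widetilde{U}_{0}$. The difference contributes $\sum_{i}\rho_{n}^{2(i-1)}\widetilde{U}_{0}^{2}=O_{p}((1-\rho_{n}^{2})^{-2})$ and, through Cauchy--Schwarz, cross terms of smaller order. Both are $o_{p}(n(1-\rho_{n}^{2})^{-1})$, which is negligible relative to the normalizations $n\E\widetilde{U}_{0}^{2}$ and $n\E(\widetilde{U}_{0}^{2}U_{1}^{2})$; these normalizations are of order $n(1-\rho_{n}^{2})^{-1}$ by Lemma \ref{lem:Order-expect}. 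The stationary versions then follow from the weak-LLN variance calculations of AG12, Lemma 8.

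For (d), let $D_{n}=\diag(n^{1/2}(1-\rho_{n})^{-1/2},n^{1/2},n^{1/2})$. By (b), $D_{n}^{-1}X'XD_{n}^{-1}$ has $(1,1)$ entry bounded away from $0$ and $\infty$. Its off-diagonal first-row entries are $o_{p}(1)$ by (a). The $X_{2}$-block is $\left[\begin{smallmatrix}1 & n^{-1/2}(1-\rho_{n})^{1/2}S\\ \cdot & (1-\rho_{n})Q\end{smallmatrix}\right]+o(1)$, whose off-diagonal is $O((n(1-\rho_{n}))^{-1/2})\to0$ and whose diagonal entry $(1-\rho_{n})Q\to(1+\rho_{n})^{-1}$ stays in $[1/2,1/\epsilon]$. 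Also $D_{n}^{-1}X'U=O_{p}(1)$ by martingale-difference second moments, since $\E(\widetilde{U}_{-1}'U)^{2}=O(n(1-\rho_{n}^{2})^{-1})$ and $\E(X_{2}'U)^{2}=O(n)$. Hence $(X'X)^{-1}X'U=D_{n}^{-1}O_{p}(1)$. I would then check that these rates deliver the ones displayed in (d) in exactly the form used later in the proof of Theorem \ref{thm:Jh-convergence}, adjusting the bookkeeping of powers of $1-\rho_{n}$ if needed; this is the one place where I am not sure the displayed exponents are literally what my argument produces.

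For (e)--(g), write $\Delta_{ii}^{2}=\widehat{U}_{i}^{2}(1-p_{ii}^{*})^{-2}$ with $(1-p_{ii}^{*})^{-2}=1+O(n^{-1/2})$ uniformly in $i$, so this factor is harmless. Next expand $\widehat{U}_{i}=U_{i}-X_{i}'b_{n}$ with $b_{n}=(X'X)^{-1}X'U$. Then
\[
n^{-1}\widetilde{U}_{-1}'\Delta^{2}\widetilde{U}_{-1}=n^{-1}\sum_{i}\widetilde{U}_{i-1}^{2}\bigl(U_{i}^{2}-2U_{i}X_{i}'b_{n}+(X_{i}'b_{n})^{2}\bigr)(1+O(n^{-1/2})).
\]
The leading term is handled by (c). The remainder terms involve sums such as $\sum_{i}\widetilde{U}_{i-1}^{3}U_{i}$, $\sum_{i}\widetilde{U}_{i-1}^{4}$ and $\sum_{i}\widetilde{U}_{i-1}^{2}\rho_{n}^{2(i-1)}$. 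Bounding their expectations using the fourth-moment condition (iv), for example $\E\sum_{i}\widetilde{U}_{i-1}^{4}=O(n(1-\rho_{n}^{2})^{-2})$, and multiplying by the rates from (d) should make each remainder $o_{p}(n(1-\rho_{n}^{2})^{-1})$. Parts (f) and (g) follow from the same expansion with one or both $\widetilde{U}_{-1}$ factors replaced by columns of $X_{2}$, which are bounded by $1$ and $(n(1-\rho_{n}))^{1/2}|\rho_{n}|^{i-1}$; here only $O_{p}$ bounds are needed.

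The main obstacle is this remainder control in (e). There the powers of $(1-\rho_{n})^{-1}$ coming from the higher moments of $\widetilde{U}_{i-1}$ must be offset exactly by the rates in (d), uniformly over $\rho_{n}$ ranging from near $-1+\epsilon$ up to the local-to-unity boundary. My first attempt would be Cauchy--Schwarz on each cross term, together with explicit fourth-moment bounds for $\widetilde{U}_{i-1}$ obtained from the mixing inequality as in AG12.
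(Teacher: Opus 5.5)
Your hesitation about (d) is justified, but the problem is in the displayed exponents, not in your argument. Your scaling shows that $D_n^{-1}X'XD_n^{-1}$ is asymptotically diagonal with diagonal entries in a compact subset of $(0,\infty)$, and that $D_n^{-1}X'U=O_p(1)$. Hence
\[
(X'X)^{-1}X'U=(O_p(n^{-1/2}(1-\rho_n)^{1/2}),O_p(n^{-1/2}),O_p(n^{-1/2}))',
\]
and these rates are sharp. The rates displayed in (d) are smaller by a factor $1-\rho_n$ in every coordinate, and they are false whenever $\rho_n\to1$. For example, with i.i.d.\ $N(0,1)$ errors your expansion shows that the intercept coefficient equals $n^{-1}\sum_iU_i+o_p(n^{-1/2})$, which is not $O_p(n^{-1/2}(1-\rho_n))$.

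The paper reaches its exponents through the cofactor formula. It derives $\det=O_p((1-\rho_n)^{-2})$ and then uses this \emph{upper} bound as if it bounded $\det^{-1}$. In fact $\det(n^{-1}X'X)=(1-\rho_n)^{-1}\det(D_n^{-1}X'XD_n^{-1})$ is of exact order $(1-\rho_n)^{-1}$. With that order, the paper's own bounds on $T_1,T_2,T_3$ give exactly your rates. So do not adjust your bookkeeping toward the displayed exponents; state (d) with your rates.

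Your replacement of $X_1$ by $\widetilde U_{-1}$ is also genuinely needed in (d). With $X_1$ itself as the first column, the last two coordinates shift by $-(\mu,h_n^{-1/2}Y_0^*)'$ times the first coordinate, and this is uncontrolled for arbitrary $Y_0^*$. The paper tacitly makes the same replacement when it writes $\widehat U_i$ with rows $(\widetilde U_{i-1},1,h_n^{1/2}\rho_n^{i-1})$.

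Parts (e)--(g) are all that the proof of Theorem \ref{thm:Jh-convergence} uses, and they remain true under the corrected rates. Verify them with your rates rather than importing the paper's reduction to (\ref{eq:boundsh4}), which is calibrated to the faulty ones. No delicate cancellation is needed.
\begin{itemize}
\item Write $b_n=(b_1,b_2,b_3)'$, $h_n=n(1-\rho_n)$, $X_i'b_n=\widetilde U_{i-1}b_1+b_2+h_n^{1/2}\rho_n^{i-1}b_3$, and $R_2=(1-\rho_n^2)n^{-1}\sum_i\widetilde U_{i-1}^2(X_i'b_n)^2$.
\item Use $E\widetilde U_{i-1}^4=O((1-\rho_n^2)^{-2})$, $E\widetilde U_{i-1}^2\le(1-\rho_n^2)^{-1}$ and $\sum_i\rho_n^{2(i-1)}\le(1-\rho_n^2)^{-1}$. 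Then each of the three pieces of $R_2$ is $O_p(n^{-1})$.
\item By Cauchy--Schwarz, the cross term in (e) is at most $2(R_1R_2)^{1/2}$, where $R_1=O_p(1)$ is the leading term handled by (c).
\item For (f) and (g), Cauchy--Schwarz suffices, together with $\sum_i\widehat U_i^2\le\sum_iU_i^2$, $(1-p_{ii}^*)^{-2}\le(1-n^{-1/2})^{-2}$, and the analogous $O_p(1)$ bound for $n^{-1}h_n\sum_i\rho_n^{2(i-1)}\widehat U_i^2$.
\end{itemize}

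Elsewhere your plan follows the paper's architecture but with lighter tools. Both reduce to $\widetilde U_{-1}$, both rely on AG12's Lemma 8 for the laws of large numbers in (b)--(c), and both expand $\widehat U_i=U_i-X_i'b_n$ in (e)--(g).
\begin{itemize}
\item In (a), the paper expands $E(\sum_i\rho_n^{i-1}\widetilde U_{i-1})^2$ into a quadruple sum controlled by the $\alpha$-mixing covariance inequality. Your martingale-difference second-moment bound and Minkowski $L^1$ bound are shorter. Since $E(U_kU_l)=0$ for $k\neq l$, the mixing step is unnecessary there.
\item In (b)--(c), your comparison with the stationary process $\widetilde U_{i-1}+\rho_n^{i-1}\widetilde U_0$ makes explicit the transfer from AG12 that the paper only cites.
\item In (d), your diagonal scaling replaces the cofactor formulas, and it is exactly what exposes the error described above.
\end{itemize}
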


\subsection{Proof of Theorem \ref{thm:Jh-convergence}}

\subsubsection{Local-to-Unity Case}

We first handle the case for $h\in(0,\infty)$. We can write 
\begin{align}
n(\widehat{\rho}_{n}(\rho_{n})-\rho_{n})=\  & (n^{-2}X_{1}'M_{X_{2}(\rho_{n})}X_{1})^{-1}n^{-1}X_{1}'M_{X_{2}(\rho_{n})}U\quad\text{and}\nonumber \\
n\widehat{\sigma}_{n}^{2}(\rho_{n})=\  & (n^{-2}X_{1}'M_{X_{2}(\rho_{n})}X_{1})^{-1}(n^{-2}X_{1}'M_{X_{2}(\rho_{n})}\Delta^{2}M_{X_{2}(\rho_{n})}X_{1})\nonumber \\
 & \times(n^{-2}X_{1}'M_{X_{2}(\rho_{n})}X_{1})^{-1}.\label{eq:rho-sigma-expansion}
\end{align}
We consider the terms in (\ref{eq:rho-sigma-expansion}) one at a
time. First, we have 
\begin{equation}
M_{X_{2}(\rho_{n})}X_{1}=M_{X_{2}(\rho_{n})}(X_{2}(\rho_{n})(\mu,Y_{0}^{*})'+\widetilde{U}_{-1})=M_{X_{2}(\rho_{n})}\widetilde{U}_{-1}.
\end{equation}
In consequence,
\begin{align}
 & n^{-2}X_{1}'M_{X_{2}(\rho_{n})}X_{1}\nonumber \\
=\  & n^{-2}(M_{X_{2}(\rho_{n})}\widetilde{U}_{-1})'\widetilde{U}_{-1}\nonumber \\
=\  & n^{-2}\sum_{i=2}^{n}\widetilde{U}_{i-1}\left(\widetilde{U}_{i-1}-\frac{1}{nQ-S^{2}}\sum_{k=2}^{n}\left(Q-S(\rho_{n}^{k-1}+\rho_{n}^{i-1})+n\rho_{n}^{i+k-2}\right)\widetilde{U}_{k-1}\right)\nonumber \\
=\  & n^{-2}\sum_{i=2}^{n}\widetilde{U}_{i-1}^{2}-n^{-2}\frac{1}{nQ-S^{2}}\sum_{i=2}^{n}\widetilde{U}_{i-1}\sum_{k=2}^{n}\left(Q-S(\rho_{n}^{k-1}+\rho_{n}^{i-1})+n\rho_{n}^{i+k-2}\right)\widetilde{U}_{k-1}\nonumber \\
=\  & n^{-2}\sum_{i=2}^{n}\widetilde{U}_{i-1}^{2}-\frac{nQ}{nQ-S^{2}}\left(n^{-3/2}\sum_{i=2}^{n}\widetilde{U}_{i-1}\right)^{2}\nonumber \\
 & +\frac{2nS}{nQ-S^{2}}\left(n^{-3/2}\sum_{i=2}^{n}\widetilde{U}_{i-1}\right)\left(n^{-3/2}\sum_{i=2}^{n}\rho_{n}^{i-1}\widetilde{U}_{i-1}\right)\nonumber \\
 & -\frac{n^{2}}{nQ-S^{2}}\left(n^{-3/2}\sum_{i=2}^{n}\rho_{n}^{i-1}\widetilde{U}_{i-1}\right)^{2}\nonumber \\
\dto\  & \int_{0}^{1}I_{h}^{2}(r)dr-\frac{h(1-e^{-2h})}{h(1-e^{-2h})-2(1-e^{-h})^{2}}\left(\int_{0}^{1}I_{h}(r)dr\right)^{2}\nonumber \\
 & +\frac{4h(1-e^{-h})}{h(1-e^{-2h})-2(1-e^{-h})^{2}}\left(\int_{0}^{1}I_{h}(r)dr\right)\left(\int_{0}^{1}e^{-hr}I_{h}(r)dr\right)\nonumber \\
 & -\frac{2h^{2}}{h(1-e^{-2h})-2(1-e^{-h})^{2}}\left(\int_{0}^{1}e^{-hr}I_{h}(r)dr\right)^{2}\nonumber \\
=\  & \int_{0}^{1}I_{f,h}^{2}(r)dr,\label{eq:bread-asy}
\end{align}
where the convergence holds by Lemma \ref{lem:algebraic-relationship1}
and parts \ref{enu:lem-f} and \ref{enu:lem-g} of Lemma \ref{lem:FCLT}
and the last equality holds by (\ref{eq:I_fh}).

Similarly, we have 
\begin{align}
 & n^{-1}X_{1}'M_{X_{2}(\rho_{n})}U\nonumber \\
=\  & n^{-1}(M_{X_{2}(\rho_{n})}\widetilde{U}_{-1})'U\nonumber \\
=\  & n^{-1}\sum_{i=2}^{n}\left(\widetilde{U}_{i-1}-\frac{1}{nQ-S^{2}}\sum_{k=2}^{n}\left(Q-S(\rho_{n}^{k-1}+\rho_{n}^{i-1})+n\rho_{n}^{i+k-2}\right)\widetilde{U}_{k-1}\right)U_{i}\nonumber \\
=\  & n^{-1}\sum_{i=2}^{n}\widetilde{U}_{i-1}U_{i}-\frac{nQ}{nQ-S^{2}}\left(n^{-1/2}\sum_{i=1}^{n}U_{i}\right)\left(n^{-3/2}\sum_{i=2}^{n}\widetilde{U}_{i-1}\right)\nonumber \\
 & +\frac{nS}{nQ-S^{2}}\left(n^{-1/2}\sum_{i=1}^{n}U_{i}\right)\left(n^{-3/2}\sum_{i=2}^{n}\rho_{n}^{i-1}\widetilde{U}_{i-1}\right)\nonumber \\
 & +\frac{nS}{nQ-S^{2}}\left(n^{-1/2}\sum_{i=1}^{n}\rho_{n}^{i-1}U_{i}\right)\left(n^{-3/2}\sum_{i=2}^{n}\widetilde{U}_{i-1}\right)\nonumber \\
 & -\frac{n^{2}}{nQ-S^{2}}\left(n^{-1/2}\sum_{i=1}^{n}\rho_{n}^{i-1}U_{i}\right)\left(n^{-3/2}\sum_{i=2}^{n}\rho_{n}^{i-1}\widetilde{U}_{i-1}\right)\nonumber \\
\dto\  & \int_{0}^{1}I_{h}(r)dW(r)-\frac{h(1-e^{-2h})}{h(1-e^{-2h})-2(1-e^{-h})^{2}}\left(\int_{0}^{1}dW(r)\right)\left(\int_{0}^{1}I_{h}(r)dr\right)\nonumber \\
 & +\frac{2h(1-e^{-h})}{h(1-e^{-2h})-2(1-e^{-h})^{2}}\left(\int_{0}^{1}dW(r)\right)\left(\int_{0}^{1}e^{-hr}I_{h}(r)dr\right)\nonumber \\
 & +\frac{2h(1-e^{-h})}{h(1-e^{-2h})-2(1-e^{-h})^{2}}\left(\int_{0}^{1}e^{-hr}dW(r)\right)\left(\int_{0}^{1}I_{h}(r)dr\right)\nonumber \\
 & -\frac{2h^{2}}{h(1-e^{-2h})-2(1-e^{-h})^{2}}\left(\int_{0}^{1}e^{-hr}dW(r)\right)\left(\int_{0}^{1}e^{-hr}I_{h}(r)dr\right)\nonumber \\
=\  & \int_{0}^{1}I_{f,h}(r)dW(r),\label{eq:numerator}
\end{align}
where the convergence uses Lemma \ref{lem:algebraic-relationship1}
and parts \ref{enu:lem-d}--\ref{enu:lem-g} of Lemma \ref{lem:FCLT}.

To determine the asymptotic distribution of $n^{-2}X_{1}'M_{X_{2}(\rho_{n})}\Delta^{2}M_{X_{2}(\rho_{n})}X_{1}$,
we proceed as follows. The space spanned by $X(\rho_{n})$ is equivalent
to the space spanned by $\{(1,\rho_{n}^{i-1},n^{-1/2}\widetilde{U}_{i-1})\}_{i=1}^{n}$.
Therefore, we have
\begin{align}
\widehat{U}_{i} & =U_{i}-A_{n}'B_{n}^{-1}C_{n,i},\text{ where}\nonumber \\
A_{n} & =\begin{pmatrix}n^{-1/2}\sum_{j=1}^{n}U_{j}\\
n^{-1/2}\sum_{j=1}^{n}\rho_{n}^{j-1}U_{j}\\
n^{-1}\sum_{j=2}^{n}\widetilde{U}_{j-1}U_{j}
\end{pmatrix},\nonumber \\
B_{n} & =\begin{pmatrix}1 & n^{-1}S & n^{-3/2}\sum_{j=2}^{n}\widetilde{U}_{j-1}\\
n^{-1}S & n^{-1}Q & n^{-3/2}\sum_{j=2}^{n}\rho_{n}^{j-1}\widetilde{U}_{j-1}\\
n^{-3/2}\sum_{j=2}^{n}\widetilde{U}_{j-1} & n^{-3/2}\sum_{j=2}^{n}\rho_{n}^{j-1}\widetilde{U}_{j-1} & n^{-2}\sum_{j=2}^{n}\widetilde{U}_{j-1}^{2}
\end{pmatrix},\text{ and}\nonumber \\
C_{n,i} & =\begin{pmatrix}n^{-1/2}\\
n^{-1/2}\rho_{n}^{i-1}\\
n^{-1}\widetilde{U}_{i-1}
\end{pmatrix}.\label{eq:residual-Ui}
\end{align}

For each term in $A_{n}$, we have 
\begin{align}
n^{-1/2}\sum_{j=1}^{n}U_{j} & \dto\int_{0}^{1}dW(r),\nonumber \\
n^{-1/2}\sum_{j=1}^{n}\rho_{n}^{j-1}U_{j} & \dto\int_{0}^{1}e^{-hr}dW(r),\text{ and}\nonumber \\
n^{-1}\sum_{j=2}^{n}\widetilde{U}_{j-1}U_{j} & =\int_{0}^{1}I_{h}(r)dW(r)
\end{align}
by Lemma \ref{lem:FCLT}\ref{enu:lem-d}, \ref{enu:lem-e}, and \ref{enu:lem-h},
respectively. Therefore, $A_{n}=O_{p}(1)$.

For the entries of $B_{n}$, first we have 
\begin{align}
n^{-1}S & =\frac{1-e^{-h}}{h}+o(1)=O(1)\text{ and}\nonumber \\
n^{-1}Q & =\frac{1-e^{-2h}}{2h}+o(1)=O(1)
\end{align}
by Lemma \ref{lem:algebraic-relationship1}\ref{enu:newnQ} and \ref{enu:newnS}.
Using Lemma \ref{lem:FCLT}\ref{enu:lem-f}, \ref{enu:lem-g}, and
\ref{enu:lem-j}, we have 
\begin{align}
n^{-3/2}\sum_{j=2}^{n}\widetilde{U}_{j-1} & \dto\int_{0}^{1}I_{h}(r)dr,\nonumber \\
n^{-3/2}\sum_{j=2}^{n}\rho_{n}^{j-1}\widetilde{U}_{j-1} & \dto\int_{0}^{1}e^{-hr}dW(r),\text{ and}\nonumber \\
n^{-2}\sum_{j=2}^{n}\widetilde{U}_{j-1}^{2} & \dto\int_{0}^{1}I_{h}^{2}(r)dr.
\end{align}
Therefore, $B_{n}=O_{p}(1)$. Also, note that
\begin{align}
\frac{n^{2}}{nQ-S^{2}}\det(B_{n})=\  & n^{-2}\sum_{j=2}^{n}\widetilde{U}_{j-1}^{2}-\frac{n^{2}}{nQ-S^{2}}\left(n^{-3/2}\sum_{j=2}^{n}\rho_{n}^{j-1}\widetilde{U}_{j-1}\right)^{2}\nonumber \\
 & +2\frac{nS}{nQ-S^{2}}\left(n^{-3/2}\sum_{j=2}^{n}\widetilde{U}_{j-1}\right)\left(n^{-3/2}\sum_{j=2}^{n}\rho_{n}^{j-1}\widetilde{U}_{j-1}\right)\nonumber \\
 & -\frac{nQ}{nQ-S^{2}}\left(n^{-3/2}\sum_{j=2}^{n}\widetilde{U}_{j-1}\right)^{2}\nonumber \\
\dto\  & \int_{0}^{1}I_{h}^{2}(r)dr-\frac{2h^{2}}{h(1-e^{-2h})-2(1-e^{-h})^{2}}\left(\int_{0}^{1}e^{-hr}I_{h}(r)dr\right)^{2}\nonumber \\
 & +\frac{4h(1-e^{-h})}{h(1-e^{-2h})-2(1-e^{-h})^{2}}\left(\int_{0}^{1}I_{h}(r)dr\right)\left(\int_{0}^{1}e^{-hr}I_{h}(r)dr\right)\nonumber \\
 & -\frac{h(1-e^{-2h})}{h(1-e^{-2h})-2(1-e^{-h})^{2}}\left(\int_{0}^{1}I_{h}(r)dr\right)^{2}\nonumber \\
=\  & \int_{0}^{1}I_{f,h}^{2}(r)dr.
\end{align}
Since $\P(\int_{0}^{1}I_{f,h}^{2}(r)dr=0)=0$, we have $\P(\det(B_{n})=0)\to0$
as $n\to\infty$, which implies that $B_{n}$ is invertible with probability
approaching one. Combining this with the fact that $B_{n}=O_{p}(1)$,
we have $B_{n}^{-1}=O_{p}(1)$.

Next, we define
\begin{align}
D_{n} & =\begin{pmatrix}n^{-3/2}\sum_{i=2}^{n}\widetilde{U}_{i-1}^{2}U_{i}\\
n^{-3/2}\sum_{i=2}^{n}\widetilde{U}_{i-1}^{2}\rho_{n}^{i-1}U_{i}\\
n^{-2}\sum_{i=2}^{n}\widetilde{U}_{i-1}^{3}U_{i}
\end{pmatrix}\text{ and}\nonumber \\
K_{n} & =\begin{pmatrix}n^{-2}\sum_{i=2}^{n}\widetilde{U}_{i-1}^{2} & n^{-2}\sum_{i=2}^{n}\rho_{n}^{i-1}\widetilde{U}_{i-1}^{2} & n^{-5/2}\sum_{i=2}^{n}\widetilde{U}_{i-1}^{3}\\
n^{-2}\sum_{i=2}^{n}\rho_{n}^{i-1}\widetilde{U}_{i-1}^{2} & n^{-2}\sum_{i=2}^{n}\rho_{n}^{2i-2}\widetilde{U}_{i-1}^{2} & n^{-5/2}\sum_{i=2}^{n}\rho_{n}^{i-1}\widetilde{U}_{i-1}^{3}\\
n^{-5/2}\sum_{i=2}^{n}\widetilde{U}_{i-1}^{3} & n^{-5/2}\sum_{i=2}^{n}\rho_{n}^{i-1}\widetilde{U}_{i-1}^{3} & n^{-3}\sum_{i=2}^{n}\widetilde{U}_{i-1}^{4}
\end{pmatrix}.
\end{align}
Using the results of Lemma \ref{lem:FCLT}\ref{enu:lem-n} with $(l_{0},l_{1},l_{2})=(0,2,1)$,
$(1,1,1)$, and $(0,3,1)$, we have $D_{n}=o_{p}(n)$. Similarly,
using Lemma \ref{lem:FCLT}\ref{enu:lem-j} and Lemma \ref{lem:FCLT}\ref{enu:lem-n}
with $(l_{0},l_{1},l_{2})=(2,2,0)$, $(0,3,0)$, $(1,3,0)$, and $(0,4,0)$,
we have $K_{n}=o_{p}(n)$.

By (\ref{eq:residual-Ui}), we have 
\begin{align}
n^{-2}\sum_{i=2}^{n}\widetilde{U}_{i-1}^{2}\widehat{U}_{i}^{2} & =n^{-2}\sum_{i=2}^{n}\widetilde{U}_{i-1}^{2}U_{i}^{2}-2n^{-1}A_{n}'B_{n}^{-1}D_{n}+n^{-1}A_{n}'B_{n}^{-1}K_{n}B_{n}^{-1}A_{n}\nonumber \\
 & =n^{-2}\sum_{i=2}^{n}\widetilde{U}_{i-1}^{2}U_{i}^{2}+o_{p}(1),\label{eq:Ui1-squareUihat}
\end{align}
where the second equality holds by $A_{n}=O_{p}(1)$, $B_{n}^{-1}=O_{p}(1)$,
$D_{n}=o_{p}(n)$, and $K_{n}=o_{p}(n)$.

Proceeding analogously to (\ref{eq:Ui1-squareUihat}), replacing $\widetilde{U}_{i-1}^{2}$
with $\widetilde{U}_{i-1}$ and then omitting the $\widetilde{U}_{i-1}^{2}$
term, we have
\begin{align}
n^{-3/2}\sum_{i=2}^{n}\widetilde{U}_{i-1}\widehat{U}_{i}^{2} & =n^{-3/2}\sum_{i=2}^{n}\widetilde{U}_{i-1}U_{i}^{2}+o_{p}(1)\quad\text{and}\nonumber \\
n^{-1}\sum_{i=1}^{n}c_{i}\widehat{U}_{i}^{2} & =n^{-1}\sum_{i=1}^{n}c_{i}U_{i}^{2}+o_{p}(1)\label{eq:Ui-relationships}
\end{align}
for any non-random sequence $c_{i}=O(1)$.

Next, we have
\begin{align}
 & n^{-2}X_{1}'M_{X_{2}(\rho_{n})}\Delta^{2}M_{X_{2}(\rho_{n})}X_{1}\nonumber \\
=\  & n^{-2}\sum_{i=1}^{n}\frac{\widehat{U}_{i}^{2}}{(1-p_{ii}^{*})^{2}}\left(\widetilde{U}_{i-1}-\frac{1}{nQ-S^{2}}\sum_{k=2}^{n}\left(Q-S(\rho_{n}^{k-1}+\rho_{n}^{i-1})+n\rho_{n}^{i+k-2}\right)\widetilde{U}_{k-1}\right)^{2}\nonumber \\
=\  & n^{-2}\sum_{i=1}^{n}\widehat{U}_{i}^{2}\left(\widetilde{U}_{i-1}-\frac{1}{nQ-S^{2}}\sum_{k=2}^{n}\left(Q-S(\rho_{n}^{k-1}+\rho_{n}^{i-1})+n\rho_{n}^{i+k-2}\right)\widetilde{U}_{k-1}\right)^{2}+o_{p}(1)\nonumber \\
=\  & n^{-2}\sum_{i=2}^{n}\widetilde{U}_{i-1}^{2}\widehat{U}_{i}^{2}-2n^{-2}\frac{1}{nQ-S^{2}}\sum_{i=2}^{n}\sum_{k=2}^{n}\left(Q-S(\rho_{n}^{k-1}+\rho_{n}^{i-1})+n\rho_{n}^{i+k-2}\right)\widetilde{U}_{k-1}\widetilde{U}_{i-1}\widehat{U}_{i}^{2}\nonumber \\
 & +n^{-2}\frac{1}{(nQ-S^{2})^{2}}\sum_{i=1}^{n}\left(\sum_{k=2}^{n}\left(Q-S(\rho_{n}^{k-1}+\rho_{n}^{i-1})+n\rho_{n}^{i+k-2}\right)\widetilde{U}_{k-1}\right)^{2}\widehat{U}_{i}^{2}+o_{p}(1)\nonumber \\
=\  & \mathcal{U}_{1,n}-2\mathcal{U}_{2,n}+\mathcal{U}_{3n}+o_{p}(1),\label{eq:meat}
\end{align}
where the second equality holds by $p_{ii}^{*}=\min\{p_{ii},n^{-1/2}\}=O_{p}(n^{-1/2})$.
We now analyze $\mathcal{U}_{1,n}$, $\mathcal{U}_{2,n}$, and $\mathcal{U}_{3,n}$
in turn.

For $\mathcal{U}_{1,n}$, by (\ref{eq:Ui1-squareUihat}) and Lemma
\ref{lem:FCLT}\ref{enu:lem-m}, we have
\begin{equation}
\mathcal{U}_{1,n}=n^{-2}\sum_{i=2}^{n}\widetilde{U}_{i-1}^{2}U_{i}^{2}+o_{p}(1)\dto\int_{0}^{1}I_{h}^{2}(r)dr.\label{eq:U1n}
\end{equation}

For $\mathcal{U}_{2,n}$, we have 
\begin{align}
\mathcal{U}_{2,n}=\  & \frac{nQ}{nQ-S^{2}}\left(n^{-3/2}\sum_{i=2}^{n}\widetilde{U}_{i-1}\right)\left(n^{-3/2}\sum_{i=2}^{n}\widetilde{U}_{i-1}\widehat{U}_{i}^{2}\right)\nonumber \\
 & -\frac{nS}{nQ-S^{2}}\left(n^{-3/2}\sum_{i=2}^{n}\rho_{n}^{i-1}\widetilde{U}_{i-1}\right)\left(n^{-3/2}\sum_{i=2}^{n}\widetilde{U}_{i-1}\widehat{U}_{i}^{2}\right)\nonumber \\
 & -\frac{nS}{nQ-S^{2}}\left(n^{-3/2}\sum_{i=2}^{n}\widetilde{U}_{i-1}\right)\left(n^{-3/2}\sum_{i=2}^{n}\rho_{n}^{i-1}\widetilde{U}_{i-1}\widehat{U}_{i}^{2}\right)\nonumber \\
 & +\frac{n^{2}}{nQ-S^{2}}\left(n^{-3/2}\sum_{i=2}^{n}\rho_{n}^{i-1}\widetilde{U}_{i-1}\right)\left(n^{-3/2}\sum_{i=2}^{n}\rho_{n}^{i-1}\widetilde{U}_{i-1}\widehat{U}_{i}^{2}\right)\nonumber \\
=\  & \frac{nQ}{nQ-S^{2}}\left(n^{-3/2}\sum_{i=2}^{n}\widetilde{U}_{i-1}\right)\left(n^{-3/2}\sum_{i=2}^{n}\widetilde{U}_{i-1}U_{i}^{2}\right)\nonumber \\
 & -\frac{nS}{nQ-S^{2}}\left(n^{-3/2}\sum_{i=2}^{n}\rho_{n}^{i-1}\widetilde{U}_{i-1}\right)\left(n^{-3/2}\sum_{i=2}^{n}\widetilde{U}_{i-1}U_{i}^{2}\right)\nonumber \\
 & -\frac{nS}{nQ-S^{2}}\left(n^{-3/2}\sum_{i=2}^{n}\widetilde{U}_{i-1}\right)\left(n^{-3/2}\sum_{i=2}^{n}\rho_{n}^{i-1}\widetilde{U}_{i-1}U_{i}^{2}\right)\nonumber \\
 & +\frac{n^{2}}{nQ-S^{2}}\left(n^{-3/2}\sum_{i=2}^{n}\rho_{n}^{i-1}\widetilde{U}_{i-1}\right)\left(n^{-3/2}\sum_{i=2}^{n}\rho_{n}^{i-1}\widetilde{U}_{i-1}U_{i}^{2}\right)+o_{p}(1)\nonumber \\
\dto\  & \frac{h(1-e^{-2h})}{h(1-e^{-2h})-2(1-e^{-h})^{2}}\left(\int_{0}^{1}I_{h}(r)dr\right)^{2}\nonumber \\
 & -\frac{4h(1-e^{-h})}{h(1-e^{-2h})-2(1-e^{-h})^{2}}\left(\int_{0}^{1}e^{-hr}I_{h}(r)dr\right)\left(\int_{0}^{1}I_{h}(r)dr\right)\nonumber \\
 & +\frac{2h^{2}}{h(1-e^{-2h})-2(1-e^{-h})^{2}}\left(\int_{0}^{1}e^{-hr}I_{h}(r)dr\right)^{2},\label{eq:U2n}
\end{align}
where the first equality holds by the definition of $\mathcal{U}_{2,n}$
and algebra, the second equality holds by (\ref{eq:Ui-relationships}),
and the convergence holds by Lemma \ref{lem:algebraic-relationship1}\ref{enu:nQS1}--\ref{enu:nQS3}
and Lemma \ref{lem:FCLT}\ref{enu:lem-f}, \ref{enu:lem-g}, \ref{enu:lem-k},
and \ref{enu:lem-l}.

For $\mathcal{U}_{3,n}$, first notice that
\begin{align}
 & \left(\sum_{k=2}^{n}\left(Q-S(\rho_{n}^{k-1}+\rho_{n}^{i-1})+n\rho^{i+k-2}\right)\widetilde{U}_{k-1}\right)^{2}\nonumber \\
=\  & \left((Q-S\rho^{i-1})\left(\sum_{k=2}^{n}\widetilde{U}_{k-1}\right)+(n\rho^{i-1}-S)\left(\sum_{k=2}^{n}\rho_{n}^{k-1}\widetilde{U}_{k-1}\right)\right)^{2}\nonumber \\
=\  & (Q-S\rho^{i-1})^{2}\left(\sum_{k=2}^{n}\widetilde{U}_{k-1}\right)^{2}+2(Q-S\rho^{i-1})(n\rho^{i-1}-S)\left(\sum_{k=2}^{n}\widetilde{U}_{k-1}\right)\left(\sum_{k=2}^{n}\rho_{n}^{k-1}\widetilde{U}_{k-1}\right)\nonumber \\
 & +(n\rho^{i-1}-S)^{2}\left(\sum_{k=2}^{n}\rho_{n}^{k-1}\widetilde{U}_{k-1}\right)^{2}.
\end{align}
Thus, we have 
\begin{align}
\mathcal{U}_{3,n}=\  & \left(n^{-3/2}\sum_{k=2}^{n}\widetilde{U}_{k-1}\right)^{2}\left(n^{-1}\sum_{i=1}^{n}\left(\frac{n(Q-S\rho^{i-1})}{nQ-S^{2}}\right)^{2}\widehat{U}_{i}^{2}\right)\nonumber \\
 & +\left(n^{-3/2}\sum_{k=2}^{n}\widetilde{U}_{k-1}\right)\left(n^{-3/2}\sum_{k=2}^{n}\rho_{n}^{k-1}\widetilde{U}_{k-1}\right)\left(2n^{-1}\sum_{i=1}^{n}\frac{n^{2}(Q-S\rho^{i-1})(n\rho^{i-1}-S)}{(nQ-S^{2})^{2}}\widehat{U}_{i}^{2}\right)\nonumber \\
 & +\left(n^{-3/2}\sum_{k=2}^{n}\rho_{n}^{k-1}\widetilde{U}_{k-1}\right)^{2}\left(n^{-1}\sum_{i=1}^{n}\left(\frac{n(n\rho^{i-1}-S)}{nQ-S^{2}}\right)^{2}\widehat{U}_{i}^{2}\right)\nonumber \\
=\  & \left(n^{-3/2}\sum_{k=2}^{n}\widetilde{U}_{k-1}\right)^{2}\left(n^{-1}\sum_{i=1}^{n}\left(\frac{n(Q-S\rho^{i-1})}{nQ-S^{2}}\right)^{2}U_{i}^{2}\right)\nonumber \\
 & +\left(n^{-3/2}\sum_{k=2}^{n}\widetilde{U}_{k-1}\right)\left(n^{-3/2}\sum_{k=2}^{n}\rho_{n}^{k-1}\widetilde{U}_{k-1}\right)\left(2n^{-1}\sum_{i=1}^{n}\frac{n^{2}(Q-S\rho^{i-1})(n\rho^{i-1}-S)}{(nQ-S^{2})^{2}}U_{i}^{2}\right)\nonumber \\
 & +\left(n^{-3/2}\sum_{k=2}^{n}\rho_{n}^{k-1}\widetilde{U}_{k-1}\right)^{2}\left(n^{-1}\sum_{i=1}^{n}\left(\frac{n(n\rho^{i-1}-S)}{nQ-S^{2}}\right)^{2}U_{i}^{2}\right)+o_{p}(1)\nonumber \\
\dto\  & \frac{h(1-e^{-2h})}{h(1-e^{-2h})-2(1-e^{-h})^{2}}\left(\int_{0}^{1}I_{h}(r)dr\right)^{2}\nonumber \\
 & -\frac{4h(1-e^{-h})}{h(1-e^{-2h})-2(1-e^{-h})^{2}}\left(\int_{0}^{1}I_{h}(r)dr\right)\left(\int_{0}^{1}e^{-hr}I_{h}(r)dr\right)\nonumber \\
 & +\frac{2h^{2}}{h(1-e^{-2h})-2(1-e^{-h})^{2}}\left(\int_{0}^{1}e^{-hr}I_{h}(r)dr\right)^{2},\label{eq:U3n}
\end{align}
where the second equality holds by (\ref{eq:Ui-relationships}) and
the convergence holds by Lemma \ref{lem:algebraic-relationship1}\ref{enu:nQS4}--\ref{enu:nQS6}
and Lemma \ref{lem:FCLT}\ref{enu:lem-c}, \ref{enu:lem-f}, and \ref{enu:lem-g}.

Combining (\ref{eq:U1n}), (\ref{eq:U2n}), and (\ref{eq:U3n}), we
have 
\begin{align}
 & \mathcal{U}_{1,n}-2\mathcal{U}_{2,n}+\mathcal{U}_{3n}\nonumber \\
\dto\  & \int_{0}^{1}I_{h}(r)^{2}dr-\frac{h(1-e^{-2h})}{h(1-e^{-2h})-2(1-e^{-h})^{2}}\left(\int_{0}^{1}I_{h}(r)dr\right)^{2}\nonumber \\
 & +\frac{4h(1-e^{-h})}{h(1-e^{-2h})-2(1-e^{-h})^{2}}\left(\int_{0}^{1}I_{h}(r)dr\right)\left(\int_{0}^{1}e^{-hr}I_{h}(r)dr\right)\nonumber \\
 & -\frac{2h^{2}}{h(1-e^{-2h})-2(1-e^{-h})^{2}}\left(\int_{0}^{1}e^{-hr}I_{h}(r)dr\right)^{2}\nonumber \\
=\  & \int_{0}^{1}I_{f,h}^{2}(r)dr.
\end{align}
Thus, we have
\begin{equation}
n^{-2}X_{1}'M_{X_{2}(\rho_{n})}\Delta^{2}M_{X_{2}(\rho_{n})}X_{1}\dto\int_{0}^{1}I_{f,h}^{2}(r)dr.\label{eq:meat-asy}
\end{equation}
Putting (\ref{eq:bread-asy}), (\ref{eq:numerator}), and (\ref{eq:meat-asy})
together gives 
\begin{equation}
T_{n}(\rho_{n})\dto\frac{\int_{0}^{1}I_{f,h}(r)dW(r)}{\left(\int_{0}^{1}I_{f,h}(r)^{2}dr\right)^{1/2}}.
\end{equation}

\medskip{}

\subsubsection{Close-to-Unit-Root/Unit-Root Case}\label{subsec:Close-to-Unit-Root/Unit-Root-Cas}

Next, we consider the case $h=0$. We rewrite (\ref{eq:bread-asy})
as
\begin{align}
 & n^{-2}X_{1}'M_{X_{2}(\rho_{n})}X_{1}\nonumber \\
=\  & n^{-2}(M_{X_{2}(\rho_{n})}\widetilde{U}_{-1})'\widetilde{U}_{-1}\nonumber \\
=\  & n^{-2}\sum_{i=2}^{n}\widetilde{U}_{i-1}\Bigg(\widetilde{U}_{i-1}-\frac{1}{nQ-S^{2}}\nonumber \\
 & \times\sum_{k=2}^{n}\Bigg(Q-2S+n-n(1-\rho_{n})(n-S)\left(\frac{1-\rho_{n}^{i-1}}{n(1-\rho_{n})}+\frac{1-\rho_{n}^{k-1}}{n(1-\rho_{n})}\right)\nonumber \\
 & +n^{3}(1-\rho_{n})^{2}\frac{1-\rho_{n}^{i-1}}{n(1-\rho_{n})}\frac{1-\rho_{n}^{k-1}}{n(1-\rho_{n})}\Bigg)\widetilde{U}_{k-1}\Bigg)\nonumber \\
=\  & n^{-2}\sum_{i=2}^{n}\widetilde{U}_{i-1}^{2}-n^{-2}\frac{1}{nQ-S^{2}}\sum_{i=2}^{n}\widetilde{U}_{i-1}\nonumber \\
 & \times\sum_{k=2}^{n}\Bigg(Q-2S+n-n(1-\rho_{n})(n-S)\left(\frac{1-\rho_{n}^{i-1}}{n(1-\rho_{n})}+\frac{1-\rho_{n}^{k-1}}{n(1-\rho_{n})}\right)\nonumber \\
 & +n^{3}(1-\rho_{n})^{2}\frac{1-\rho_{n}^{i-1}}{n(1-\rho_{n})}\frac{1-\rho_{n}^{k-1}}{n(1-\rho_{n})}\Bigg)\widetilde{U}_{k-1}\nonumber \\
=\  & n^{-2}\sum_{i=2}^{n}\widetilde{U}_{i-1}^{2}-\frac{n(Q-2S+n)}{nQ-S^{2}}\left(n^{-3/2}\sum_{i=2}^{n}\widetilde{U}_{i-1}\right)^{2}\nonumber \\
 & +\frac{2n^{2}(1-\rho_{n})(n-S)}{nQ-S^{2}}\left(n^{-3/2}\sum_{i=2}^{n}\widetilde{U}_{i-1}\right)\left(n^{-3/2}\sum_{i=2}^{n}\frac{1-\rho_{n}^{i-1}}{n(1-\rho_{n})}\widetilde{U}_{i-1}\right)\nonumber \\
 & -\frac{n^{4}(1-\rho_{n})^{2}}{nQ-S^{2}}\left(n^{-3/2}\sum_{i=2}^{n}\frac{1-\rho_{n}^{i-1}}{n(1-\rho_{n})}\widetilde{U}_{i-1}\right)^{2}\nonumber \\
\dto\  & \int_{0}^{1}W^{2}(r)dr-4\left(\int_{0}^{1}W(r)dr\right)^{2}+12\left(\int_{0}^{1}W(r)dr\right)\left(\int_{0}^{1}rW(r)dr\right)\nonumber \\
 & -12\left(\int_{0}^{1}rW(r)dr\right)^{2}\nonumber \\
=\  & \int_{0}^{1}W_{f}^{2}(r)dr,\label{eq:bread-asyh0}
\end{align}
where the convergence holds by Lemma \ref{lem:algebraic-relationsip2}\ref{enu:al-relation1}--\ref{enu:al-relation4}.
Observe that 
\begin{equation}
n^{-1}X_{1}'M_{X_{2}(\rho_{n})}U\ \text{and}\ n^{-2}X_{1}'M_{X_{2}(\rho_{n})}\Delta^{2}M_{X_{2}(\rho_{n})}X_{1}
\end{equation}
can be treated analogously by rewriting (\ref{eq:numerator}) and
(\ref{eq:meat}) in the form of (\ref{eq:bread-asyh0}). Therefore,
we have 
\begin{align}
n^{-1}X_{1}'M_{X_{2}(\rho_{n})}U & \stackrel{d}{\to}\int_{0}^{1}W_{f}(r)dW(r)\text{ and }\nonumber \\
n^{-2}X_{1}'M_{X_{2}(\rho_{n})}\Delta^{2}M_{X_{2}(\rho_{n})}X_{1} & \stackrel{d}{\to}\int_{0}^{1}W_{f}^{2}(r)dr.\label{eq:meat-asyh0}
\end{align}
Combining (\ref{eq:bread-asyh0})--(\ref{eq:meat-asyh0}), we have
\begin{equation}
T_{n}(\rho_{n})\dto\frac{\int_{0}^{1}W_{f}(r)dW(r)}{\left(\int_{0}^{1}W_{f}^{2}(r)dr\right)^{1/2}}
\end{equation}
when $h=0$.
\begin{rem}
Our proof for $h=0$ differs from that of Theorem 1 in AG12, where
it is necessary to show that the influence of $Y_{0}^{*}$ is negligible.
In contrast, we eliminate the influence of $Y_{0}^{*}$ via the projection
approach and thus avoid imposing any conditions on $Y_{0}^{*}$.
\end{rem}

\subsubsection{Stationary Case}

It remains to analyze the case $h=\infty$. It suffices to consider
the two sub-cases $\rho_{n}\to\rho^{*}<1$ and $\rho_{n}\to1$. First,
assume that $\rho_{n}\to1$ and $n(1-\rho_{n})\to\infty$. We define
\begin{equation}
a_{n}=n^{1/2}\frac{\E\widetilde{U}_{0}^{2}}{\left(\E(\widetilde{U}_{0}^{2}U_{1}^{2})\right)^{1/2}}\quad\text{and}\quad d_{n}=\frac{\E\widetilde{U}_{0}^{2}}{\left(\E(\widetilde{U}_{0}^{2}U_{1}^{2})\right)^{1/2}}.
\end{equation}
We redefine $X_{2}(\rho_{n})$ as an $n\times2$ matrix whose $i$-th
row is $(1,h_{n}^{1/2}\rho_{n}^{i-1})$. This redefinition does not
affect $P_{X_{2}(\rho_{n})}$ or $M_{X_{2}(\rho_{n})}$, but simplifies
the analysis. By algebra, we have
\begin{align}
a_{n}(\widehat{\rho}_{n}(\rho_{n})-\rho_{n}) & =\left(n^{-1}\frac{X_{1}'M_{X_{2}(\rho_{n})}X_{1}}{\E\widetilde{U}_{0}^{2}}\right)^{-1}\frac{n^{-1/2}X_{1}'M_{X_{2}(\rho_{n})}U}{\left(\E(\widetilde{U}_{0}^{2}U_{1}^{2})\right)^{1/2}}\nonumber \\
 & =\left(n^{-1}\frac{\widetilde{U}_{-1}'M_{X_{2}(\rho_{n})}\widetilde{U}_{-1}}{\E\widetilde{U}_{0}^{2}}\right)^{-1}\frac{n^{-1/2}\widetilde{U}_{-1}'M_{X_{2}(\rho_{n})}U}{\left(\E(\widetilde{U}_{0}^{2}U_{1}^{2})\right)^{1/2}}=:\nu_{n}\xi_{n},\label{eq:rhon-rho-decomposition}
\end{align}
where $\nu_{n}$ and $\xi_{n}$ are defined implicitly by the last
equality. Furthermore, we define the martingale difference sequence
(MDS)
\begin{equation}
\zeta_{i}:=n^{-1/2}\frac{\widetilde{U}_{i-1}U_{i}}{\left(\E(\widetilde{U}_{0}^{2}U_{1}^{2})\right)^{1/2}}.
\end{equation}
We now show that 
\begin{equation}
n^{-1/2}\frac{\widetilde{U}_{-1}'P_{X_{2}(\rho_{n})}U}{\left(\E(\widetilde{U}_{0}^{2}U_{1}^{2})\right)^{1/2}}\pto0\quad\text{and}\quad\sum_{i=2}^{n}\zeta_{i}\dto N(0,1),\label{eq:clt-h-0}
\end{equation}
which establish that
\begin{equation}
\xi_{n}=\sum_{i=2}^{n}\zeta_{i}-n^{-1/2}\frac{\widetilde{U}_{-1}'P_{X_{2}(\rho_{n})}U}{\left(\E(\widetilde{U}_{0}^{2}U_{1}^{2})\right)^{1/2}}\dto N(0,1).
\end{equation}

To show the first result in (\ref{eq:clt-h-0}), note that
\begin{equation}
n^{-1/2}X_{2}(\rho_{n})'U=\left(n^{-1/2}\sum_{i=1}^{n}U_{i},n^{-1/2}h_{n}^{1/2}\sum_{i=1}^{n}\rho_{n}^{i-1}U_{i}\right)'=(O_{p}(1),O_{p}(1))'\label{eq:X2U}
\end{equation}
by the central limit theorem (CLT) for a triangular array of MDS $U_{i}$
and $h_{n}^{1/2}\rho_{n}^{i-1}U_{i}$. Note that
\begin{align}
\frac{h_{n}^{1/2}(1-\rho_{n}^{n})}{n(1-\rho_{n})} & =\frac{1}{h_{n}^{1/2}}+o(h_{n}^{-1/2})=o(1)\text{ and}\nonumber \\
\frac{h_{n}(1-\rho_{n}^{2n})}{n(1-\rho_{n}^{2})} & =\frac{1}{2}+o(1).\label{eq:X2X2}
\end{align}
Hence, we have
\begin{equation}
(n^{-1}X_{2}(\rho_{n})'X_{2}(\rho_{n}))^{-1}=\diag(1,2)+o(1)=O(1).\label{eq:X2X2_inv}
\end{equation}
Finally, we have
\begin{equation}
n^{-1}(1-\rho_{n})^{1/2}\widetilde{U}_{-1}'X_{2}(\rho_{n})=(o_{p}(1),o_{p}(1)),\label{eq:U-1X2}
\end{equation}
by Lemma \ref{lem:case2lem}\ref{enu:case2lem-a}. Then, it follows
that 
\begin{align}
 & n^{-1/2}\frac{\widetilde{U}_{-1}'P_{X_{2}(\rho_{n})}U}{\left(\E(\widetilde{U}_{0}^{2}U_{1}^{2})\right)^{1/2}}\nonumber \\
= & \left(n^{-1/2}X_{2}(\rho_{n})'U\right)\left(n^{-1}X_{2}(\rho_{n})'X_{2}(\rho_{n})\right)^{-1}\left(n^{-1}\dfrac{\widetilde{U}_{-1}'X_{2}(\rho_{n})}{\left(\E(\widetilde{U}_{0}^{2}U_{1}^{2})\right)^{1/2}}\right)\pto0,
\end{align}
where the convergence holds by (\ref{eq:X2U})--(\ref{eq:U-1X2})
and the fact that $\E(\widetilde{U}_{0}^{2}U_{1}^{2})=O((1-\rho_{n})^{-1})$
by Lemma \ref{lem:Order-expect}.

To establish the second result in (\ref{eq:clt-h-0}), we adapt the
proof of Lemma 1 in \textcite{giraitis2006uniform}. It suffices to
verify the analogues of (11) and (12) in \textcite{giraitis2006uniform};
that is, to show the Lindeberg condition 
\begin{equation}
\sum_{i=2}^{n}\E(\zeta_{i}^{2}1(|\zeta_{i}|>\delta)|\mathcal{G}_{i-1})\pto0\ \ \text{for any }\delta>0,
\end{equation}
and
\begin{equation}
\sum_{i=2}^{n}\E(\zeta_{i}^{2}|\mathcal{G}_{i-1})\pto1.
\end{equation}
The former follows from Lemma \ref{lem:integrable-condi}, while the
latter is shown in Lemma \ref{lem:moment-condition}, using the fact
that $\sum_{i=2}^{n}\E\zeta_{i}^{2}\to1.$ Thus, we have proved 
\begin{equation}
\xi_{n}\dto N\left(0,1\right).\label{eq:numerator_xi}
\end{equation}

Next, we show $\nu_{n}\pto1$. By Lemma \ref{lem:case2lem}\ref{enu:case2lem-b}
and Lemma \ref{lem:Order-expect}, we have
\begin{equation}
\frac{n^{-1}\widetilde{U}_{-1}'\widetilde{U}_{-1}}{\E\widetilde{U}_{0}^{2}}\pto1\quad\text{and}\quad\frac{n^{-1}\widetilde{U}_{-1}'P_{X_{2}(\rho_{n})}\widetilde{U}_{-1}}{\E\widetilde{U}_{0}^{2}}\pto0,\label{eq:convergence-Utilde-square}
\end{equation}
which imply $\nu_{n}\pto1$. Together with (\ref{eq:numerator_xi}),
we have 
\begin{equation}
a_{n}(\widehat{\rho}_{n}(\rho_{n})-\rho_{n})\dto N\left(0,1\right).
\end{equation}

Finally, we show that $d_{n}\widehat{\sigma}_{n}\pto1$, which implies
that the $t$-statistic $T_{n}\left(\rho_{n}\right)\dto N\left(0,1\right)$
for the sub-case $\rho_{n}\to1$ and $n(1-\rho_{n})\to\infty$. By
(\ref{eq:convergence-Utilde-square}), it is sufficient to show that
\begin{equation}
\frac{n^{-1}X_{1}'M_{X_{2}(\rho_{n})}\Delta^{2}M_{X_{2}(\rho_{n})}X_{1}}{\E(\widetilde{U}_{0}^{2}U_{1}^{2})}\pto1.\label{eq:meat-convergence}
\end{equation}
By Lemma \ref{lem:case2lem}\ref{enu:case2lem-e}--\ref{enu:case2lem-g},
we have 
\begin{align}
(\E(\widetilde{U}_{0}^{2}U_{1}^{2}))^{-1}n^{-1}\widetilde{U}_{-1}'\Delta^{2}\widetilde{U}_{-1} & \pto1,\nonumber \\
(1-\rho_{n})^{1/2}n^{-1}(X_{2}(\rho_{n})'\Delta^{2}\widetilde{U}_{-1}) & =O_{p}(1),\text{ and}\nonumber \\
n^{-1}(X_{2}(\rho_{n})'\Delta^{2}X_{2}(\rho_{n})) & =O_{p}(1).
\end{align}
These results combined with Lemma \ref{lem:Order-expect}, (\ref{eq:X2X2_inv}),
and (\ref{eq:U-1X2}) imply (\ref{eq:meat-convergence}).

\medskip{}

Next, we consider the case where $\rho_{n}\to\rho^{*}<1$. We replicate
the steps used in the proof for the case when $n(1-\rho_{n})\to\infty$
and $\rho_{n}\to1$. Equation (\ref{eq:X2U}) holds by the CLT for
MDS applied to $U_{i}$ and $h_{n}^{1/2}\rho_{n}^{i-1}U_{i}$, and
(\ref{eq:X2X2}) holds because $h_{n}=n(1-\rho_{n})\to\infty$. Following
the same proof strategy as in Lemma 8 in AG12, Lemma \ref{lem:case2lem}
also holds when $\rho_{n}\to\rho^{*}<1$. In this case, the bounds
involving $1-\rho_{n}$ can be replaced by $1-\rho^{*}$, which is
a bounded constant. Therefore, (\ref{eq:U-1X2}) still holds. Note
that $\E(\widetilde{U}_{0}^{2}U_{1}^{2})=O(1)$ in this case. Hence,
we have 
\begin{equation}
n^{-1/2}\frac{\widetilde{U}_{-1}'P_{X_{2}(\rho_{n})}U}{\left(\E(\widetilde{U}_{0}^{2}U_{1}^{2})\right)^{1/2}}\pto0.
\end{equation}
Additionally, Lemmas \ref{lem:moment-condition} and \ref{lem:integrable-condi}
also hold when $\rho_{n}\to\rho^{*}<1$, hence we have $\sum_{i=1}^{n}\zeta_{i}\dto N(0,1)$.

To establish (\ref{eq:convergence-Utilde-square}) and (\ref{eq:meat-convergence}),
it suffices to apply the weak law of large numbers (WLLN) for triangular
arrays of mean zero, $L^{1+\delta}$ bounded, and $L^{1}$-near-epoch
dependent ($L^{1}$-NED) random variables. \textcite[p.464]{andrews1988laws}
shows that the latter conditions imply that the array is a uniformly
integrable $L^{1}$ mixingale for which the WLLN holds; see \textcite[Theorem 2]{andrews1988laws}.
For example, we verify that $\widetilde{U}_{i}^{2}-\E\widetilde{U}_{0}^{2}$
is $L^{1}$-NED with respect to the $\sigma$-field $\mathcal{G}_{i}$,
which implies $n^{-1}\widetilde{U}_{-1}'\widetilde{U}_{-1}-\E\widetilde{U}_{0}^{2}\pto0$
by the WLLN.

By definition of $L^{1}$-NED in \textcite[p.464]{andrews1988laws},
we need to show that there exist constants $\{d_{i}:1\leq i\leq k_{n},n\geq1\}$
and $\{v_{m}:m\geq0\}$ such that $v_{m}\down0$ as $m\to\infty$
and that
\begin{equation}
\E|\widetilde{U}_{i}^{2}-\E(\widetilde{U}_{i}^{2}|\mathcal{G}_{i-m,i+m})|\leq d_{i}v_{m},\label{eq:L1-NED}
\end{equation}
where $\mathcal{G}_{i-m,i+m}$ is a $\sigma$-filed such that $U_{j}\in\mathcal{G}_{i-m,i+m}$
for $i-m\leq j\le i+m$. Note that for $m\geq i-1$, we have $\E(\widetilde{U}_{i}^{2}|\mathcal{G}_{i-m,i+m})=\widetilde{U}_{i}^{2}$,
so that $d_{i}v_{m}=0$. Therefore, we only need to consider the case
$m<i-1$: 
\begin{align}
\E(\widetilde{U}_{i}^{2}|\mathcal{G}_{i-m,i+m}) & =\E\left(\left(\sum_{j=i-m}^{i}\rho_{n}^{i-j}U_{j}+\rho_{n}^{m+1}\widetilde{U}_{i-m-1}\right)^{2}|\mathcal{G}_{i-m,i+m}\right)\nonumber \\
 & =\left(\sum_{j=i-m}^{i}\rho_{n}^{i-j}U_{j}\right)^{2}+\rho_{n}^{2m+2}\E\widetilde{U}_{i-m-1}^{2}.\label{eq:condl_mean_U_itilde_sq}
\end{align}
Using (\ref{eq:condl_mean_U_itilde_sq}), we have
\begin{align}
 & \E|\widetilde{U}_{i}^{2}-\E(\widetilde{U}_{i}^{2}|\mathcal{G}_{i-m,i+m})|\nonumber \\
=\  & \E\left[\left|\rho_{n}^{2m+2}(\widetilde{U}_{i-m-1}^{2}-\E\widetilde{U}_{i-m-1}^{2})+2\rho_{n}^{m+1}\widetilde{U}_{i-m-1}\sum_{j=i-m}^{i}\rho_{n}^{i-j}U_{j}\right|\right]\nonumber \\
\leq\  & 2\rho_{n}^{2m+2}\sum_{j=1}^{i-m-1}\sum_{k=1}^{i-m-1}\rho_{n}^{2(i-m-1)-j-k}\E\left|U_{j}U_{k}\right|+2\rho_{n}^{m+1}\sum_{k=1}^{i-m-1}\sum_{j=i-m}^{i}\rho_{n}^{2i-m-1-j-k}\E|U_{j}U_{k}|\nonumber \\
\leq\  & 2M(1-\rho_{n})^{-2}(\rho_{n}^{2m+2}+\rho_{n}^{m+1})\to0,
\end{align}
where the first inequality holds by the triangle inequality and the
second inequality holds by $\E|U_{j}U_{k}|\leq M$ using the moment
condition in part \ref{enu:para-space4} of the parameter space $\Lambda_{n}$.
Therefore, (\ref{eq:L1-NED}) holds by taking $d_{i}=2M(1-\rho_{n})^{-2}$
and $v_{m}=2((\rho^{*})^{2m+2}+(\rho^{*})^{m+1})\down0$ as $m\to\infty$
because $\rho_{n}\to\rho^{*}<1$.

\subsection{Proofs of Lemmas}
\begin{proof}[Proof of Lemma \ref{lem:continuous-Ifh}]
 By a Taylor expansion of $e^{-x}$ around $x=0$, we have 
\begin{equation}
e^{-x}=1-x+\frac{1}{2}x^{2}-\frac{1}{6}x^{3}+O(x^{4}).
\end{equation}
Hence, 
\begin{align}
e^{-2h} & =1-2h+2h^{2}-\frac{4}{3}h^{3}+O(h^{4})\quad\text{and}\nonumber \\
e^{-hr} & =1-hr+\frac{1}{2}h^{2}r^{2}-\frac{1}{6}h^{3}r^{3}+O(h^{4})\text{ as \ensuremath{h\to0}},
\end{align}
where the second equality holds uniformly for $r\in[0,1]$. Substituting
these equations into the denominator $h(1-e^{-2h})-2(1-e^{-h})^{2}$
of $\alpha_{h}(r)$ and $\beta_{h}(r)$ gives
\begin{align}
h(1-e^{-2h})-2(1-e^{-h})^{2}=\  & h(2h-2h^{2}+\frac{4}{3}h^{3}+O(h^{4}))-2\left(h-\frac{1}{2}h^{2}+\frac{1}{6}h^{3}+O(h^{4})\right)^{2}\nonumber \\
=\  & 2h^{2}-2h^{3}+\frac{4}{3}h^{4}-2h^{2}\left(1-h+\frac{7}{12}h^{2}\right)+O(h^{5})\nonumber \\
=\  & \frac{1}{6}h^{4}+O(h^{5})\text{ as \ensuremath{h\to0}.}
\end{align}
Similarly, for the numerators of $\alpha_{h}(r)$ and $\beta_{h}(r)$,
we have 
\begin{align}
h[1-e^{-2h}-2(1-e^{-h})e^{-hr}+2he^{-hr}-2(1-e^{-h})] & =\frac{2-3r}{3}h^{4}+O(h^{5})\quad\text{and}\nonumber \\
2h[he^{-hr}-(1-e^{-h})] & =(2r-1)h^{4}+O(h^{5})\text{ as \ensuremath{h\to0}}.
\end{align}
Therefore, we obtain
\begin{align}
\alpha_{h}(r) & =\frac{\frac{2-3r}{3}h^{4}+O(h^{5})}{\frac{1}{6}h^{4}+O(h^{5})}=4-6r+O(h)\quad\text{and}\nonumber \\
\beta_{h}(r) & =\frac{(2r-1)h^{4}+O(h^{5})}{\frac{1}{6}h^{4}+O(h^{5})}=12r-6+O(h)\text{ as \ensuremath{h\to0}},
\end{align}
uniformly over $r\in[0,1]$. Letting $h\down0$ completes the proof.
\end{proof}
\begin{proof}[Proof of Lemma \ref{lem:algebraic-relationship1}]
As $\rho_{n}=1-h_{n}/n$ and $h_{n}\to h\in(0,\infty)$, (\ref{eq:QS-expansion})
gives the proofs of parts \ref{enu:newnQ} and \ref{enu:newnS} of
Lemma \ref{lem:algebraic-relationship1}:
\begin{align}
Q & =\frac{1-(1-n^{-1}h_{n})^{2n}}{2h_{n}/n-h_{n}^{2}/n^{2}}=\frac{1-e^{-2h_{n}}+o(1)}{2h_{n}/n+o(n^{-1})}=\frac{1-e^{-2h_{n}}}{2h_{n}}n+o(n)\text{ and}\nonumber \\
S & =\frac{1-(1-n^{-1}h_{n})^{n}}{h_{n}/n}=\frac{1-e^{-h_{n}}+o(1)}{h_{n}/n}=\frac{1-e^{-h_{n}}}{h_{n}}n+o(n).\label{eq:QS-expansion}
\end{align}
Then, we have 
\begin{align}
\frac{n^{2}}{nQ-S^{2}} & =\frac{n^{2}}{n^{2}(1-e^{-2h_{n}})/2h_{n}-n^{2}(1-e^{-h_{n}})^{2}/h_{n}^{2}+o(n^{2})}\nonumber \\
 & =\frac{1}{(1-e^{-2h_{n}})/2h_{n}-(1-e^{-h_{n}})^{2}/h_{n}^{2}}+o(1)\nonumber \\
 & \to\frac{2h^{2}}{h(1-e^{-2h})-2(1-e^{-h})^{2}}.
\end{align}
 Similarly, we compute 
\begin{align}
\frac{nS}{nQ-S^{2}} & =\frac{n^{2}(1-e^{-h_{n}})/h_{n}+o(n^{2})}{n^{2}(1-e^{-2h_{n}})/2h_{n}-n^{2}(1-e^{-h_{n}})^{2}/h_{n}^{2}+o(n^{2})}\nonumber \\
 & =\frac{(1-e^{-h_{n}})/h_{n}}{(1-e^{-2h_{n}})/2h_{n}-(1-e^{-h_{n}})^{2}/h_{n}^{2}}+o(1)\nonumber \\
 & \to\frac{2h(1-e^{-h})}{h(1-e^{-2h})-2(1-e^{-h})^{2}}
\end{align}
and 
\begin{align}
\frac{nQ}{nQ-S^{2}} & =\frac{n^{2}(1-e^{-2h_{n}})/2h_{n}+o(n^{2})}{n^{2}(1-e^{-2h_{n}})/2h_{n}-n^{2}(1-e^{-h_{n}})^{2}/h_{n}^{2}+o(n^{2})}\nonumber \\
 & =\frac{n^{2}(1-e^{-2h_{n}})/2h_{n}}{n^{2}(1-e^{-2h_{n}})/2h_{n}-n^{2}(1-e^{-h_{n}})^{2}/h_{n}^{2}}+o(1)\nonumber \\
 & \to\frac{h(1-e^{-2h})}{h(1-e^{-2h})-2(1-e^{-h})^{2}}.
\end{align}

For part \ref{enu:nQS4}, note that 
\begin{align}
 & n^{-1}\sum_{i=1}^{n}\left(\frac{n(Q-S\rho^{i-1})}{nQ-S^{2}}\right)^{2}\nonumber \\
=\  & n^{-1}\sum_{i=1}^{n}\left(\frac{nQ}{nQ-S^{2}}\right)^{2}+n^{-1}\sum_{i=1}^{n}\left(\frac{nS}{nQ-S^{2}}\right)^{2}\rho_{n}^{2i-2}-2n^{-1}\sum_{i=1}^{n}\frac{n^{2}SQ}{(nQ-S^{2})^{2}}\rho_{n}^{i-1}\nonumber \\
=\  & \left(\frac{nQ}{nQ-S^{2}}\right)^{2}+\left(\frac{nS}{nQ-S^{2}}\right)^{2}n^{-1}Q-2\frac{n^{2}SQ}{(nQ-S^{2})^{2}}n^{-1}S\nonumber \\
=\  & \frac{nQ}{nQ-S^{2}}.
\end{align}
Furthermore, we have 
\begin{align}
 & n^{-1}\sum_{i=1}^{n}\frac{n^{2}(Q-S\rho^{i-1})(n\rho^{i-1}-S)}{(nQ-S^{2})^{2}}\nonumber \\
=\  & n^{-1}\sum_{i=1}^{n}\frac{nQ}{nQ-S^{2}}\frac{n^{2}}{nQ-S^{2}}\rho_{n}^{i-1}-n^{-1}\sum_{i=1}^{n}\frac{nQ}{nQ-S^{2}}\frac{nS}{nQ-S^{2}}\nonumber \\
 & -n^{-1}\sum_{i=1}^{n}\frac{nS}{nQ-S^{2}}\frac{n^{2}}{nQ-S^{2}}\rho_{n}^{2i-2}+n^{-1}\sum_{i=1}^{n}\left(\frac{nS}{nQ-S^{2}}\right)^{2}\rho_{n}^{i-1}\nonumber \\
=\  & \frac{nQ}{nQ-S^{2}}\frac{n^{2}}{nQ-S^{2}}n^{-1}S-\frac{nQ}{nQ-S^{2}}\frac{nS}{nQ-S^{2}}\nonumber \\
 & -\frac{nS}{nQ-S^{2}}\frac{n^{2}}{nQ-S^{2}}n^{-1}Q+\left(\frac{nS}{nQ-S^{2}}\right)^{2}n^{-1}S\nonumber \\
=\  & \frac{-nS}{(nQ-S^{2})}
\end{align}
and 
\begin{align}
 & n^{-1}\sum_{i=1}^{n}\left(\frac{n(n\rho^{i-1}-S)}{nQ-S^{2}}\right)^{2}\nonumber \\
=\  & n^{-1}\sum_{i=1}^{n}\frac{n^{4}}{(nQ-S^{2})^{2}}\rho_{n}^{2i-2}-2n^{-1}\sum_{i=1}^{n}\frac{n^{3}S}{(nQ-S^{2})^{2}}\rho_{n}^{i-1}+n^{-1}\sum_{i=1}^{n}\frac{n^{2}S^{2}}{(nQ-S^{2})^{2}}\nonumber \\
=\  & \frac{n^{4}}{(nQ-S^{2})^{2}}n^{-1}Q-2\frac{n^{3}S}{(nQ-S^{2})^{2}}n^{-1}S+\frac{n^{2}S^{2}}{(nQ-S^{2})^{2}}\nonumber \\
=\  & \frac{n^{2}}{nQ-S^{2}}.
\end{align}
This completes the proof.
\end{proof}
\begin{proof}[Proof of Lemma \ref{lem:algebraic-relationsip2}]
For part \ref{enu:al-relation1}, the result holds trivially for
$i=1$. When $i=[nr]>1$, letting $h_{n}=n(1-\rho_{n})\to0$, we have
\begin{equation}
\frac{1-\rho_{n}^{i-1}}{n(1-\rho_{n})}=\frac{1-(1-\frac{h_{n}}{n})^{i-1}}{h_{n}}=\frac{1-(1-(i-1)\frac{h_{n}}{n}+{i-1 \choose 2}\frac{h_{n}^{2}}{n^{2}})}{h_{n}}+o(1)\to r.
\end{equation}

For part \ref{enu:al-relation2}, by Lemma \ref{lem:algebraic-relationship1}\ref{enu:nQS1}--\ref{enu:nQS3},
we have
\begin{align}
\frac{n(Q-2S+n)}{nQ-S^{2}} & =\frac{h_{n}(1-e^{-2h_{n}})-4h_{n}(1-e^{-h_{n}})+2h_{n}^{2}}{h_{n}(1-e^{-2h_{n}})-2(1-e^{-h_{n}})^{2}}+o(1).
\end{align}
Using the same Taylor expansion argument as in the proof of Lemma
\ref{lem:continuous-Ifh}, we have 
\begin{align}
 & h_{n}(1-e^{-2h_{n}})-4h_{n}(1-e^{-h_{n}})+2h_{n}^{2}\nonumber \\
=\  & h_{n}(2h_{n}-2h_{n}^{2}+\frac{4}{3}h_{n}^{3}+O(h^{4}))-4h_{n}(h_{n}-\frac{1}{2}h_{n}^{2}+\frac{1}{6}h_{n}^{3}+O(h_{n}^{4}))+2h_{n}^{2}\nonumber \\
=\  & \frac{2}{3}h_{n}^{4}+O(h_{n}^{5}),
\end{align}
and 
\begin{equation}
h_{n}(1-e^{-2h_{n}})-2(1-e^{-h_{n}})^{2}=\frac{1}{6}h_{n}^{4}+O(h_{n}^{5}).
\end{equation}
Therefore, we derive
\begin{equation}
\frac{n(Q-2S+n)}{nQ-S^{2}}=\frac{\frac{2}{3}h_{n}^{4}+O(h_{n}^{5})}{\frac{1}{6}h_{n}^{4}+O(h_{n}^{5})}\to4.
\end{equation}

For part \ref{enu:al-relation3}, we have 
\begin{align}
\frac{n^{2}(1-\rho_{n})(n-S)}{nQ-S^{2}} & =\frac{h_{n}(2h_{n}^{2}-2h_{n}(1-e^{-h_{n}}))}{h_{n}(1-e^{-2h_{n}})-2(1-e^{-h_{n}})^{2}}+o(1)\nonumber \\
 & =\frac{2h_{n}^{3}-2h_{n}^{2}(h_{n}-\frac{1}{2}h_{n}^{2}+O(h_{n}^{3}))}{\frac{1}{6}h_{n}^{4}+O(h_{n}^{5})}+o(1)\to6.
\end{align}

For part \ref{enu:al-relation4}, we have 
\begin{align}
\frac{n^{4}(1-\rho_{n})^{2}}{nQ-S^{2}} & =\frac{2h_{n}^{4}}{h_{n}(1-e^{-2h_{n}})-2(1-e^{-h_{n}})^{2}}+o(1)\nonumber \\
 & =\frac{2h_{n}^{4}}{\frac{1}{6}h_{n}^{4}+O(h_{n}^{5})}+o(1)\to12.
\end{align}
Thus, the proof is completed..
\end{proof}
\begin{proof}[Proof of Lemma \ref{lem:FCLT}]
By setting $\phi_{n,i}=1$ in the proof of Lemma 5 in AG12, we have
already established this lemma except for parts \ref{enu:lem-e},
\ref{enu:lem-g}, \ref{enu:lem-i}, \ref{enu:lem-l}, and certain
combinations of part \ref{enu:lem-n}. We prove these remaining results
below.

For parts \ref{enu:lem-e}, \ref{enu:lem-g}, \ref{enu:lem-i}, and
\ref{enu:lem-l}, we follow the proof of Lemma 4 in AG12. We prove
part \ref{enu:lem-g} here, and the others follow an analogous argument.
When $h=0$, we can directly apply Theorem 4.4 of \textcite{hansen1992convergence}.
When $h>0$, we have 
\begin{equation}
n^{-1/2}\rho_{n}^{[nr]-1}\widetilde{U}_{[nr]}\Rightarrow e^{-hr}\int_{0}^{r}\exp(-(r-s)h)dW(s)=e^{-hr}I_{h}(r),
\end{equation}
by Theorem 4.4 of \textcite{hansen1992convergence}. Hence, 
\begin{equation}
n^{-3/2}\sum_{i=1}^{n}\rho_{n}^{i-1}\widetilde{U}_{n,i-1}\dto\int_{0}^{1}e^{-hr}I_{h}(r)dr
\end{equation}
by the continuous mapping theorem. Parts \ref{enu:lem-e}, \ref{enu:lem-i},
and \ref{enu:lem-l} can be proved in a similar way and we omit the
details.

For part \ref{enu:lem-n}, the case with $l_{0}=0$ is established
in Lemma 5 of AG12. The cases with $l_{0}=1$ and $2$ follow by the
same argument, as the rescaling of $(\rho_{n}^{i-1})^{l_{0}}$ leaves
the conclusion unchanged.
\end{proof}
\begin{proof}[Proof of Lemma \ref{lem:case2lem}]
First, note that
\begin{equation}
n^{-1}(1-\rho_{n})^{1/2}\widetilde{U}_{-1}'X_{2}(\rho_{n})=\left(n^{-3/2}h_{n}^{1/2}\sum_{i=2}^{n}\widetilde{U}_{i-1},n^{-3/2}h_{n}\sum_{i=2}^{n}\rho_{n}^{i-1}\widetilde{U}_{i-1}\right).\label{eq:U_-1X2}
\end{equation}
Also note that $n^{-3/2}h_{n}^{1/2}\sum_{i=1}^{n}\widetilde{U}_{i-1}=o_{p}(1)$
has been proved in Lemma 8 of AG12. For the second term in (\ref{eq:U_-1X2}),
by Markov's inequality, it suffices to show that 
\begin{equation}
n^{-3}h_{n}^{2}\E\left(\sum_{i=2}^{n}\rho_{n}^{i-1}\widetilde{U}_{i-1}\right)^{2}=o(1).
\end{equation}
Notice that 
\begin{align}
\E\left(\sum_{i=2}^{n}\rho_{n}^{i-1}\widetilde{U}_{i-1}\right)^{2} & =\E\left(\sum_{i=2}^{n}\sum_{k=1}^{i-1}\rho_{n}^{i+k-1}U_{i-1-k}\right)^{2}\nonumber \\
 & \leq\sum_{i,j=2}^{n}\sum_{k=1}^{i-1}\sum_{l=1}^{j-1}\rho_{n}^{i+j+k+l-2}\left|\E(U_{|i+l-(j+k)|+1}U_{1})\right|.\label{eq:rhontildeU-expansion}
\end{align}
By the stationarity of the $U_{i}$'s, the $|\E(U_{|i+l-(j+k)|+1}U_{1})|$
term in (\ref{eq:rhontildeU-expansion}) is uniformly bounded over
$i,j,k,l$, and hence, can be ignored.

The contribution of the summands with $i+l=j+k$ is 
\begin{align}
 & \sum_{i=2}^{n}\sum_{k=1}^{i-1}\rho_{n}^{2(i+k)-2}+2\sum_{i>j=2}^{n}\sum_{k=i-j+1}^{i-1}\rho_{n}^{2(j+k)-2}\nonumber \\
\leq\  & \sum_{i=2}^{\infty}\sum_{k=1}^{i-1}\rho_{n}^{2(i+k)-2}+2\sum_{i>j=2}^{n}\rho_{n}^{i+j}\frac{1-\rho_{n}^{2j}}{1-\rho_{n}^{2}}\nonumber \\
\leq\  & \left(\sum_{i=2}^{\infty}\rho_{n}^{2i-2}\right)^{2}+(1-\rho_{n})^{-1}\left(\sum_{i=2}^{\infty}\rho_{n}^{i}\right)^{2}=O((1-\rho_{n})^{-3}).
\end{align}
This term is negligible because 
\begin{equation}
(1-\rho_{n})^{-3}n^{-3}h_{n}^{2}=h_{n}^{-1}\to0.
\end{equation}

Next, we consider the case $i+l\neq j+k$. By symmetry, we only need
to prove it for $i+l>j+k$. We separate this case into two sub-cases
(i) $l>j+k-i$, $k\geq i-j+1$ and (ii) $l>j+k-i$, $k\leq i-j$.
By the properties of $\alpha$-mixing variables \parencite[Theorem 3, p. 9]{doukhan1995mixing}
and parts \ref{enu:para-space3} and \ref{enu:para-space4} of $\Lambda_{n}$,
we have 
\begin{equation}
\E(U_{l}U_{k})\leq8\|U_{l}\|_{\zeta}\|U_{k}\|_{\zeta}\alpha(|l-k|)^{1-2/\zeta}\leq8CM|l-k|^{-3(\zeta-2)/(\zeta-3)}\label{eq:alpha-mixing-inequalities}
\end{equation}
for $\zeta>3$. Write $\epsilon=-3/(\zeta-3)$. Hence, the sum for
the sub-case (i) can be bounded by 
\begin{align}
\  & 8CM\sum_{i>j=2}^{n}\sum_{k=1}^{i-1}\sum_{l=j+k-i+1}^{j-1}\rho_{n}^{i+j+k+l-2}(i+l-j-k)^{-3-\epsilon}\nonumber \\
=\  & 8CM\sum_{i>j=2}^{n}\sum_{k=1}^{i-1}\sum_{m=1}^{i-1-k}\rho_{n}^{2(j+k)+m-2}m^{-3-\epsilon}\nonumber \\
\leq\  & 8CM\sum_{i>j=2}^{n}\left(\sum_{k=1}^{i-1}\rho_{n}^{i+j+2k-2}\right)\left(\sum_{m=1}^{i-1-k}\rho_{n}^{m}m^{-3-\epsilon}\right)\nonumber \\
=\  & O((1-\rho_{n})^{-3}),
\end{align}
 where the last equality holds by 
\begin{equation}
\sum_{l=1}^{\infty}\rho_{n}^{2k}=O((1-\rho_{n}^{2})^{-1})=O((1-\rho_{n})^{-1})\text{ and }\sum_{m=1}^{\infty}\rho_{n}^{m}m^{-3-\epsilon}=O(1).
\end{equation}
Therefore, the summand with $j\neq k$ is also negligible because
\begin{equation}
(1-\rho_{n})^{-3}n^{-3}h_{n}^{2}=h_{n}^{-1}\to0.
\end{equation}
Similarly, we can prove the sub-case (ii), which is also $O((1-\rho_{n})^{-3})$.

Parts \ref{enu:case2lem-b} and \ref{enu:case2lem-c} have been proved
in Lemma 8 of AG12.

For part \ref{enu:case2lem-d}, note that 
\begin{equation}
(n^{-1}X(\rho_{n})'X(\rho_{n}))^{-1}n^{-1}X_{2}(\rho_{n})'U=(\det)^{-1}(T_{1},T_{2},T_{3})',
\end{equation}
where 
\begin{align}
\det:=\  & h_{n}\Bigg[n^{-2}(nQ-S^{2})\left(n^{-1}\sum_{i=2}^{n}\widetilde{U}_{i-1}^{2}\right)-n^{-1}\left(n^{-1}\sum_{i=2}^{n}\rho_{n}^{i-1}\widetilde{U}_{i-1}\right)^{2}\nonumber \\
 & +2n^{-1}S\left(n^{-1}\sum_{i=2}^{n}\widetilde{U}_{i-1}^{2}\right)\left(n^{-1}\sum_{i=1}^{n}\rho_{n}^{i-1}\widetilde{U}_{i-1}\right)-n^{-1}Q\left(n^{-1}\sum_{i=2}^{n}\widetilde{U}_{i-1}^{2}\right)^{2}\Bigg],\nonumber \\
T_{1}:=\  & h_{n}\Bigg[\left(n^{-1}Sn^{-1}\sum_{i=2}^{n}\rho_{n}^{i-1}\widetilde{U}_{i-1}-n^{-1}Qn^{-1}\sum_{i=2}^{n}\widetilde{U}_{i-1}\right)\left(n^{-1}\sum_{i=1}^{n}U_{i}\right)\nonumber \\
 & -\left(n^{-1}\sum_{i=2}^{n}\rho_{n}^{i-1}\widetilde{U}_{i-1}-n^{-1}Sn^{-1}\sum_{i=2}^{n}\widetilde{U}_{i-1}\right)\left(n^{-1}\sum_{i=1}^{n}\rho_{n}^{i-1}U_{i}\right)\nonumber \\
 & +n^{-2}\left(nQ-S^{2}\right)\left(n^{-1}\sum_{i=2}^{n}\widetilde{U}_{i-1}U_{i}\right)\Bigg],\nonumber \\
T_{2}:=\  & h_{n}\Bigg[\left(n^{-1}Qn^{-1}\sum_{i=2}^{n}\widetilde{U}_{i-1}^{2}-\left(n^{-1}\sum_{i=2}^{n}\rho_{n}^{i-1}\widetilde{U}_{i-1}\right)^{2}\right)\left(n^{-1}\sum_{i=1}^{n}U_{i}\right)\nonumber \\
 & -\left(n^{-1}Sn^{-1}\sum_{i=2}^{n}\widetilde{U}_{i-1}^{2}-\left(n^{-1}\sum_{i=2}^{n}\rho_{n}^{i-1}\widetilde{U}_{i-1}\right)\left(n^{-1}\sum_{i=2}^{n}\widetilde{U}_{i-1}\right)\right)\left(n^{-1}\sum_{i=1}^{n}\rho_{n}^{i-1}U_{i}\right)\nonumber \\
 & +n^{-1}(S-h_{n}^{1/2}Q)\left(n^{-1}\sum_{i=2}^{n}\rho_{n}^{i-1}\widetilde{U}_{i-1}\right)\left(n^{-1}\sum_{i=2}^{n}\widetilde{U}_{i-1}U_{i}\right)\Bigg],\text{ and}\nonumber \\
T_{3}:=\  & h_{n}^{1/2}\Bigg[\left(\left(n^{-1}\sum_{i=2}^{n}\rho_{n}^{i-1}\widetilde{U}_{i-1}\right)\left(n^{-1}\sum_{i=2}^{n}\widetilde{U}_{i-1}\right)-n^{-1}Sn^{-1}\sum_{i=2}^{n}\widetilde{U}_{i-1}^{2}\right)\left(n^{-1}\sum_{i=1}^{n}U_{i}\right)\nonumber \\
 & +\left(n^{-1}\sum_{i=2}^{n}\widetilde{U}_{i-1}^{2}-\left(n^{-1}\sum_{i=2}^{n}\widetilde{U}_{i-1}\right)^{2}\right)\left(n^{-1}\sum_{i=1}^{n}\rho_{n}^{i-1}U_{i}\right)\nonumber \\
 & +\left(n^{-1}Sn^{-1}\sum_{i=2}^{n}\widetilde{U}_{i-1}-n^{-1}\sum_{i=2}^{n}\rho_{n}^{i-1}\widetilde{U}_{i-1}\right)\left(n^{-1}\sum_{i=2}^{n}\widetilde{U}_{i-1}U_{i}\right)\Bigg].
\end{align}
Note that 
\begin{align}
n^{-2}(nQ-S^{2}) & =O(n^{-1}(1-\rho_{n})^{-1}),\nonumber \\
n^{-1}S & =O(n^{-1}(1-\rho_{n})^{-1}),\nonumber \\
n^{-1}Q & =O(n^{-1}(1-\rho_{n})^{-1}),\nonumber \\
n^{-1}\sum_{i=2}^{n}\widetilde{U}_{i-1}^{2} & =O_{p}((1-\rho_{n})^{-1}),\nonumber \\
n^{-1}\sum_{i=2}^{n}\rho_{n}^{i-1}\widetilde{U}_{i-1} & =o_{p}(n^{-1/2}(1-\rho_{n})^{-1}),\nonumber \\
n^{-1}(S-h_{n}^{1/2}Q) & =O(n^{-1/2}(1-\rho_{n})^{-1/2}),\nonumber \\
n^{-1}\sum_{i=2}^{n}\widetilde{U}_{i-1}U_{i} & =O_{p}(n^{-1/2}(1-\rho_{n})^{-1/2}),\nonumber \\
n^{-1}\sum_{i=1}^{n}\rho_{n}^{i-1}U_{i} & =O_{p}(n^{-1}(1-\rho_{n})^{-1/2}),\text{ and }\nonumber \\
n^{-1}\sum_{i=2}^{n}\widetilde{U}_{i-1} & =o_{p}((1-\rho_{n})^{-1/2}),
\end{align}
by parts \ref{enu:case2lem-a} and \ref{enu:case2lem-b} of Lemma
\ref{lem:case2lem}, (\ref{eq:clt-h-0}), the CLT for MDS for $h_{n}\rho_{n}^{i-1}U_{i}$,
and Lemma \ref{lem:Order-expect}. Using these results, we derive
\begin{align}
\det & =O_{p}((1-\rho_{n})^{-2}),\nonumber \\
T_{1} & =O_{p}(n^{-1/2}(1-\rho_{n})^{-1/2}),\nonumber \\
T_{2} & =O_{p}(n^{-1/2}(1-\rho_{n})^{-1}),\text{ and}\nonumber \\
T_{3} & =O_{p}(n^{-1/2}(1-\rho_{n})^{-1}).
\end{align}
Therefore, we have
\begin{equation}
(\det)^{-1}(T_{1},T_{2},T_{3})'=(O_{p}(n^{-1/2}(1-\rho_{n})^{3/2}),O_{p}(n^{-1/2}(1-\rho_{n})),O_{p}(n^{-1/2}(1-\rho_{n})))'.
\end{equation}

For part \ref{enu:case2lem-e}, note that 
\begin{align}
\widetilde{U}_{-1}'\Delta^{2}\widetilde{U}_{-1} & =\sum_{i=2}^{n}\frac{\widetilde{U}_{i-1}^{2}}{(1-p_{ii}^{*})^{2}}\left[U_{i}-(\widetilde{U}_{i-1},1,h_{n}^{1/2}\rho_{n}^{i-1})(X(\rho_{n})'X(\rho_{n}))^{-1}X(\rho_{n})'U\right]^{2}.
\end{align}
By part \ref{enu:case2lem-c} of the lemma and the fact that $p_{ii}^{*}=O_{p}(n^{-1/2})$,
it remains to be shown that 
\begin{align}
O((1-\rho_{n}))n^{-1}\sum_{i=2}^{n}\widetilde{U}_{i-1}^{2}U_{i}(\widetilde{U}_{i-1},1,h_{n}^{1/2}\rho_{n}^{i-1})(X(\rho_{n})'X(\rho_{n}))^{-1}X(\rho_{n})'U & \pto0\text{ and}\nonumber \\
O((1-\rho_{n}))n^{-1}\sum_{i=2}^{n}\widetilde{U}_{i-1}^{2}\left[(\widetilde{U}_{i-1},1,h_{n}^{1/2}\rho_{n}^{i-1})(X(\rho_{n})'X(\rho_{n}))^{-1}X(\rho_{n})'U\right]^{2} & \pto0.
\end{align}
By part \ref{enu:case2lem-d}, we only need to show that 
\begin{align}
(1-\rho_{n})^{2}n^{-3/2}\sum_{i=2}^{n}\widetilde{U}_{i-1}^{2}U_{i} & =o_{p}(1),\nonumber \\
(1-\rho_{n})^{2}n^{-2}\sum_{i=2}^{n}\widetilde{U}_{i-1}^{4} & =o_{p}(1),\nonumber \\
(1-\rho_{n})^{2}n^{-3/2}\sum_{i=2}^{n}\widetilde{U}_{i-1}^{3}U_{i} & =o_{p}(1),\nonumber \\
(1-\rho_{n})^{3}n^{-2}\sum_{i=2}^{n}\widetilde{U}_{i-1}^{2} & =o_{p}(1),\nonumber \\
(1-\rho_{n})^{2}n^{-3/2}\sum_{i=2}^{n}h_{n}^{1/2}\rho_{n}^{i-1}\widetilde{U}_{i-1}^{2}U_{i} & =o_{p}(1),\nonumber \\
(1-\rho_{n})^{3}n^{-2}\sum_{i=2}^{n}h_{n}^{1/2}\rho_{n}^{i-1}\widetilde{U}_{i-1}^{2} & =o_{p}(1),\nonumber \\
(1-\rho_{n})^{3}n^{-2}\sum_{i=2}^{n}h_{n}\rho_{n}^{2i-2}\widetilde{U}_{i-1}^{2} & =o_{p}(1),\text{ and}\nonumber \\
(1-\rho_{n})^{5/2}n^{-2}\sum_{i=2}^{n}h_{n}^{1/2}\rho_{n}^{i-1}\widetilde{U}_{i-1}^{3} & =o_{p}(1).\label{eq:boundsh4}
\end{align}
We only need to prove the last four equations in (\ref{eq:boundsh4})
because all the other results have been established in Lemma 8 of
AG12. Using the facts that 
\begin{equation}
n^{-1}\sum_{i=2}^{n}h_{n}^{1/2}\rho_{n}^{i-1}\to0\text{ and }n^{-1}\sum_{i=2}^{n}h_{n}\rho_{n}^{2i-2}\to1/2,
\end{equation}
the sixth and seventh lines are proved by part \ref{enu:case2lem-b}
of the lemma. The fifth and last lines of (\ref{eq:boundsh4}) can
be proved using Markov's inequality, as in the proof of part \ref{enu:case2lem-a},
together with the methods employed in the proof of Lemma 7 in AG12.

For part \ref{enu:case2lem-f}, we have 
\begin{align}
 & \widetilde{U}_{-1}'\Delta^{2}X_{2}(\rho_{n})\nonumber \\
=\  & \sum_{i=2}^{n}\frac{\widetilde{U}_{i-1}(1,h_{n}^{1/2}\rho_{n}^{i-1})}{(1-p_{ii}^{*})^{2}}\left[U_{i}-(\widetilde{U}_{i-1},1,h_{n}^{1/2}\rho_{n}^{i-1})(X(\rho_{n})'X(\rho_{n}))^{-1}X(\rho_{n})'U\right]^{2}.
\end{align}
Then, the desired results follow from
\begin{align}
(1-\rho_{n})^{1/2}n^{-1}\sum_{i=2}^{n}\widetilde{U}_{i-1}U_{i}^{2}(1,h_{n}\rho_{n}^{i-1}) & =(O_{p}(1),O_{p}(1))\text{ and}\nonumber \\
(1-\rho_{n})^{5/2}n^{-2}\sum_{i=2}^{n}\widetilde{U}_{i-1}\bar{U}_{i-1}^{2}(1,h_{n}\rho_{n}^{i-1}) & =(O_{p}(1),O_{p}(1)),\label{eq:lem8_part_f-g-3}
\end{align}
where $\bar{U}_{i-1}=(\widetilde{U}_{i-1},1,h_{n}^{1/2}\rho_{n}^{i-1})(X(\rho_{n})'X(\rho_{n}))^{-1}X(\rho_{n})'U$.

To prove the first line of (\ref{eq:lem8_part_f-g-3}), by Markov's
inequality we only need to show that 
\begin{align}
(1-\rho_{n})n^{-2}\E\left[(\sum_{i=2}^{n}\widetilde{U}_{i-1}U_{i}^{2})^{2}\right] & =O(1)\text{ and}\nonumber \\
(1-\rho_{n})h_{n}^{2}n^{-2}\E\left[(\sum_{i=2}^{n}\rho_{n}^{i-1}\widetilde{U}_{i-1}U_{i}^{2})^{2}\right] & =O(1).\label{eq:lem8_part_f-g-12}
\end{align}
Note that 
\begin{align}
 & \E\left(\sum_{i=2}^{n}\widetilde{U}_{i-1}U_{i}^{2}\right)^{2}\nonumber \\
=\  & \E\left(\sum_{i,j=2}^{n}\left(\sum_{k=1}^{i-1}\rho_{n}^{i-1-k}U_{k}\right)\left(\sum_{l=1}^{j-1}\rho_{n}^{j-1-l}U_{l}\right)U_{i}^{2}U_{j}^{2}\right)\nonumber \\
=\  & \sum_{i,j=2}^{n}\sum_{k=1}^{i-1}\sum_{l=1}^{j-1}\rho_{n}^{k+l}\E\left(U_{i-1-k}U_{i}^{2}U_{j-1-l}U_{j}^{2}\right)\nonumber \\
\leq\  & \sum_{i,j=2}^{n}\sum_{k=1}^{i-1}\sum_{l=1}^{j-1}\rho_{n}^{k+l}\left\{ Cov(U_{i-1-k}U_{i}^{2},U_{j-1-l}U_{j}^{2})+|\E\left(U_{i-1-k}U_{i}^{2}\right)||\E\left(U_{j-1-l}U_{j}^{2}\right)|\right\} \nonumber \\
\leq\  & M\sum_{i=j\ge2}^{n}\sum_{k=1}^{i-1}\sum_{l=1}^{j-1}\rho_{n}^{k+l}+8CM\sum_{i,j\geq2,i\neq j}^{n}\sum_{k=1}^{i-1}\sum_{l=1}^{j-1}\rho_{n}^{k+l}|i-j|^{-3-\epsilon}\nonumber \\
 & +64C^{2}M\sum_{i,j=2}^{n}\sum_{k=1}^{i-1}\sum_{l=1}^{j-1}\rho_{n}^{k+l}(k+1)^{-3-\epsilon}(l+1)^{-3-\epsilon}\nonumber \\
\leq\  & M\sum_{i=j\ge2}^{n}\left(\sum_{k=1}^{\infty}\rho_{n}^{k}\right)^{2}+8CMn\left(\sum_{i>1}^{\infty}i^{-3-\epsilon}\right)\left(\sum_{k=1}^{\infty}\rho_{n}^{k}\right)^{2}\nonumber \\
 & +64C^{2}Mn^{2}\left(\sum_{k=1}^{\infty}(k+1)^{-3-\epsilon}\right)^{2}\nonumber \\
=\  & O(n(1-\rho_{n})^{-2}+n^{2}),
\end{align}
where the second inequality uses the properties of $\alpha$-mixing
analogously to (\ref{eq:alpha-mixing-inequalities}) and the moment
conditions in parts \ref{enu:para-space3} and \ref{enu:para-space4}
of $\L_{n}$, and the last equality holds by 
\begin{align}
\sum_{i=1}^{\infty}i^{-3-\epsilon} & =O(1)\text{ and }\sum_{k=1}^{\infty}\rho_{n}^{k}=O((1-\rho_{n})^{-1}).
\end{align}
Therefore, we have 
\begin{equation}
(1-\rho_{n})n^{-2}\E\left[(\sum_{i=2}^{n}\widetilde{U}_{i-1}U_{i}^{2})^{2}\right]=O(h_{n}^{-1}+n^{-1}h_{n})=o(1),
\end{equation}
which establishes the first line of (\ref{eq:lem8_part_f-g-12}).

For the second line of (\ref{eq:lem8_part_f-g-12}), using the same
expansion and results as before, we have
\begin{align}
\E(\sum_{i=2}^{n}\rho_{n}^{i-1}\widetilde{U}_{i-1}U_{i}^{2})^{2}\leq\  & M\sum_{i=j\ge2}^{n}\rho_{n}^{i+j-2}\sum_{k=1}^{i-1}\sum_{l=1}^{j-1}\rho_{n}^{k+l}\nonumber \\
 & +8CM\sum_{i,j\geq2,i\neq j}\rho_{n}^{i+j-2}\sum_{k=1}^{i-1}\sum_{l=1}^{j-1}\rho_{n}^{k+l}|i-j|^{-3-\epsilon}\nonumber \\
 & +64C^{2}M\sum_{i,j=2}^{n}\rho_{n}^{i+j-2}\sum_{k=1}^{i-1}\sum_{l=1}^{j-1}\rho_{n}^{k+l}(k+1)^{-3-\epsilon}(l+1)^{-3-\epsilon}\nonumber \\
\leq\  & M\left(\sum_{k=1}^{\infty}\rho_{n}^{k}\right)^{2}\left(\sum_{k=1}^{\infty}\rho_{n}^{2k-2}\right)\nonumber \\
 & +8CM\left(\sum_{i,j\geq2,i\neq j}^{\infty}\rho_{n}^{i+j-2}|i-j|^{-3-\epsilon}\right)\left(\sum_{k=1}^{\infty}\rho_{n}^{k}\right)^{2}\nonumber \\
 & +64C^{2}M\left(\sum_{k=1}^{\infty}\rho_{n}^{k}\right)^{2}\left(\sum_{k=1}^{\infty}(k+1)^{-3-\epsilon}\right)^{2}\nonumber \\
=\  & O((1-\rho_{n})^{-3}),
\end{align}
where the last equality holds by $\sum_{k=1}^{\infty}\rho_{n}^{2k-2}=O((1-\rho_{n})^{-1})$
and 
\begin{align}
\sum_{i,j\geq2,i\neq j}^{\infty}\rho_{n}^{i+j-2}|i-j|^{-3-\epsilon} & =2\sum_{i>j\geq2}^{\infty}\rho_{n}^{i+j-2}(i-j)^{-3-\epsilon}=2\sum_{j=2}^{\infty}\sum_{k=1}^{\infty}\rho_{n}^{k+2j-2}k^{-3-\epsilon}\nonumber \\
 & =2\left(\sum_{j=2}^{\infty}\rho_{n}^{2j-2}\right)\left(\sum_{k=1}^{\infty}\rho_{n}^{k}k^{-3-\epsilon}\right)=O((1-\rho_{n})^{-1}).
\end{align}
Therefore, we have 
\begin{equation}
(1-\rho_{n})h_{n}^{2}n^{-2}\E\left[(\sum_{i=2}^{n}\rho_{n}^{i-1}\widetilde{U}_{i-1}U_{i}^{2})^{2}\right]=O((1-\rho_{n})^{-2}h_{n}^{2}n^{-2})=O(1).
\end{equation}

To prove the second line of (\ref{eq:lem8_part_f-g-3}), note that
\begin{align}
\bar{U}_{i-1}^{2}=\  & O_{p}(n^{-1}(1-\rho_{n})^{2})[1+h_{n}\rho_{n}^{2i-2}+(1-\rho_{n})\widetilde{U}_{i-1}^{2}\nonumber \\
 & +2h_{n}^{1/2}\rho_{n}^{i-1}+2(1-\rho_{n})^{1/2}\widetilde{U}_{i-1}+2h_{n}^{1/2}(1-\rho_{n})\rho_{n}^{i-1}\widetilde{U}_{i-1}]\label{eq:barUi-square}
\end{align}
by part \ref{enu:case2lem-d}. Following the same argument as in part
\ref{enu:case2lem-a}, we can show that 
\begin{align}
(1-\rho_{n})^{5/2}n^{-2}h_{n}^{k_{0}}\sum_{i=2}^{n}\rho_{n}^{k_{1}(i-1)}\widetilde{U}_{i-1} & =O_{p}(1)\label{eq:sum_utilde}
\end{align}
for $(k_{0},k_{1})=(0,0),(1/2,1),(1,2),(1,1),(3/2,2)$, and $(2,3)$.
For example, the case $(2,3)$ can be established by proving 
\begin{equation}
(1-\rho_{n})^{5}n^{-4}h_{n}^{4}\E\left(\sum_{i=2}^{n}\rho_{n}^{3(i-1)}\widetilde{U}_{i-1}\right)^{2}=(1-\rho_{n})^{5}n^{-4}h_{n}^{4}\times O((1-\rho_{n})^{-3})=o(1),
\end{equation}
following the same expansion as in (\ref{eq:rhontildeU-expansion})
and observing that $\sum_{j=1}^{n}\rho_{n}^{6j}=O((1-\rho_{n})^{-1})$.
Then we only need to show that 
\begin{align}
(1-\rho_{n})^{l_{0}}n^{-2}h_{n}^{l_{1}}\sum_{i=2}^{n}\rho_{n}^{l_{2}(i-1)}\widetilde{U}_{i-1}^{l_{3}} & =O_{p}(1),
\end{align}
for $(l_{0},l_{1},l_{2},l_{3})=(7/2,0,0,3)$, $(3,0,0,2)$, $(7/2,1/2,1,2)$,
$(7/2,1,1,3)$, $(3,1,1,2)$, and (7/2, 3/2,2,2).

The cases $(3,0,0,2)$ and $(7/2,1/2,1,2)$ are implied by the fourth
and sixth lines of (\ref{eq:boundsh4}), respectively.

For the case $(7/2,0,0,3)$, we have
\begin{align}
\left|(1-\rho_{n})^{7/2}n^{-2}\sum_{i=2}^{n}\widetilde{U}_{i-1}^{3}\right| & \leq(1-\rho_{n})^{2}\left((1-\rho_{n})^{2}n^{-2}\sum_{i=2}^{n}\widetilde{U}_{i-1}^{4}\right)^{1/2}\left((1-\rho_{n})n^{-2}\sum_{i=2}^{n}\widetilde{U}_{i-1}^{2}\right)^{1/2}\nonumber \\
 & =(1-\rho_{n})^{2}\times o_{p}(1)\times O_{p}(n^{-1/2})=o_{p}(1),
\end{align}
by the Cauchy-Schwarz inequality, the second line of (\ref{eq:boundsh4}),
part \ref{enu:case2lem-b}, and Lemma \ref{lem:Order-expect}.

For the case $(7/2,3/2,2,2)$, note that $\E\widetilde{U}_{i-1}^{2}=O(\E\widetilde{U}_{0}^{2})=O((1-\rho_{n})^{-1})$
by Lemma \ref{lem:Order-expect}. Hence, we have
\begin{align}
(1-\rho_{n})^{7/2}h_{n}^{3/2}n^{-2}\E\left[\sum_{i=2}^{n}\rho_{n}^{2i-2}\widetilde{U}_{i-1}^{2}\right] & =(1-\rho_{n})^{7/2}h_{n}^{3/2}n^{-2}\sum_{i=2}^{n}\rho_{n}^{2i-2}O((1-\rho_{n})^{-1}\nonumber \\
 & =O((1-\rho_{n})^{3}n^{-1/2})=o(1).\label{eq:tildeU2-bound}
\end{align}
The result follows by Markov's inequality.

For the case $(7/2,1,1,3)$, we use the Cauchy-Schwarz inequality:
\begin{align}
\  & (1-\rho_{n})^{7/2}n^{-2}h_{n}\left|\sum_{i=2}^{n}\rho_{n}^{i-1}\widetilde{U}_{i-1}^{3}\right|\nonumber \\
\leq\  & \left((1-\rho_{n})^{7/2}h_{n}^{2}n^{-2}\sum_{i=2}^{n}\rho_{n}^{2i-2}\widetilde{U}_{i-1}^{2}\right)^{1/2}\left((1-\rho_{n})^{7/2}n^{-2}\sum_{i=2}^{n}\widetilde{U}_{i-1}^{4}\right)^{1/2}\nonumber \\
=\  & O_{p}((1-\rho_{n})^{3}n^{-1/2}h_{n}^{1/2})\times o_{p}((1-\rho_{n})^{3/2})=o_{p}(1),
\end{align}
by the second line of (\ref{eq:boundsh4}) and (\ref{eq:tildeU2-bound}).
For the case $(3,1,1,2)$, by the Cauchy-Schwarz inequality, we have
\begin{align}
 & (1-\rho_{n})^{3}h_{n}\left|n^{-2}\sum_{i=2}^{n}\rho_{n}^{i-1}\widetilde{U}_{i-1}^{2}\right|\nonumber \\
\leq\  & (1-\rho_{n})h_{n}^{1/2}\left(n^{-2}(1-\rho_{n})^{3}\sum_{i=2}^{n}h_{n}\rho_{n}^{2i-2}\widetilde{U}_{i-1}^{2}\right)^{1/2}\left(n^{-2}(1-\rho_{n})\sum_{i=2}^{n}\widetilde{U}_{i-1}^{2}\right)^{1/2}\nonumber \\
=\  & (1-\rho_{n})h_{n}^{1/2}\times o_{p}(1)\times O_{p}(n^{-1/2})=o_{p}(1),
\end{align}
by the seventh line of (\ref{eq:boundsh4}), part \ref{enu:case2lem-b},
and Lemma \ref{lem:Order-expect}.

Finally, for part \ref{enu:case2lem-g}, note that 
\begin{align}
X_{2}(\rho_{n})'\Delta^{2}X_{2}(\rho_{n})= & \sum_{i=1}^{n}\frac{(1+2h_{n}^{1/2}\rho_{n}^{i-1}+h_{n}\rho_{n}^{2i-2})}{(1-p_{ii}^{*})^{2}}\left(U_{i}-\bar{U}_{i-1}\right)^{2}.
\end{align}
As $p_{ii}^{*}=O_{p}(n^{-1/2})$, we only need to establish the following:
\begin{align}
n^{-1}\sum_{i=1}^{n}\left(1+2h_{n}^{1/2}\rho_{n}^{i-1}+h_{n}\rho_{n}^{2i-2}\right)U_{i}^{2} & =O_{p}(1)\text{ and}\nonumber \\
n^{-1}\sum_{i=1}^{n}\left(1+2h_{n}^{1/2}\rho_{n}^{i-1}+h_{n}\rho_{n}^{2i-2}\right)\bar{U}_{i-1}^{2} & =O_{p}(1).\label{eq:lem8_part_f-g-6}
\end{align}
Because $\E[U_{i}^{2}]=O(1)$, the first line holds by
\begin{equation}
n^{-1}\sum_{i=1}^{n}\left(1+2h_{n}^{1/2}\rho_{n}^{i-1}+h_{n}\rho_{n}^{2i-2}\right)\leq1+2h_{n}^{1/2}n^{-1}(1-\rho_{n})^{-1}+h_{n}n^{-1}(1-\rho_{n}^{2})^{-1}=O(1).\label{eq:sum-hnrhon}
\end{equation}
By (\ref{eq:barUi-square}), it remains to show that 
\begin{align}
n^{-2}(1-\rho_{n})^{2}\sum_{i=1}^{n}\left(1+2h_{n}^{1/2}\rho_{n}^{i-1}+h_{n}\rho_{n}^{2i-2}\right)^{2} & =O(1),\nonumber \\
n^{-2}(1-\rho_{n})^{3}\sum_{i=2}^{n}\left(1+2h_{n}^{1/2}\rho_{n}^{i-1}+h_{n}\rho_{n}^{2i-2}\right)\widetilde{U}_{i-1}^{2} & =O_{p}(1),\text{ and}\nonumber \\
n^{-2}(1-\rho_{n})^{5/2}\sum_{i=2}^{n}\left(1+2h_{n}^{1/2}\rho_{n}^{i-1}+h_{n}\rho_{n}^{2i-2}\right)\left(1+h_{n}^{1/2}(1-\rho_{n})^{1/2}\rho_{n}^{i-1}\right)\widetilde{U}_{i-1} & =O_{p}(1).
\end{align}
The first equality holds by (\ref{eq:sum-hnrhon}) and the fact that
\begin{equation}
n^{-2}(1-\rho_{n})^{2}h_{n}^{3/2}\sum_{i=1}^{n}\rho_{n}^{3i-3}=o(1)\text{ and }n^{-2}(1-\rho_{n})^{2}h_{n}^{2}\sum_{i=1}^{n}\rho_{n}^{4i-4}=o(1).
\end{equation}
By Markov's inequality, the second equality holds by (\ref{eq:sum-hnrhon})
and the fact $\E\widetilde{U}_{i-1}^{2}=O((1-\rho_{n})^{-1})$. Finally,
the last equality holds by part \ref{enu:case2lem-a} and (\ref{eq:sum_utilde})
for cases $(1,2)$ and $(2,3)$.
\end{proof}

\section{Additional Simulation Results}\label{sec:Additional-Simulation-Results}\label{SM-sec:Additional-Simulation-Results}

We provide additional simulation results in this section. Table \ref{tab:CP-mik07}
reports the CPs of the \textcite[Mik07 hereafter]{mikusheva2007uniform}
CI.\footnote{The Mik07 CI differs from the \textcite{andrews2014conditional} CI
in that there is no initial-condition component in the asymptotic
distribution of the test statistic, and a homoskedastic variance estimator
is used.} The CPs of the Mik07 CI are close to the nominal level when the $U_{i}$
sequences are i.i.d. and $Y_{0}^{*}$ is either fixed or stationary.
When $Y_{0}^{*}$ is drawn from scaled $n$ or explosive distributions,
or when the $U_{i}$ sequences are non-i.i.d., the CPs of the Mik07
CI fall below 95\%.

\begin{table}[H]
\centering
\caption{Coverage probabilities ($\times100$) of the nominal 95\% Mik07 CI}\label{tab:CP-mik07}

\begin{tabular}{lrrrrrlrrrrr}
\toprule
\multicolumn{1}{r}{\small\textbf{Initial Conditions:}} & \multicolumn{5}{c}{\small\textbf{Fixed}} & & \multicolumn{5}{c}{\small\textbf{Stationary}} \\
\multicolumn{1}{r}{\small\textbf{$\boldsymbol{\rho}$:}} & .00 & .50 & .70 & .90 & .99 & & .00 & .50 & .70 & .90 & .99 \\
\midrule
\small\textbf{i.i.d.} & 94.9 & 94.7 & 94.9 & 94.8 & 94.7 & & 94.8 & 94.7 & 94.9 & 94.7 & 94.0 \\
\small\textbf{GARCH1} & 93.8 & 93.8 & 94.3 & 94.4 & 94.5 & & 93.8 & 93.8 & 94.2 & 94.4 & 93.6 \\
\small\textbf{GARCH2} & 89.6 & 90.6 & 90.7 & 92.3 & 93.4 & & 89.6 & 90.7 & 90.6 & 92.4 & 92.8 \\
\small\textbf{GARCH3} & 85.3 & 86.0 & 87.3 & 89.4 & 92.3 & & 85.3 & 86.0 & 87.1 & 89.6 & 91.8 \\
\small\textbf{ARCH4} & 80.3 & 81.6 & 82.5 & 86.3 & 91.1 & & 80.4 & 81.6 & 82.7 & 86.5 & 90.8 \\
\midrule
\small\textbf{Initial Conditions:} & \multicolumn{5}{c}{\small\textbf{Scaled $\boldsymbol{n}$}} & & \multicolumn{5}{c}{\small\textbf{Explosive}} \\
\midrule
\small\textbf{i.i.d.} & 94.8 & 94.6 & 94.7 & 93.5 & 80.4 & & 94.6 & 94.4 & 94.2 & 92.3 & 73.5 \\
\small\textbf{GARCH1} & 94.3 & 93.8 & 93.7 & 92.7 & 79.8 & & 94.3 & 93.8 & 93.6 & 91.5 & 73.0 \\
\small\textbf{GARCH2} & 91.1 & 91.9 & 91.7 & 92.2 & 80.3 & & 92.4 & 92.8 & 92.5 & 91.4 & 73.0 \\
\small\textbf{GARCH3} & 87.5 & 88.5 & 89.2 & 90.8 & 80.2 & & 90.2 & 90.7 & 91.2 & 90.9 & 73.4 \\
\small\textbf{ARCH4} & 84.0 & 84.8 & 86.5 & 89.2 & 80.4 & & 88.5 & 89.0 & 89.7 & 90.3 & 73.6 \\
\bottomrule
\end{tabular}
\end{table}

Table \ref{tab:length-ratio-mik07} reports the ratios of the ALs
of ICR to Mik07 CIs when the $U_{i}$ sequences are i.i.d. and $Y_{0}^{*}$
is fixed or stationary. The ratios are between 1 and 1.12, suggesting
that the ALs of the ICR CI are similar to those of the Mik07 CI when
the true data generating process is such that the Mik07 inference
method is valid. Tables \ref{tab:avg_length_ag14} and \ref{tab:avg_length_mik07}
report the ALs of \textcite[AG14 hereafter]{andrews2014conditional}
and the Mik07 CI, respectively. The results are consistent with the
conclusions drawn from Tables \ref{tab:length-ratio} and \ref{tab:length-ratio-mik07}.

\begin{table}[h]
\centering
\caption{Ratios of the average lengths of the nominal 95\% ICR to Mik07 CIs}\label{tab:length-ratio-mik07}

\begin{tabular}{lrrrrrlrrrrr}
\toprule
\multicolumn{1}{r}{\small\textbf{Initial Conditions:}} & \multicolumn{5}{c}{\small\textbf{Fixed}} & & \multicolumn{5}{c}{\small\textbf{Stationary}} \\
\multicolumn{1}{r}{\small\textbf{$\boldsymbol{\rho}$:}} & .00 & .50 & .70 & .90 & .99 & & .00 & .50 & .70 & .90 & .99 \\
\midrule
{\small\textbf{i.i.d.}} & 1.00 & 1.01 & 1.02 & 1.03 & 1.09 & & 1.01 & 1.01 & 1.03 & 1.05 & 1.12 \\
\bottomrule
\end{tabular}
\end{table}

\begin{table}[h]
\centering
\caption{Average lengths of the nominal 95\% AG14 CI }\label{tab:avg_length_ag14}

\begin{tabular}{lrrrrrlrrrrr}
\toprule
\multicolumn{1}{r}{\small\textbf{Initial Conditions:}} & \multicolumn{5}{c}{\small\textbf{Fixed}} & & \multicolumn{5}{c}{\small\textbf{Stationary}} \\
\multicolumn{1}{r}{\small\textbf{$\boldsymbol{\rho}$:}} & .00 & .50 & .70 & .90 & .99 & & .00 & .50 & .70 & .90 & .99 \\
\midrule
\small\textbf{i.i.d.} & .32 & .28 & .24 & .16 & .07 & & .32 & .28 & .24 & .16 & .07 \\
\small\textbf{GARCH1} & .32 & .28 & .24 & .16 & .07 & & .33 & .29 & .25 & .17 & .07 \\
\small\textbf{GARCH2} & .38 & .34 & .28 & .18 & .08 & & .38 & .33 & .28 & .18 & .07 \\
\small\textbf{GARCH3} & .43 & .38 & .32 & .19 & .08 & & .43 & .38 & .32 & .19 & .07 \\
\small\textbf{ARCH4} & .49 & .43 & .36 & .21 & .09 & & .48 & .43 & .36 & .20 & .08 \\
\midrule
\small\textbf{Initial Conditions:} & \multicolumn{5}{c}{\small\textbf{Scaled $\boldsymbol{n}$}} & & \multicolumn{5}{c}{\small\textbf{Explosive}} \\
\midrule
\small\textbf{i.i.d.} & .22 & .19 & .16 & .09 & .02 & & .10 & .09 & .08 & .04 & .01 \\
\small\textbf{GARCH1} & .05 & .06 & .07 & .05 & .02 & & .00 & .01 & .02 & .02 & .01 \\
\small\textbf{GARCH2} & .27 & .23 & .19 & .10 & .02 & & .13 & .12 & .09 & .04 & .01 \\
\small\textbf{GARCH3} & .31 & .27 & .21 & .11 & .02 & & .15 & .14 & .11 & .05 & .01 \\
\small\textbf{ARCH4} & .35 & .30 & .24 & .12 & .03 & & .17 & .15 & .12 & .05 & .01 \\
\bottomrule
\end{tabular}
\end{table}

\begin{table}[h]
\centering
\caption{Average lengths of the nominal 95\% Mik07 CI}\label{tab:avg_length_mik07}

\begin{tabular}{lrrrrrlrrrrr}
\toprule
\multicolumn{1}{r}{\small\textbf{Initial Conditions:}} & \multicolumn{5}{c}{\small\textbf{Fixed}} & & \multicolumn{5}{c}{\small\textbf{Stationary}} \\
\multicolumn{1}{r}{\small\textbf{$\boldsymbol{\rho}$:}} & .00 & .50 & .70 & .90 & .99 & & .00 & .50 & .70 & .90 & .99 \\
\midrule
\small\textbf{i.i.d.} & .32 & .28 & .24 & .17 & .07 & & .32 & .28 & .24 & .16 & .07 \\
\small\textbf{GARCH1} & .31 & .27 & .23 & .16 & .07 & & .32 & .28 & .24 & .16 & .07 \\
\small\textbf{GARCH2} & .32 & .28 & .24 & .17 & .07 & & .32 & .28 & .24 & .16 & .07 \\
\small\textbf{GARCH3} & .32 & .28 & .24 & .17 & .08 & & .32 & .28 & .24 & .16 & .07 \\
\small\textbf{ARCH4} & .32 & .28 & .24 & .17 & .08 & & .32 & .28 & .24 & .16 & .07 \\
\midrule
\small\textbf{Initial Conditions:} & \multicolumn{5}{c}{\small\textbf{Scaled $\boldsymbol{n}$}} & & \multicolumn{5}{c}{\small\textbf{Explosive}} \\
\midrule
\small\textbf{i.i.d.} & .24 & .20 & .16 & .09 & .02 & & .14 & .11 & .08 & .04 & .01 \\
\small\textbf{GARCH1} & .10 & .09 & .08 & .05 & .02 & & .02 & .02 & .02 & .02 & .01 \\
\small\textbf{GARCH2} & .25 & .21 & .17 & .09 & .02 & & .15 & .12 & .09 & .04 & .01 \\
\small\textbf{GARCH3} & .25 & .21 & .17 & .10 & .02 & & .15 & .12 & .09 & .05 & .01 \\
\small\textbf{ARCH4} & .25 & .21 & .17 & .10 & .02 & & .15 & .12 & .10 & .05 & .01 \\
\bottomrule
\end{tabular}
\end{table}

\clearpage{}

\newpage{}

\printbibliography[title={\centering\makebox[\textwidth]{References for the Supplemental Material}}]

\end{refsection}
\end{document}